\newtheorem{theorem}{Theorem}
\numberwithin{theorem}{section}
\newtheorem{lemma}[theorem]{Lemma}
\newtheorem{corollary}[theorem]{Corollary}
\newtheorem{claim}[theorem]{Claim}
\newcommand{\poly}{{\mathrm{poly}}}
\newcommand{\eps}{\varepsilon}
\newcommand{\mathprob}[1]{\mbox{\textmd{\textsc{#1}}}}
\newcommand{\TQBF}{\mathprob{TQBF}}
\newcommand{\class}[1]{\mathbf{#1}}
\newcommand{\NP}{\class{NP}}
\newcommand{\coNP}{\class{coNP}}
\newcommand{\shP}{\class{\#P}}
\renewcommand{\P}{\class{P}}
\newcommand{\PSPACE}{\class{PSPACE}}
\newcommand{\commentt}[1]{}
\newif\ifdraft
\newcommand{\mg}[1]{\textcolor[rgb]{.90,0.00,0.00}{[MG: #1]}}
\newcommand{\mgs}[1]{\marginpar{\tiny \sf \textcolor[rgb]{.90,0.00,0.00}{[MG: #1]}}}
\newcommand{\lgl}[1]{\textcolor[rgb]{.00,0.80,0.00}{[LG: #1]}}
\newcommand{\lgs}[1]{\marginpar{\tiny \sf \textcolor[rgb]{.00,0.80,0.00}{[LG: #1]}}}
\newcommand{\mb}[1]{\textcolor[rgb]{.90,0.50,0.50}{[MB: #1]}}
\newcommand{\mbs}[1]{\marginpar{\tiny \sf \textcolor[rgb]{.90,0.50,0.50}{[MB: #1]}}}
\newcommand{\mg}[1]{}
\newcommand{\mgs}[1]{}
\newcommand{\lgl}[1]{}
\newcommand{\lgs}[1]{}
\newcommand{\mb}[1]{}
\newcommand{\mbs}[1]{}
\newtheorem{definition}{Definition}[section]
\title{The Complexity of Verifying Boolean Programs
\\ as Differentially Private} 
\date{} 
\author{Mark Bun, Marco Gaboardi, and Ludmila Glinskih \\ Boston University, MA, USA}
\begin{document}
\maketitle
\thispagestyle{plain}
\pagestyle{plain}

\begin{abstract}
We study the complexity of the problem of verifying differential privacy for while-like programs working over boolean values and making probabilistic choices. Programs in this class can be interpreted into finite-state discrete-time Markov Chains (DTMC). We show that the problem of deciding whether a program is differentially private for specific values of the privacy parameters is $\PSPACE$-complete. To show that this problem is in $\PSPACE$, we adapt classical results about computing hitting probabilities for DTMC. To show $\PSPACE$-hardness we use a reduction from the problem of checking whether a program almost surely terminates or not. We also show that the problem of approximating the privacy parameters that a program provides is $\PSPACE$-hard. Moreover, we investigate the complexity of similar problems also for several relaxations of differential privacy: R\'enyi differential privacy, concentrated differential privacy, and truncated concentrated differential privacy. For these notions, we consider gap-versions of the problem of deciding whether a program is private or not and we show that all of them are PSPACE-complete. 
    
\end{abstract}

\section{Introduction}
Differential privacy~\cite{DMNS06} provides a formal framework for guaranteeing that programs respect the privacy of the individuals contributing their data as input. The idea at the heart of differential privacy is to use carefully calibrated random noise to guarantee that an individual's data has a limited influence on the result of a data analysis. The literature on differential privacy shows how this can be done for numerous tasks across statistics, optimization, machine learning, and more. However, showing that a program satisfies differential privacy can be difficult, subtle, and error prone~\cite{Lyu-2017,KiferMRTZ20}. For this reason, several techniques have been proposed in order to verify or find violations in differential privacy programs, e.g.~\cite{ReedP10,barthe2012probabilistic,Gaboardi2013,ZhangK17,DingWWZK18,Bichsel:2018,BCJSV20}.

Despite tremendous progress in the development of methods and tools to support the deployment of differential privacy, there are fundamental open questions about the complexity of the problems these tools address.
In this paper, we focus on one of these problems: 

\begin{quote}
{\bf Approximate-DP}: Given a Boolean 
program and parameters $e^\eps, \delta$, decide whether a program is $(\eps,\delta)$-differentially private or not.
\end{quote}

 Barthe et al.~\cite{BCJSV20} showed that a version of this problem, for probabilistic while-like programs using both finite and infinite data, is undecidable. However, it becomes decidable when a restriction is imposed on the way infinite data are used in while loops. Gaboardi et al.~\cite{GNP20} showed that, for probabilistic programs over finite data domains and without loops, when the parameters are rational, this problem is $\coNP^{\shP}$-complete for $(\eps, 0)$-differential privacy and even harder for $(\eps, \delta)$-differential privacy. In this work we consider the case where programs can contain loops and work over finite data, and the parameters are given as dyadic numbers (rational numbers whose denominator is a power of two). We show that adding loops and maintaining the restriction on finite data preserves decidability but significantly increases the complexity of the problem, even for just $(\eps, 0)$-differential privacy.

\subsection*{Our contributions}
We consider programs from a simple probabilistic while-like programming language over boolean data, where  randomness is represented as probabilistic choice. We call this language BPWhile. This language can be seen as a low-level target language for differential privacy implementations which are intrinsically over finite data types~\cite{Mironov12,GazeauMP16,BalcerV18,Ilvento20}. 

As a first step, we show $\PSPACE$-hardness for {\bf Approximate-DP} over this language, with respect to the size of the program. We show this result by using a reduction from the problem of deciding \emph{almost sure termination} for programs in BPWhile. Programs in this language can be seen as discrete-time recursive Markov chains for which almost sure termination has been shown $\PSPACE$-complete~\cite{EtessamiY09}.
Intuitively, the hardness of verifying whether a program is differentially private comes from the fact that we need to compare distributions on outputs for neighboring pairs of inputs. Understanding such distributions essentially gives us a way to check whether a program terminates with probability 1 or not. We use this idea in all $\PSPACE$-hardness proofs in this work.

We then present an algorithm for {\bf Approximate-DP} which uses polynomial space, completing our proof of $\PSPACE$-completeness for {\bf Approximate-DP}. Our algorithm is based on  classical results showing that computing hitting probabilities in discrete-time Markov chains can be done in a space efficient way.
Our proof of $\PSPACE$-completeness even holds in the case where the privacy parameter $\delta$ is zero---this setting is usually called \emph{pure differential privacy}. 

Similarly to~\cite{GNP20}, we also consider a related problem concerning the approximation of privacy  parameters. In particular, we study the following gap-promise variant of the problem.
\begin{quote}
{\bf Distinguish $(\eps,\delta)$-DP}: Given a program that is promised to either be $(0,0)$-differentially private or not $(\eps,\delta)$-differentially private, decide which is the case. Here, $\eps, \delta$ may be fixed constants independent of the input.
\end{quote}
We show that this problem is also $\PSPACE$-hard via another reduction from the problem of deciding almost sure termination. At first glance, the statement seems specific to $(0,0)$-differentially privacy, but it implies more generally that it 
is hard to distinguish between $(\eps, \delta)$-differentially private programs and programs which fail to be $(\eps + \alpha, \delta + \beta)$-differentially private for positive constants $\alpha, \beta$.
In particular, it is hard even to approximate the best $\eps$ and $\delta$ parameters for which a program guarantees differential privacy. 

Further, we consider several relaxations of the  definition of differential privacy which have recently appeared in the literature. Specifically, we consider deciding R\'enyi-differential privacy (RDP)~\cite{Mironov17}, concentrated differential privacy (CDP)~\cite{BunSteinke16}, and truncated concentrated differential privacy (tCDP)~\cite{BDRS18}. For each of these privacy notions we define a gap version of the problem of deciding whether a program is private or not and we show that all of them are $\PSPACE$-complete.

To show membership in $\PSPACE$ we use similar approach as in the $\PSPACE$-algorithm for {\bf Decide $(\eps, \delta)$}-DP. The main difference is that definitions of RDP, CDP, and tCDP involve computations of R\'enyi divergences and, as we are working with probabilities that can have exponentially long descriptions, we carefully apply known uniform families of polylogarithmic depth circuits to perform these calculations. We prove our lower bounds using reductions from {\bf Distinguish $(\eps,\delta)$-DP}.

To summarize, our contributions are: 

\begin{enumerate}
    \item We give a proof of $\PSPACE$-hardness for the problem of deciding $(\eps,\delta)$-differential privacy (by showing a polynomial time reduction from the language of almost surely terminating programs, Section \ref{sec:dp-ver-pspace-hard}).
    \item We show a $\PSPACE$ algorithm for deciding $(\eps,\delta)$-differential privacy (Section \ref{pspace-approx-dp-algo}).
    \item We show $\PSPACE$-hardness for the problem of approximating the privacy parameters (Section \ref{sec:approx-dp-param-pspace-hard}).
    \item We show $\PSPACE$ algorithms for deciding R\'enyi-differential privacy (Section \ref{sec:pspace-rdp-algo}), concentrated differential privacy (Section \ref{sec:pspace-cdp-algo}), and truncated differential privacy (Section \ref{sec:pspace-tcdp-algo}).
    \item We also give a proof of $\PSPACE$-hardness for deciding R\'enyi-differential privacy (Theorem~\ref{th:RDP-pspace-hard}), concentrated differential privacy (Theorem~\ref{th:CDP-pspace-hard}), and truncated concentrated differential privacy (Theorem~\ref{th:TCDP-pspace-hard}) (via reductions from the problem of approximating privacy parameters, Section \ref{sec:approx-dp-param-pspace-hard}).
\end{enumerate}

\section{Related work}
\paragraph*{Verification tools for differential privacy.}
Several tools have been developed with the goal of supporting programmers in their effort to write code that is guaranteed to be differentially private, including type systems~\cite{ReedP10,Gaboardi2013,BartheGAHRS15,ZhangK17,NearDASGWSZSSS19}, program logics~\cite{barthe2012probabilistic,BartheGAHKS14,Barthe:2016}, and other program analyzers~\cite{tschantz2011formal,fredrikson2014satisfiability,albarghouthi2017synthesizing,LiuWZ18,chistikov2018bisimilarity}. Other tools help programmers find violations in differentially private implementations~\cite{DingWWZK18,Bichsel:2018,ZhangRHP020}. Finally, several recent tools address both  problems at the same time~\cite{WangDKZ20,BCJSV20,FarinaCG20}. Most of  these tools are capable of analyzing complex examples corresponding to the state of the art in differential privacy algorithm design~\cite{DingWZK19,KaplanMS20}.

\paragraph*{Implementations on finite computers.}
Several works have studied how to implement differentially private algorithms using finite arithmetics. 
Mironov~\cite{Mironov12} showed that na\"ive implementations of the Laplace distribution using floating point numbers are actually not private. Gazeau et al.~\cite{GazeauMP16} showed that similar problems as the one identified by Mironov are not only due to the non-uniformity of floating points but they are actually intrinsically due to the use of finite precision arithmetic. Ilvento~\cite{Ilvento20} showed that similar considerations can be applied also to algorithms that are in principle discrete, such as the exponential mechanism. Balcer and Vadhan~\cite{BalcerV18} showed how to implement several important differentially private algorithms in an efficient way on finite precision machines.  
\paragraph*{Related results in complexity.}
Murtagh and Vadhan~\cite{MurtaghV16} studied the complexity of finding the best privacy parameters for the composition of multiple differentially private mechanisms and showed it to be $\shP$-complete. This work, in part, led to the development of several variants of differential privacy, most of which we consider here, with better composition properties. 
Barthe et al.~\cite{BCJSV20} showed that deciding differential privacy for probabilistic while-like programs using both finite and infinite data is undecidable, but it becomes decidable when a restriction is imposed on the way infinite data are used in while loops. However, they do not study the computational complexity of this problem. 
Gaboardi et al.~\cite{GNP20} showed $\coNP^{\shP}$-completeness for the problem of deciding $(\eps, 0)$-differential privacy  for probabilistic programs over finite data domains and without loops. They also studied this problem and approximate versions of it for $(\eps, \delta)$-differential privacy. 
Chadha et al.~\cite{ChadhaSV21} recently showed that deciding differential privacy for a class of automata that can be used to describe classical examples from the differential privacy literature can be done in linear time in the size of the automata. This class of automata includes computations over unbounded input data, such as real numbers.
Chistikov et al.~\cite{chistikov2018bisimilarity,chistikov2019asymmetric} studied several complexity problems concerning differential privacy in the setting of labeled Markov chains. They showed that the threshold problem for a computable bisimilarity distance giving a sound technique to reason about differential privacy is in $\NP$~\cite{chistikov2018bisimilarity}. Further, they proved that another distance, based on total variation,  which can be used to more precisely reason about differential privacy is undecidable in general, and the problem of approximating it is $\shP$-hard, and in $\PSPACE$~\cite{chistikov2019asymmetric}.

There are also other related results from the program verification and privacy literatures. 
Courcoubetis and Yannakakis~\cite{CourcoubetisYannakakis95} studied the complexity of several verification problems for probabilistic programs. 
Etessami and Yannakakis~\cite{EtessamiY09} studied the complexity of several problems for recursive Markov chains. Notably, they showed that deciding almost sure termination for this computational model is $\PSPACE$-complete.
Kaminski et al.~\cite{KaminskiKM19} studied the arithmetic complexity of almost sure termination for general probabilistic programs with unbounded data types. 
Chadha et al.~\cite{ChadhaKV14} showed $\PSPACE$-completeness for the problem of bounding quantitative information flow for boolean programs with loops and probabilistic choice. A bound on pure differential privacy entails a bound on quantitative information flow, but not the other way around, and hence their result does not directly apply in our context. 
Gilbert and McMillan~\cite{GilbertM18} studied the query complexity of verifying differential privacy programs modeled as black boxes.

\section{Preliminaries}
\subsection{Boolean Programs with Loops and Random Assignments}
In this paper we consider a simple while-like language working over booleans, extended with probabilistic choice. This language, which we call BPWhile, can be seen as a probabilistic extension of the language for input/output bounded boolean programs studied in~\cite{GY13}.
The syntax of the language is defined by the following grammar. 
\begin{eqnarray*}
    b & ::= & {\tt true} \mid {\tt false} \mid {\tt random} \mid x \mid b \land b \mid b \lor b \mid \: !b \\[-1mm]
    c &::=& {\tt skip} \mid x := b \mid c; c \mid {\tt if }\:  b\:  { \tt then}\:  c \: { \tt else }\: c \mid {\tt while}\: b\: {\tt then}\: c \\[-1mm] 
    C &::=& {\tt input}(x,\ldots, x); c; {\tt return}(x,\ldots, x) 
\end{eqnarray*}

All of the constructs are standard. The expression $\tt random$ represents a random fair coin, which with probability 1/2 evaluates to true and with probability 1/2 evaluates to false. 
The semantics for BPWhile programs is also standard and we omit it here. However, notice that program may fail to terminate, and we also have to consider this when analyzing probabilities. To mark non-termination we will use the symbol $\bot$. We also remark that a given BPWhile program operates only on boolean inputs of a single fixed length $n$, specified (implicitly) in the program description.
 
Our language is very similar to the one studied in \cite{GNP20}. The main difference is that we have an additional loop construction \textbf{while $b$ then $c$}.
Without loops, programs in this language can be interpreted into boolean circuits of roughly the same size. However, this cannot be done in presence of loops, as the straightforward approach of unfolding loops gives a circuit of size exponential in the program length. To avoid analyzing boolean circuits of exponential size, we will instead analyze programs as discrete-time Markov chains, in a manner similar to~\cite{BCJSV20}. This is possible because BPWhile programs use a bounded amount of memory (that is at most linear in the size of the input program), corresponding to an exponential, in the size of the input, number of states in the resulting Markov chain. The precise translation will be given in Theorem~\ref{thm:pure-dp-alg}.

Similarly to  \cite{GNP20} we measure the complexity of the problems we are interested in as functions of the size of the input program, rather than, e.g., the number of bits the input program itself takes as input.

\paragraph*{Language expressivity.}
We use booleans as our basic data type to keep our proofs simple. However,  all of the results we show also hold for programs where values are from a fixed finite domain. 
In fact, the language we use here can be thought as a low-level language which could be the target of implementations of differential privacy primitives. As shown in several works, one has to be very careful when implementing differentially private primitives~\cite{Mironov12,GazeauMP16,BalcerV18,Ilvento20}. One way to guarantee correctness for this process could be to give a translation into BPWhile and then decide whether the given program is differentially private or not. We illustrate how this process could work with an example. 

Using $1+n+m$ boolean values we can represent arbitrary positive and negative fixed-point numbers with range $(-2^{n}+1,2^{n}-1)$ and precision $2^{-m}$, and perform standard arithmetic operations and comparison over them. We can then think about working with blocks of variables of size $1+n+m$, which we denote using vector notation, e.g. $\vec{x},\vec{y},\ldots$. Notice that using this representation we can also easily encode a uniform sampling operation for elements in a range (v,w], which we denote ${\tt uniform}(v,w]$. 
We can, for example, implement the bounded Geometric Mechanism from~\cite{GhoshRS12}, using this encoding and the implementation in finite precision arithmetic provided in~\cite{BalcerV18}. 
Given a positive integer $n$ and a private positive integer value $c\leq n$, this discrete mechanism  selects an integer element $z$ from the range $[0,n]$ with probability proportional to $e^{\frac{-\eps|z-c|}{2}}$. Essentially, the mechanism implements inverse transform sampling based on the inverse CDF of the output distribution. Given $c$, $n$ and $\eps$, this mechanism can be described in BPWhile as in Figure~\ref{fig:geom-mech}.
\begin{figure}
    \centering
\begin{tabular}{ll}
0.& ${\tt input}(\vec{c},\eps);$\\
1.& $\vec{k}:=\lceil \log(2/\eps)\rceil;$\\
2.& $\vec{d}:=(2^{\vec{k}+1}+1)(2^{\vec{k}}+1)^{n-1};$\\
3.& $\vec{u}:={\tt uniform}(0,\vec{d}];$\\
4.& $\vec{z}:=0;$\\
5.& $\vec{r}:=n;$\\
6.& ${\tt while}\, \vec{z}< \vec{n} \land \vec{r}=n \, {\tt then}$\\
7.& $\quad  {\tt if}\ \vec{z} < \vec{c}\ {\tt then}$\\
8.& $\quad \quad {\tt if}\ \vec{u} \leq 2^{\vec{k}(\vec{c}-\vec{z})}(2^{\vec{k}}+1)^{n-(\vec{c}-\vec{z})} $\\
9.& $\quad \quad {\tt then}\ \vec{r}:=\vec{z}$\\
10.& $\quad \quad {\tt else}\ {\tt skip}$\\
11.& $\quad  {\tt else}$\\
12.& $\quad \quad {\tt if}\ \vec{u} \leq d-2^{\vec{k}(\vec{z}-\vec{c}+1)}(2^{\vec{k}}+1)^{n-1-(\vec{z}-\vec{c})} $\\
13.& $\quad \quad {\tt then}\ \vec{r}:=\vec{z}$\\
14. & $\quad \quad{\tt else}\ {\tt skip}$\\
15.& $\quad  \vec{z}=\vec{z}+1;$\\
16.& ${\tt return}(\vec{z});$\\
\end{tabular}
    \caption{Example: Bounded Geometric Mechanism in finite precision arithmetic}
    \label{fig:geom-mech}
\end{figure}

All the operations in this piece of code are assumed to work on blocks of variables and of booleans that are long enough to avoid overflow and approximations.  
This algorithm samples from a uniform distribution (line 3) for a value of $\vec{d}$ large enough and uses a while loop to go through the integers in the range $[0,n]$ to find the right element to return. A faster implementation could be based on binary search. The nested conditionals (lines 7-14) implement the checks required for the inverse transform sampling to identify the right element to return. 

We gave this example to show that the language is expressive enough to implement a real-world mechanism. However, we also chose this example because identifying the privacy guarantee provided by this algorithm is non-trivial. Balcer and Vadhan~\cite{BalcerV18} showed this algorithm to be $(\tilde{\eps},0)$-differentially private when $\tilde{\eps}=\ln(1+2^{-\lceil \log(2/\eps)\rceil})$ and $\tilde{\eps}\in (2/9\eps,\eps/2]$, where the complexity in the expression for $\tilde{\eps}$ comes from the implementation. This example shows why several works have designed methods to decide differential privacy, and why it is important to understand the complexity of this problem.

\subsection{Almost Sure Termination and Configuration Graph}
Our approach will rely on the hardness of the problem of deciding {\em almost sure termination} for probabilistic boolean programs (Lemma \ref{losslessness_pspace_hard}). Almost sure termination is a natural probabilistic extension of the concept of termination.

\begin{definition}
A program $C$ \emph{almost surely terminates} if on all inputs it terminates with probability 1.
\end{definition}
Deciding almost sure termination for general probabilistic programs on unbounded data types is known to be $\Pi_2^0$-complete~\cite{KaminskiKM19} while for programs representing recursive Markov chains it is known to be $\PSPACE$-complete~\cite{EtessamiY09}.

In the following, it will be convenient to analyze BPWhile programs using their configuration graph. To do this, we assume that the code of a program comes with lines of code associated to each command, in a way similar to the code in Figure~\ref{fig:geom-mech}.

\begin{definition}
Consider a BPWhile program $C$ with $l$ lines of code and $v$ Boolean variables. A \emph{state} $s$ of $C$  is a pair $(m,i)$ where  $m \in \{0,1\}^v$ represents a potential value of the memory, i.e. values for all the variables, and $i\in [l]$  is a line of code. The \emph{size} of the program is the number of symbols in the description of the program.
\end{definition}

Note that as the description of each variable, input value and line in the program requires at least one symbol, we get that $l$, $v$, and the size of the input of the program are always at most the size of the program. Throughout this paper we measure complexity of the verifying procedures based on the size of the program.
 
\begin{definition}
The configuration graph $G=(V,E)$ of a BPWhile program $C$ on input $x$ has a vertex for every possible state of the program and a directed edge $((m,i),(m',i'))\in E$ if the  probability of getting the memory $m'$ starting from the memory $m$ and executing the command at line $i'$  is strictly greater than $0$.
\end{definition}
We will also sometimes use the term \emph{state graph} to refer to the configuration graph. 
The starting state of a program is the state at the beginning of $C$'s execution on $x$, where the input variables are set to $x$ and the index of the execution line is $0$. A final state is any state following the execution of the last line of code (the final return command). We denote the set of final states in a configuration graph by $V_f \subseteq V$.

\subsection{Markov Chains}
To analyze the probability that a boolean program $C$ on input $x$ outputs a specific value, we need to associate probabilities to each transition in the configuration graph. By doing this, we turn a configuration graph into a discrete time Markov chain.
\begin{definition}[\cite{HartSharir84,LehmannShelah82,Vardi85}]
A discrete-time Markov chain $M=(V,E,\{p_{uw} \mid (u, w) \in E\},\{p_0(v) | v\in V\})$ consists of a set of states $V$, a set $E \subseteq V \times V$ of transitions between states, a list $p_{uw}$ of positive probabilities for all transitions $(u,w)$ such that for each state $u \in V$, we have $\sum_{w\in V}{p_{uw}}=1$, and an initial probability distribution $p_0$ on states in $V$.
\end{definition}

Following \cite{CourcoubetisYannakakis95} we view a Markov chain as a directed graph $(V, E)$, with weights $p_{uw}$ on all edges $(u,w)$. Moreover, as in a configuration graph, we associate each vertex in the graph to a state of a BPWhile program, and a transition between states to one possible execution step of the program. As an initial probability distribution we use a unit distribution that places weight $1$ on the unique start state of the program.

To verify whether a BPWhile program is differentially private, as we will see in the next section, we need to compare the probabilities of outputting the same output on neighboring inputs. We will do this by computing hitting probabilities of final states with a fixed output values.
\begin{definition}
The \textit{hitting probability} of a state $s \in V$ in a Markov chain $M=(V,E,p,p_0)$ is the probability of reaching $s$ in $M$ starting with a initial probability distribution $p_0$ after an arbitrary number of steps.
\end{definition}

\subsection{Differential Privacy}

Differential privacy is a property of a program that can be expressed in terms of a neighboring relation over possible program inputs. Here we view an input as a sensitive dataset, and say that two inputs are neighboring if they differ in one individual's information.
As our focus in this paper is on boolean programs, we define two datasets to be neighboring when they differ in a single bit.
\begin{definition}
Two boolean vectors of the same length are said to be \emph{neighboring} if their Hamming distance (the number of positions in which these vectors differ) equals $1$.
\end{definition}

Notice that this is a strong notion of neighboring, which makes our hardness results stronger. That is, our hardness results extend naturally to other more involved notions of neighboring.
Moreover, our upper bound arguments apply to any neighboring relation between boolean vectors (or more generally, vectors over any fixed finite data domain) as long as that relation can be checked in polynomial space. Using the notion of neighboring we introduced above we can now formulate  differential privacy.

Differential privacy guarantees that a change of any one data in the input will not change much the observed output of the program. More formally, differential privacy guarantees that the distributions of outputs of a program when run on neighboring datasets are close.

\begin{definition}[Differential Privacy~\cite{DMNS06}]
\label{def:differential-privacy}
A boolean program $C$ with inputs of length $n$ and producing outputs of length $l$ is $(\eps, \delta)$-differentially private if for every pair of neighboring inputs $x,x'\in \{0,1\}^n$ and for every set of possible outputs $O \subseteq \{0,1\}^l \cup \{\bot\}:$ 
\begin{equation} \label{eqn:dp-ineq}
\Pr[C(x) \in O] \leq e^\eps \cdot \Pr[C(x') \in O] + \delta.
\end{equation}
\end{definition}

This version of differential privacy is often called \emph{approximate differential privacy} to distinguish it from \emph{pure} differential privacy, which is the special case where $\delta=0$. We will denote the latter by $\eps$-differential privacy. 

In the following, it will be convenient at times to work with the following reformulation of differential privacy. 
\begin{lemma}
[Pointwise differential privacy \cite{Barthe:2016}]\label{pointwiseDP}
A program $C$ is $(\eps,\delta)$-differentially private if and only if for all neighboring inputs $x, x' \in \{0,1\}^n$,
$$\sum\limits_{o \in \{0,1\}^l \cup \{\bot\}}{\delta_{x,x'}(o)} \leq \delta,$$
where $\delta_{x,x'}(o)=\max(\Pr[C(x)=o]-e^\eps(\Pr[C(x')=o]),0)$.
\end{lemma}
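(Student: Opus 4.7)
The plan is to prove both directions via the observation that $\sum_{o} \delta_{x,x'}(o)$ equals the maximum over output sets $O$ of the differential-privacy ``excess'' $\Pr[C(x) \in O] - e^{\eps}\Pr[C(x') \in O]$, a maximum attained at the natural ``bad event'' set. Throughout, the sample space $\{0,1\}^l \cup \{\bot\}$ is discrete, so there are no measurability issues; we also exploit that $C(x)$ is a bona fide distribution on this set (including $\bot$), so probabilities sum to $1$.

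For the ``if'' direction, I would fix neighboring $x, x'$ and an arbitrary output set $O \subseteq \{0,1\}^l \cup \{\bot\}$ and write
\begin{equation*}
\Pr[C(x) \in O] - e^{\eps}\Pr[C(x') \in O] = \sum_{o \in O} \bigl(\Pr[C(x)=o] - e^{\eps}\Pr[C(x')=o]\bigr).
\end{equation*}
Each summand is at most $\delta_{x,x'}(o)$ by the definition of the max, and since every $\delta_{x,x'}(o) \geq 0$, extending the sum to all $o$ only increases it. Hence the left side is at most $\sum_o \delta_{x,x'}(o) \le \delta$, yielding inequality~\eqref{eqn:dp-ineq}.

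For the ``only if'' direction, I would define the ``bad'' set
\begin{equation*}
O^* = \{ o \in \{0,1\}^l \cup \{\bot\} \suchthat \Pr[C(x)=o] > e^{\eps}\Pr[C(x')=o] \}.
\end{equation*}
For $o \in O^*$ one has $\delta_{x,x'}(o) = \Pr[C(x)=o] - e^{\eps}\Pr[C(x')=o]$, while for $o \notin O^*$ one has $\delta_{x,x'}(o) = 0$. Summing,
\begin{equation*}
\sum_{o} \delta_{x,x'}(o) = \Pr[C(x) \in O^*] - e^{\eps}\Pr[C(x') \in O^*] \leq \delta,
\end{equation*}
where the final inequality is precisely $(\eps,\delta)$-differential privacy applied to $O^*$.

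There is no genuine obstacle here; this is a standard reformulation, and the only subtlety worth stating carefully is that the output alphabet must include $\bot$ so that the distributions $C(x), C(x')$ are probability measures whose total mass is accounted for when selecting $O^*$. I would therefore make that point explicit and keep both directions short.
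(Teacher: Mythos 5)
Your proof is correct and is the standard argument for this reformulation. Note that the paper itself does not reprove this lemma --- it cites it directly from Barthe et al.\ \cite{Barthe:2016} without including a proof --- but both directions of your argument (extending the pointwise bound to an arbitrary event $O$ for the ``if'' direction, and instantiating the DP inequality on the maximizing set $O^*$ for the ``only if'' direction) are exactly the canonical derivation of pointwise differential privacy, and your remark that $\bot$ must be included so that $C(x)$, $C(x')$ are genuine probability distributions is a worthwhile clarification.
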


\section{Complexity of Checking Almost Sure Termination}
In this section we give intuition for the hardness of deciding differential privacy by discussing the complexity of almost sure termination. While it is known that almost sure termination for Markov chains is $\PSPACE$-complete~\cite{EtessamiY09}, we believe it is instructive to understand where this complexity comes from. We start with a helpful characterization of almost sure termination of a BPWhile program in terms of reachability in  the program's configuration graph.

\begin{theorem}
\label{losslessness-criterion}
A program $C$ terminates almost surely if and only if for every input $x$ and every vertex $v$ in the configuration graph of $C(x)$ that is reachable from the start state, there is a path from $v$ to one of the final states.
\end{theorem}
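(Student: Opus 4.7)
The plan is to prove both directions via standard Markov-chain reachability arguments, using the fact that the configuration graph of a BPWhile program on a fixed input has only finitely many vertices (at most $2^v \cdot l$).

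For the forward direction, I would argue by contrapositive. Suppose there is some input $x$ and some vertex $v$, reachable from the start state, such that no path in the configuration graph leads from $v$ to a final state. Since $v$ is reachable, there exists a finite directed path from the start state to $v$ along edges of $E$. Each such edge carries strictly positive transition probability (by the definition of the configuration graph, edges are included precisely when the one-step probability is nonzero), so the product of these probabilities is some $q > 0$. Thus with probability at least $q$ the execution on $x$ passes through $v$. Once at $v$, no execution trace can reach a final state, so the program runs forever on this positive-probability event. Hence $C$ does not almost surely terminate.

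For the backward direction, assume that from every reachable vertex there is at least one path to a final state. Fix an input $x$, let $R$ denote the (finite) set of states reachable from the start, and for each $u \in R$ let $\pi_u$ be a path from $u$ to some final state of length $k_u$ and probability $p_u > 0$. Setting $K := \max_{u \in R} k_u$ and $p := \min_{u \in R} p_u$, finiteness of $R$ guarantees $K < \infty$ and $p > 0$. Now partition the execution into blocks of $K$ steps. At the start of each block, if the program has not yet halted, it is in some state of $R$, from which a final state is hit within $K$ steps with probability at least $p$. Therefore the probability that execution has not yet reached a final state after $nK$ steps is at most $(1 - p)^n$, which tends to $0$. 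Taking the union over all (finitely many) inputs $x \in \{0,1\}^n$ of length $n$, the program halts with probability $1$ on every input.

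The only real care point, and the step I would double-check, is the backward direction: it is important that the quantifier in the hypothesis is over \emph{all} reachable states so that the minimum $p$ is taken over the finite set $R$ and hence is strictly positive; without finiteness or a uniform positive lower bound on per-block termination probability, the geometric decay argument would fail. Once that is in place, the two directions together give the claimed characterization.
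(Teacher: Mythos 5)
Your proof is correct and takes essentially the same approach as the paper's: the forward direction reaches the bad vertex $v$ with positive probability and then cannot terminate, and the backward direction uses finiteness of the reachable state set to get a uniform positive per-block probability of reaching a final state, then applies geometric decay. The only cosmetic difference is that the paper instantiates $K$ as $|V|$ and $p$ as $2^{-|V|}$ (using that all transition probabilities are $1/2$), whereas you keep them abstract as $\max k_u$ and $\min p_u$; this is immaterial.
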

\begin{proof}
For the ``if'' direction, suppose $x$ is an input such that for every reachable vertex in the state graph $G = (V, E)$ of $C(x)$, there is a path from $v$ to one of the final states. Let $m = |V|$ be the number of vertices. Since every simple path in $G$ has at most $m$ edges, we have that for every $v$, the probability of reaching a final state after at most $m$ additional steps of computation starting from $v$ is least $2^{-m}$. Therefore, for any $k \ge 1$, the probability that the program fails to terminate on input $x$ after $km$ steps is at most $(1-2^{-m})^k$. Taking $k \to \infty$, we see that that the program fails to terminate with probability $0$. Therefore, $C$ terminates almost surely on input $x$.

For the ``only if'' direction, suppose there is an input $x$ and a vertex $v$ in the state graph of $C(x)$ that is reachable from the start state but cannot reach any final state. Then on $C(x)$ reaches state $v$ with probability at least $2^{-m}$ by following the simple path from the start state to $v$. Once the program has reached $v$, it is impossible to terminate. So the program terminates with probability at most $1-2^{-m} < 1$.
\end{proof}

The main intuition of this theorem is that the only way for a program to fail to terminate with probability $1$ is if there is some positive probability that it enters an infinite loop from which it cannot exit. This is possible if and only if there exists a state that is reachable from the start state, but from which we cannot reach any of the final states.

Theorem \ref{losslessness-criterion} immediately suggests a simple exponential-time (and exponential-space) algorithm for checking almost sure termination. For each possible input to the program, we can construct the configuration graph of the program on that input. Using breadth-first search, we can mark which states are reachable from the start state, and for each such state we check whether any of the final states are reachable. If there exists an input and a state in its configuration graph that is reachable from the start state but cannot reach a final state, then by Theorem \ref{losslessness-criterion} we get that the program doesn't terminate almost surely.  If for every input, there is no such state, then the program almost surely terminates.

Constructing the configuration graph explicitly and running breadth-first search uses exponential space. In what follows, we describe how to reduce the space complexity to polynomial.

We can improve the previous algorithm by avoiding storing the whole configuration graph, and instead  providing implicit access to any edge in the graph.

This requires us to re-compute on-the-fly information about the set of reachable states from any given vertex, but fortunately, this can still be done in polynomial space.

\begin{theorem}
There is a deterministic algorithm for checking almost sure termination of a BPWhile program using space polynomial in the size of the program. 
\end{theorem}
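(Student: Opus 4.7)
The plan is to combine the characterization of almost sure termination from Theorem~\ref{losslessness-criterion} with a standard space-efficient reachability algorithm (e.g.\ Savitch's recursive doubling) applied to an implicit representation of the configuration graph. The crucial observation is that although the configuration graph has up to $l\cdot 2^v$ vertices (exponential in the program size), each individual state $(m,i)$ has a description of size $O(v+\log l)$, which is polynomial in the size of $C$. Moreover, given two states $s=(m,i)$ and $s'=(m',i')$, one can decide whether $(s,s')\in E$ by simulating one step of the program from configuration $s$ (executing at most one line of code, with the random coin resolved either way), and checking whether the resulting memory and line pointer match $s'$; this takes polynomial time and hence polynomial space. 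So the configuration graph is available as an implicit graph with a polynomial-time local edge oracle.

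Given this oracle, I would implement the following procedure. Enumerate each input $x\in\{0,1\}^n$ using a polynomial-size counter; for each $x$, enumerate each state $v$ in the configuration graph of $C(x)$ by iterating through all $(m,i)\in\{0,1\}^v\times[l]$. For the fixed pair $(x,v)$, call a subroutine $\textsc{Reach}(s,t)$ that decides reachability from $s$ to $t$ in the configuration graph of $C(x)$, and check both (i) whether $v$ is reachable from the start state of $C(x)$, and (ii) whether some final state $f\in V_f$ is reachable from $v$ (by looping over the polynomially-described candidates $f$ and calling $\textsc{Reach}(v,f)$). If there exists some $(x,v)$ such that (i) holds and (ii) fails for every $f\in V_f$, output ``does not terminate almost surely''; otherwise, by Theorem~\ref{losslessness-criterion}, output ``terminates almost surely''. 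The outer loops reuse space, so the total space usage is dominated by the cost of one call to $\textsc{Reach}$ plus $O(\text{program size})$ for the counters.

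For the subroutine, I would use Savitch's algorithm: to decide whether $s$ can reach $t$ in at most $2^k$ steps, recursively guess a midpoint $u$ and check that $s$ reaches $u$ in $2^{k-1}$ steps and $u$ reaches $t$ in $2^{k-1}$ steps, trying all candidates $u$ in sequence and reusing space across calls. Taking $k=O(v+\log l)$ suffices since any reachable state is reachable by a simple path of length at most $l\cdot 2^v$. Each level of recursion stores only one state description of size $O(v+\log l)$, and the recursion depth is $O(v+\log l)$, so the total space is $O((v+\log l)^2)$, which is polynomial in the size of $C$. At the base case ($k=0$) the algorithm invokes the polynomial-time edge oracle described above, which also fits in polynomial space.

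The main obstacle, such as it is, is bookkeeping: verifying that the edge oracle really can be computed in polynomial space from the program text (which is straightforward because one step of a BPWhile command only inspects and updates a polynomial-sized memory), and carefully reusing the same workspace across the nested loops over inputs, states, and recursive reachability calls so that none of the exponential enumerations blows up the space budget. Once these are handled, correctness follows directly from Theorem~\ref{losslessness-criterion} and the correctness of Savitch's reachability procedure.
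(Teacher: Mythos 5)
Your proposal is correct and takes essentially the same approach as the paper: the paper also reduces to the reachability characterization of Theorem~\ref{losslessness-criterion}, replaces explicit construction of the configuration graph with an implicit local edge oracle, and invokes Savitch's algorithm for the reachability checks while reusing space across the enumerations; the paper merely states this at a high level without writing out the details you supply.
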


To show $\PSPACE$-hardness of checking almost sure termination we reduce from the  $\PSPACE$-complete true quantified boolean formula ($\TQBF$) problem. This is the problem of deciding whether a fully quantified propositional boolean formula is true. For a formula $\phi$ with $t$ quantifiers, we define a BPWhile program with $t$ nested while-loops to evaluate the formula. The reduction is similar to the reduction in \cite{GY13} from $\TQBF$ to the reachability problem for extended hierarchical state machines. We prove the following theorem in Appendix \ref{TQBF_to_losslessness}.

\begin{theorem}\label{losslessness_pspace_hard}
The  problem of checking whether a BPWhile terminates almost surely is $\PSPACE$-hard.
\end{theorem}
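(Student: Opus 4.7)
The plan is to reduce from \TQBF. Given a fully quantified boolean formula $\phi = Q_1 x_1 Q_2 x_2 \cdots Q_t x_t\, \psi(x_1, \ldots, x_t)$ with each $Q_i \in \{\forall, \exists\}$, I construct in polynomial time a BPWhile program $P_\phi$ that almost surely terminates if and only if $\phi$ is true. The construction does not need to exploit probabilistic features, so $P_\phi$ will be deterministic and ``almost-sure termination'' collapses to ordinary termination; the combinatorial characterization of Theorem~\ref{losslessness-criterion} still applies and will drive the correctness argument.

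To build $P_\phi$, I introduce boolean variables $x_1, \ldots, x_t$ for the quantified variables plus auxiliary booleans $r_1, \ldots, r_t$ that will hold the truth value of each subformula. The code is a stack of $t$ nested \texttt{while} loops, one per quantifier. Each loop is controlled by a small flag so that its body runs exactly twice---once with $x_i = {\tt false}$ and once with $x_i = {\tt true}$---and uses $r_{i+1}$, which the immediately inner loop has just written, to update $r_i$ by conjunction (if $Q_i = \forall$) or disjunction (if $Q_i = \exists$). At the innermost level the program evaluates the quantifier-free $\psi$ on the current assignment and writes the outcome into $r_t$. After the outermost loop exits, the program branches on $r_1$: if ${\tt true}$ it falls through to \texttt{return}, and if ${\tt false}$ it enters an explicit \texttt{while true then skip}.

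Correctness follows by induction on quantifier depth: the $i$-th loop correctly computes the truth value of $Q_i x_i \cdots Q_t x_t\, \psi$ under the partial assignment to $x_1,\ldots,x_{i-1}$ fixed by the outer loops, so $r_1$ ends up holding the truth value of $\phi$. Every \texttt{while} in the evaluation phase iterates at most twice and $\psi$ is evaluated by straight-line code, so that phase always halts; the only non-terminating portion of the program is the trailing \texttt{while true}, reached iff $r_1 = {\tt false}$. By Theorem~\ref{losslessness-criterion}, $P_\phi$ almost surely terminates iff no reachable state lies in that trailing loop, iff $\phi$ is true. A count of symbols gives $|P_\phi| = O(|\phi|)$, so the reduction is polynomial-time.

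The main obstacle is bookkeeping rather than insight. Because BPWhile provides only booleans, assignment, sequential composition, \texttt{if}, and \texttt{while}, implementing ``iterate $x_i$ over $\{{\tt false},{\tt true}\}$ exactly once'' and safely propagating the intermediate truth values up through the nested scopes requires carefully laying out the auxiliary flags and aggregation snippets without clobbering sibling quantifier levels. A secondary subtlety is that a BPWhile program operates on a fixed input length, so I must declare a dummy input (e.g., a single ignored bit) and check that the ``on all inputs'' quantifier in the definition of almost-sure termination still lines up with the truth of $\phi$. Neither issue is deep, but both must be handled precisely in the appendix proof.
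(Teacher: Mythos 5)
Your proposal is correct and takes essentially the same approach as the paper's Appendix B: reduce from $\TQBF$ by building a BPWhile program with $t$ nested while loops that evaluate the quantified formula, then enter an explicit infinite loop iff the formula is false, invoking the reachability characterization of Theorem~\ref{losslessness-criterion}. Your variant of carrying boolean accumulators $r_i$ and combining them by conjunction/disjunction is a slight stylistic simplification of the paper's small integer counters $c_i$, but the reduction and its correctness argument are the same.
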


\section{$\PSPACE$-Completeness for Pure and Approximate Differential Privacy}
We reason about differential privacy in a manner similar to almost sure termination. In particular, we  use a Markov chain interpretation of a program $C$. 

We first give a formal definition of the verification problems we consider. Then we give an inefficient (exponential-time) but simple algorithm (Section \ref{expspace-pure-dp-algo}), followed by a $\PSPACE$-algorithm for verifying whether a program is pure (Section \ref{pspace-pure-dp-algo})  or approximate differentially private (Section \ref{pspace-approx-dp-algo}).

\begin{definition}
In the {\sc Pure-DP} problem, an instance $(C, e^\eps)$ consists of a BPWhile program $C$, and a dyadic rational number $e^\eps$. The problem is to distinguish whether for all neighboring inputs $x,x'$ and for every set of possible outputs $O$ we have 
$$\Pr[C(x) \in O] \leq e^{\eps} \cdot \Pr[C(x) \in O].$$ 
\end{definition}

\begin{definition}
In the {\sc Approximate-DP} problem, an instance $(C, e^\eps, \delta)$ consists of a BPWhile program $C$, and two dyadic rational numbers $e^\eps, \delta$. The problem is to distinguish whether for all neighboring inputs $x,x'$ and for every set of possible outputs $O$ we have 
$$\Pr[C(x) \in O] \leq e^{\eps} \cdot \Pr[C(x) \in O] + \delta.$$ 
\end{definition}

\subsection{Exponential-Time Algorithm for Checking $(\eps, 0)$-Differential Privacy}\label{expspace-pure-dp-algo}
To give an exponential-time algorithm for checking $(\eps, 0)$-differential privacy, we first review  the algorithm for computing the probability of reaching any given final state $s_f$ in a Markov chain  from \cite{BCJSV20}:
\begin{enumerate}
     \item For each state $v$, initialize a variable $q_v$ representing the probability of reaching $s_f$ from this state.
     \item Set $q_{s_f} = 1$ for the final state $s_f$.
     \item For each state $v$ from which $s_f$ is not reachable set $q_v = 0$.
     \item For any state $v$ for which we do not yet have an equation, we introduce the equation $q_v=\sum_{u \in V} q_u \cdot p_{v u} $, where $p_{vu}$ is the probability of transitioning from $v$ to $u$ in one step. If there is no transition from $v$ to $u$, then $p_{vu} =0$.  
     \item The previous steps give us a set of equations, one for each possible state of the Markov chain. The number of variables equals the number of equations. This linear system can be solved unambiguously by using any polynomial-time algorithm for solving systems of linear equations. 
 \end{enumerate}
We can now state our exponential time algorithm for deciding differential privacy for BPWhile programs. 
 \begin{theorem}
 {\sc Pure-DP} problem is solvable by a deterministic algorithm using time exponential in the size of a program.
 \end{theorem}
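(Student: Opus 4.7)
The plan is to make direct use of the hitting-probability recipe for discrete-time Markov chains that was recalled just before the theorem, and combine it with the pointwise characterization of differential privacy from Lemma~\ref{pointwiseDP} (specialized to $\delta = 0$, which says that $(\eps,0)$-differential privacy is equivalent to $\Pr[C(x)=o] \le e^{\eps}\cdot\Pr[C(x')=o]$ holding simultaneously for every pair of neighboring inputs $x,x'$ and every output $o \in \{0,1\}^l \cup \{\bot\}$). So I would reduce {\sc Pure-DP} to computing all of these output probabilities and then performing an explicit check.

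First I would build the configuration graph of $C$ explicitly as a Markov chain $M_x$ for each input $x \in \{0,1\}^n$. The number of states is at most $2^v \cdot l$, which is exponential in the size of $C$, but writing the graph down and labeling each edge with its probability (which is $1$ for deterministic transitions and $1/2$ for the two outgoing edges from any state that just executed a \texttt{random}) takes exponential time. Then for each output $o \in \{0,1\}^l \cup \{\bot\}$ I would compute $\Pr[C(x)=o]$ as the hitting probability in $M_x$ of the set of terminating states with return value $o$ (and in the case $o=\bot$, one computes $1$ minus the total hitting probability of all terminating states). Following the recipe just before the theorem, identifying the set of states that cannot reach any terminating state returning $o$ requires only reachability queries in $M_x$ (exponential-time by breadth-first search), and then the remaining hitting probabilities are the unique solution to a linear system of polynomial bit-complexity but of exponential size. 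Gaussian elimination solves this system in time polynomial in its size, hence exponential in the size of $C$.

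Once I have all the numbers $\Pr[C(x)=o]$ for every $x \in \{0,1\}^n$ and every $o \in \{0,1\}^l \cup \{\bot\}$, I would enumerate every neighboring pair $(x,x')$ (there are only $n\cdot 2^n$ of them, hence exponentially many) and every output $o$, and test the inequality
\[
\Pr[C(x)=o] \;\le\; e^{\eps}\cdot\Pr[C(x')=o].
\]
Since all the probabilities are rationals with dyadic numerators and denominators of polynomial bit-complexity in the size of the exponentially large linear system, and $e^{\eps}$ is given as a dyadic rational in the input, each comparison is an exact arithmetic comparison between two rationals and can be performed in time polynomial in their bit-length. By Lemma~\ref{pointwiseDP}, $C$ is $(\eps,0)$-differentially private iff every one of these inequalities holds.

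The only point that requires some care, and which I would flag as the main subtlety rather than obstacle, is controlling the bit-length of the rational numbers produced by Gaussian elimination on the exponentially large system: one has to observe that all transition probabilities are $1/2$ (or $1$), so entries of the system have $O(1)$ bit-length, and Edmonds-style bounds on Gaussian elimination over $\mathbb{Q}$ guarantee that intermediate and final entries have bit-length polynomial in the dimension of the system, which is still exponential in the size of $C$. With this in hand the total running time remains exponential in the size of $C$, establishing the theorem.
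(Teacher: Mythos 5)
Your proposal follows essentially the same route as the paper's own proof: reduce via Lemma~\ref{pointwiseDP} (with $\delta=0$) to checking $\Pr[C(x)=o]\le e^\eps\Pr[C(x')=o]$ for every neighboring pair and every output, compute the output probabilities as hitting probabilities in the exponential-size Markov chain by solving the associated linear system, and then enumerate and compare. Your explicit appeal to Edmonds-style bit-length bounds for Gaussian elimination is a welcome bit of extra rigor that the paper leaves implicit, but it does not change the overall argument.
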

 \begin{proof}
By Lemma~\ref{pointwiseDP} for checking $(\eps, 0)$-Differential Privacy for a program $C$ it is sufficient to compare for every pair of neighboring dataset the output distributions on every possible value. Using this approach, we get the following simple algorithm:
     \begin{enumerate}
         \item For neighboring inputs $x,x'$ of size $n$ and a program $C$ of size $N$ construct two Markov chains, with one start-state in each, set these start-states to $x$ and $x'$, respectively.
         \item Find the probabilities of each final state in each Markov chain.
         \item Compare the probability of the same states in the two Markov chains. If there is at least one output $c$ such that $P[C(x)=c] > e^{\eps} P[C(x')=c]$, terminate and output ``Not $(\eps, 0)$-DP''. Otherwise continue. 
         \item If the checks were successful for all pairs, terminate with an output: ``$(\eps, 0)$-DP''.
     \end{enumerate}
 This algorithm explicitly store probabilities of reaching all of up to $2^{N}$ final states, as well as a system of linear equations of size exponential in $N$. As the input of the algorithm is a program $C$ of size $N$, we get that this algorithm requires exponential space and time in the size of its input.
\end{proof}

\commentt{
\textbf{Time and space analysis:} this algorithm requires exponential space and it solve a system of linear equations on exponential number of variables. As algorithm for solving such system takes polynomial number of steps in a number of variables (even the simplest one, like Gauss's method) overall this algorithm requires exponential time.

In case of $(\eps, 0)$-Differential Privacy it is sufficient to compare a probability to get the same output for each pair of inputs.

Our final algorithm looks as follows:
\begin{enumerate}
    \item For each pair of neighboring inputs $(a,b)$:
    \begin{enumerate}
        \item Construct two Markov chains, with one start-state in each, set these start-states to correspond to a memory set to $a$ and $b$.
        \item Find a probability of each final state in each Markov chain.
        \item Compare probability of the same states for both Markov chains, if there at least one output $c$ such that $P[C(a)=c] > e^{\eps} P[C(b)=c]$ stop the procedure and output "Not $(\eps, 0)$-DP". Else continue. 
    \end{enumerate}
    \item If checked for all pairs, terminate with an output: "$(\eps, 0)$-DP".
\end{enumerate}

\textbf{Time and space analysis:} we have an exponential number of different inputs, for each pair we construct a Markov chain of exponential size, and then run a procedure that finds and compares an exponential number of probabilities in exponential time. Overall this algorithms uses exponential amount of space and time.}

\subsection{PSPACE Algorithm for Checking $(\eps, 0)$-Differential Privacy}\label{pspace-pure-dp-algo}

A classic line of work~\cite{Sim81, BorodinCookPippenger83, Jung81} showed that computing the hitting probabilities of final states can be done efficiently in space. This is what we need to design a $\PSPACE$ algorithm to check differential privacy. In designing this algorithm we use the work of Simon~\cite{Sim81} who showed that given a Markov chain of size $M$, the hitting probability of any state can be computed in space $O((\log{M})^6)$. Subsequent work~\cite{BorodinCookPippenger83, Jung81} improved this result by showing that $O((\log{M})^2)$ is enough. Nevertheless, we focus our exposition on Simon's algorithm as its presentation simplifies the description of our algorithm, and improving the polynomial does not affect membership of our problem in $\PSPACE$.

Simon's result can be formally stated as follow:

\begin{lemma}\cite{Sim81}\label{hitting_prob}
Let $M$ be a Markov chain (represented by its transition matrix) with at most $2^{ L}$ states, an initial distribution placing all mass on one state $s$, a set of final states $F$ each with only one self-transition, and the property that every state not in $F$ each outgoing transition probability is either $0$ or $1/2$. There is an $O(L^6)$-space deterministic algorithm that computes the hitting probabilities of every state in $F$.
\end{lemma}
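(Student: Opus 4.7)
The plan is to reduce the computation of hitting probabilities to the evaluation of entries in the inverse of an exponentially large but structurally simple integer matrix, and then invoke classical space-efficient parallel algorithms for linear algebra. Fix a target final state $f^* \in F$ and let $h_v$ denote the probability of ever hitting $f^*$ starting from $v$. Standard absorption arguments give $h_{f^*} = 1$, $h_f = 0$ for $f \in F \setminus \{f^*\}$, and $h_v = \tfrac{1}{2}(h_{u_1(v)} + h_{u_2(v)})$ for $v \notin F$, where $u_1(v)$ and $u_2(v)$ are the two successors of $v$. Clearing the $1/2$'s, the unknowns $\{h_v : v \notin F\}$ satisfy an integer linear system $A h = b$ of dimension $N \le 2^{L}$ whose entries lie in $\{-1,0,1,2\}$, and every entry of $A$ and $b$ can be computed in $O(L)$ space directly from the transition matrix of $M$.

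By Cramer's rule, each $h_s$ equals a ratio of two determinants of integer matrices of dimension at most $N$ with constant-size entries, so it suffices to compute such determinants bit by bit in space polynomial in $L$. For this I would use the classical parallel algorithms: Berkowitz's algorithm expresses the determinant of an $N \times N$ integer matrix as the output of a uniform Boolean circuit of polynomial size and depth $O(\log^2 N)$ built from ring operations, and the iterated integer additions and multiplications at each such gate are themselves in uniform $\mathbf{NC}^1$ by Beame, Cook and Hoover. Composing these circuits and invoking Borodin's simulation of uniform shallow circuits in small deterministic space, any single output bit of either determinant can be produced using space polynomial in the overall circuit depth, which comes out to $L^{O(1)}$.

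The main obstacle is that $A$ itself, together with every natural intermediate quantity appearing in these algorithms (matrix powers, coefficients of characteristic polynomials, and the determinants we actually want), has exponentially many entries and exponentially many bits per entry, so none of it can be materialized explicitly. The standard resolution, and the one that yields Simon's $O(L^6)$ bound, is to traverse the evaluation circuit gate by gate in depth-first fashion: each requested bit at the root is expanded into requests for bits at its inputs, recursively, down to the leaves, which are entries of $A$ and can be regenerated from scratch in $O(L)$ space. The working space along any root-to-leaf branch is proportional to the overall circuit depth, and the delicate part of the analysis is bounding that depth once one stacks the Berkowitz circuit for the determinant on top of the $\mathbf{NC}^1$ circuits for arithmetic on integers of exponential magnitude; a careful accounting of these two sources of depth is what produces the sixth-power bound.
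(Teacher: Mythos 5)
Your proposal is correct and delivers the required $\mathrm{poly}(L)$ space bound, but it reaches it along a route different from the one the paper sketches for Simon's argument. The paper writes the hitting probability of $f$ as $v_s^T Q_\infty v_f$ where $Q_\infty = (I-Q)^{-1} - I$ is the sum of powers of the non-final submatrix $Q$, computes the inverse using a variant of Csanky's algorithm running in $O(L^3)$ time on a random access machine with unit-cost multiplication (MRAM), and then invokes the generic simulation of time-$T$ MRAMs by $O(T^2)$-space deterministic Turing machines to land directly on $O(L^6)$ space. You instead write down the one-step absorption equations $A h = b$, express each $h_s$ via Cramer's rule as a ratio of integer determinants of exponential dimension but constant-size entries, evaluate those determinants with Berkowitz's uniform polylogarithmic-depth circuit family (expanding each ring gate into $\mathbf{NC}^1$ circuits for iterated integer arithmetic), and obtain the space bound by a depth-first evaluation of the resulting uniform shallow circuit, recomputing matrix entries at the leaves. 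Both arguments are "parallel linear algebra plus a parallel-to-sequential-space simulation," so the conceptual core is the same; the difference is the toolkit. Your choices are arguably tidier: Berkowitz is division-free, and the depth-first circuit-evaluation argument applies directly to Boolean circuits without detouring through the MRAM model. One small caveat: you assert that a "careful accounting" of depth yields the sixth-power bound, but you never actually carry out that accounting, and a rough depth count ($O(L^2)$ Berkowitz layers times $O(L)$-depth arithmetic, with $O(L)$-bit gate addresses on the recursion stack) suggests your route may actually beat $L^6$. Since the lemma is only used through the consequence that the space is polynomial in $L$, the exact exponent is irrelevant, but you should either match the stated bound or explicitly note that your route gives a possibly different polynomial.
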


To apply the algorithm from the previous lemma we need to do an extra pre-processing step to remove all non-final recurrent states of a Markov chain.

\begin{definition}
A \textit{recurrent state} in a Markov chain is a state such that, after reaching it once, the probability of reaching it again is $1$. 
\end{definition}

A similar pre-processing step appears in Simon's paper, and we describe our removal process below in our proof of Theorem~\ref{thm:pure-dp-alg}.

Now we are ready to show that  $(\eps,0)$-differential privacy for BPWhile programs can be decided in polynomial space. 

\begin{theorem} \label{thm:pure-dp-alg}
The {\sc Pure-DP} problem is solvable by a deterministic algorithm using space polynomial in the size of the program.
\end{theorem}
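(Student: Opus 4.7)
The plan is to view the program as an implicitly represented Markov chain and apply Lemma~\ref{hitting_prob} to compute the hitting probability of each individual final configuration in polynomial space, and then to verify the pointwise differential privacy inequality of Lemma~\ref{pointwiseDP} by cycling through all relevant tuples.

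\textbf{Step 1 (Implicit Markov chain).} For a BPWhile program $C$ of size $N$ and an input $x$, the configuration graph defines a DTMC $M_C(x)$ on at most $2^{O(N)}$ states in which each non-final configuration either performs a deterministic step or samples \texttt{random}. The successor relation is computable in polynomial time from the code and the current configuration, so $M_C(x)$ is available implicitly and is never stored in full.

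\textbf{Step 2 (Putting the chain in the form required by Lemma~\ref{hitting_prob}).} The lemma requires the final set $F$ to be absorbing and every non-final state to have outgoing probabilities in $\{0, 1/2\}$. We take $F$ to consist of the \texttt{return} configurations together with a single extra state $s_\bot$ representing non-termination. By Theorem~\ref{losslessness-criterion}, a configuration cannot terminate iff it has no path to a \texttt{return} configuration, which by Savitch's theorem is decidable on the implicit graph in $O(N^2)$ space; we redirect all transitions of such configurations to $s_\bot$. Each remaining deterministic step $v \to u$ is re-encoded as two probability-$1/2$ outcomes both leading to $u$, which preserves all hitting probabilities. The transformed chain still has $2^{O(N)}$ states and is produced on demand whenever Simon's algorithm queries an edge.

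\textbf{Step 3 (Computing output probabilities).} For each $o \in \{0,1\}^l \cup \{\bot\}$, $\Pr[C(x) = o]$ is the sum of hitting probabilities of the final configurations whose return value is $o$ (or of $s_\bot$ when $o = \bot$). Lemma~\ref{hitting_prob} delivers each individual hitting probability, a dyadic rational of bit-length $\poly(N)$, in space $O((\log 2^{O(N)})^6) = O(N^6)$. We iterate over the at most $2^{O(N)}$ contributing final configurations and accumulate their probabilities into a running total that remains a dyadic rational of bit-length $\poly(N)$, because the sum is bounded by $1$ and admits a common polynomial-size denominator. Reusing Simon's workspace across iterations keeps the total space polynomial in $N$.

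\textbf{Step 4 (Enumerate and test).} By Lemma~\ref{pointwiseDP} with $\delta = 0$, the program is $(\eps, 0)$-DP iff $\Pr[C(x) = o] \le e^\eps \cdot \Pr[C(x') = o]$ for every pair of neighboring inputs $x, x' \in \{0, 1\}^n$ and every output $o$. We cycle through all such triples (each described by $O(N)$ bits), recompute the two sides using Step 3, and compare exactly using the dyadic representation of $e^\eps$. We reject iff some triple violates the inequality; reusing workspace across the enumeration keeps the total space polynomial in $N$. The main obstacle is Step 2: the configuration graph is exponentially large, so both the identification of non-terminating configurations and the query-by-query simulation of the transformed chain must be performed through on-the-fly reachability checks rather than by materializing the graph. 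These checks fit within the Savitch bound of $O(N^2)$ and contribute only polynomial overhead on top of Simon's $O(N^6)$ workspace.
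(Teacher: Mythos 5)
Your high-level plan matches the paper's: interpret the program as an implicitly represented DTMC, normalize it to satisfy the hypotheses of Lemma~\ref{hitting_prob}, compute hitting probabilities of final configurations via Simon's algorithm, and check the pointwise condition from Lemma~\ref{pointwiseDP} by enumeration. However, there is a substantive error in Step~3 that elides the central technical obstacle.

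You assert that each hitting probability, and likewise the running total $\Pr[C(x)=o]$, is ``a dyadic rational of bit-length $\poly(N)$'' with ``a common polynomial-size denominator.'' This is false. The Markov chain has up to $2^{\poly(N)}$ states with all non-trivial transition probabilities equal to $1/2$, so a single simple path from the start state to a final state can have probability as small as $2^{-2^{\poly(N)}}$, meaning the denominator can be $2^{2^{\poly(N)}}$ and the bit-length is \emph{exponential} in $N$. Consequently one cannot ``accumulate a running total'' explicitly in polynomial space, nor does Lemma~\ref{hitting_prob} ``deliver'' a probability as a stored number: it can only provide \emph{bit-by-bit implicit access}. The paper's proof makes exactly this point, and then spends the bulk of the argument explaining how all subsequent arithmetic (summation over final configurations, multiplication by the dyadic $e^\eps$, comparison) is carried out on these implicitly represented exponentially long numbers using space-efficient (logspace-in-length, hence $\poly(N)$-space) circuits for addition and multiplication. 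Your proposal needs to be repaired along these lines before the space bound is justified.

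A secondary, smaller issue is in Step~2: re-encoding a deterministic step $v\to u$ as ``two probability-$1/2$ outcomes both leading to $u$'' creates a multi-edge; in the transition-matrix representation Lemma~\ref{hitting_prob} works over, the $(v,u)$ entry would then be $1$, not $1/2$, violating the hypothesis. The paper resolves this by duplicating every non-start state and routing the two halves to the two copies of $u$, which keeps each entry in $\{0,1/2\}$. Your introduction of a single absorbing state $s_\bot$ for non-termination is a clean alternative to the paper's pruning of recurrent non-final states, and Savitch-based reachability on the implicit graph is the same idea; those parts are fine.
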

\begin{proof}

To apply the algorithm from Lemma~\ref{hitting_prob} and conclude that polynomial space is sufficient in order to compute the final probabilities, we need to be able to compute the probability of each transition in the Markov chain using polynomial space. We cannot explicitly store the whole Markov chain using space that is polynomial in the size of a program. Instead, we can construct an algorithm working in polynomial space which gets as input a description of the BPWhile program $C$, the program input $x$, and two states $u,v$ of the  Markov chain corresponding to $C(x)$. It outputs the transition probability of edge $(u,v)$ (the probability that $C(x)$ gets from state $u$ to state $v$ in one step).

We need to find the probability of hitting each reachable final state of the Markov chain of $C(x)$. Note that these probabilities can be as small as $1/2^{2^{p(N)}}$ for some polynomial $p(N)$, where $N$ is the size of the input program. This is because a Markov chain for a program of size $N$ has a number of states which is at most exponential in $N$, and as each transition probability is either 0, or 1/2, or 1, there can be a simple path in the Markov chain from the start state to the final state that goes through all the states with probability $1/2^{2^{p(N)}}$.

Storing these values requires exponential space, so the $\PSPACE$ algorithm described further only provides implicit access to these probabilities, i.e., the ability to compute any desired bit of a probability.

Here are the conditions that the Markov chain we construct needs to satisfy in order to apply Lemma~\ref{hitting_prob}:
\begin{itemize}
    \item The transition probability between every two states in the Markov chain of size $O(2^{\poly(N)})$ should be computable in polynomial space. Every final state has a self-transition with probability $1$. 
    \item Each transition in the Markov chain for all non-final states has to have weight either $1/2$ or $0$, and the graph underlying the Markov chain shouldn't contain multiple edges. This can be done by duplicating every state, except the start state, increasing the number of vertices by a factor of 2. Every duplicate final state is also marked as a final state.  Let $a$ and $b$ be vertices in the original Markov chain of the program that are transformed to two pairs of vertices $a_1, a_2$ and $b_1, b_2$ respectively. Then we re-assign the weight of edge $e$ from $a$ to $b$ in the original Markov chain as follows:
    \begin{itemize}
        \item If the original weight of $e$ is $1/2$, then we add two edges $(a_1, b_1), (a_2, b_2)$ each of weight $1/2$ to the new Markov chain.
        \item If the weight of $e$ is $1$ we add four edges $(a_1, b_1), (a_1,  b_2), (a_2, b_2), (a_2, b_1)$ each of weight $1/2$. 
        \item If the weight of $e$ is 0, we do not add any edges between vertices $a_1, a_2$ and $b_1, b_2$.
    \end{itemize}
    Therefore, for each original edge we add at most 4 new edges, so we do not increase the size of the Markov chain by more than a factor of 4. Moreover, for every pair of vertices in the new Markov chain, we can recompute the weight of the edge based on the the weight of the edge in the original Markov chain in linear time. Overall, this transformation is computable in the space polynomial in the size of the input BPWhile program and it guarantees that the probability of getting from one vertex to any other in one step is either 0 or $1/2$.

    \item All recurrent states except the final states should be deleted. We simulate this deletion as follows. Whenever our algorithm reads the probability on an edge $(u, v)$, we check whether either $u$ or $v$ are recurrent and zero out this probability if so. This check is similar to the one that we discussed earlier for almost sure termination. We consider the graph underlying the Markov chain of the program. To check whether state $u$ is recurrent, we run a search algorithm checking whether there is at least one path through edges with non-zero weight to at least one of the final states. We can use Savitch's algorithm \cite{Savitch70} to do this check in space polynomial in the size of the program.
\end{itemize}
    
To verify whether a program $C$ is $\eps$-differentially private we can now enumerate all pairs $x, x'$ of neighboring inputs, and all possible outcomes $o$. For each outcome $o$, we compute $\Pr[C(x) = o]$ by summing the hitting probabilities of reaching final states in the configuration graph of $C$ on $x$ that result in outputting $o$. Finally, we compare $\Pr[C(x) = o]$ to $e^{\eps}\Pr[C(x') = o]$. Note that if $e^\eps$ is a rational number with numerator $a$ and denominator $b$, then we can avoid division by comparing $b \cdot \Pr[C(x) = o]$ to $a \cdot \Pr[C(x') = o]$. 

We remark that the necessary arithmetic operations on exponentially long (implicitly represented) numbers can be carried out in polynomial space (though exponential time is still required) using classic logspace algorithms for addition and multiplication.\footnote{We can construct uniform $NC^1$ and $NC^2$ circuits for these operations. Simple constructions are described in \cite{Wegener87}.} In particular, this works even if $e^\eps$ is an exponentially long rational number provided as input to the problem.
\end{proof}

\subsection{PSPACE Algorithm for Checking $(\eps, \delta)$-Differential Privacy}\label{pspace-approx-dp-algo}
Now, using the pointwise definition of differential privacy from Lemma \ref{pointwiseDP} and using similar ideas to the algorithm in Section \ref{pspace-pure-dp-algo} we can construct a $\PSPACE$-algorithm for checking $(\eps, \delta)$-differential privacy of BPWhile programs.

\begin{theorem}\label{thm:approx-dp-algo}
{\sc Approximate-DP} is solvable by a deterministic algorithm using space polynomial in the size of the input program.
\end{theorem}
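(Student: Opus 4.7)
The plan is to extend the algorithm from Theorem~\ref{thm:pure-dp-alg} using the pointwise characterization of Lemma~\ref{pointwiseDP}: for every pair of neighboring inputs $x, x' \in \{0,1\}^n$ we must verify that
\begin{equation*}
S(x,x') \;\eqdef\; \sum_{o \in \{0,1\}^l \cup \{\bot\}} \max\bigl(\Pr[C(x)=o] - e^\eps \Pr[C(x')=o],\, 0\bigr) \;\le\; \delta,
\end{equation*}
rejecting as soon as some pair violates this bound. The outer enumeration over the $O(n\cdot 2^n)$ neighboring pairs and, for each, over the $2^l + 1$ candidate outputs $o$, can be driven by polynomial-space counters.

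For a fixed triple $(x, x', o)$, the probabilities $\Pr[C(x)=o]$ and $\Pr[C(x')=o]$ are obtained exactly as in the proof of Theorem~\ref{thm:pure-dp-alg}: we construct the duplicated-state Markov chain implicitly, filter out non-final recurrent states via an on-the-fly Savitch-style reachability check, and invoke Simon's algorithm (Lemma~\ref{hitting_prob}) to produce any desired bit of either probability in space polynomial in the program size $N$. Writing $e^\eps = a/b$ with polynomially bounded dyadic integers $a, b$, the per-output excess $\max(\Pr[C(x)=o] - e^\eps\Pr[C(x')=o], 0)$ can be rescaled into an integer term whose bits remain computable in polynomial space, by absorbing the common denominator (of bit-length at most $2^{\poly(N)}$) into the comparison with $\delta$.

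The principal obstacle is the iterated addition defining $S(x,x')$: it has up to $2^{\poly(N)}$ summands, each an integer of up to $2^{\poly(N)}$ bits, so the sum itself cannot be materialized in polynomial space. We never store the sum explicitly; instead, we maintain only implicit bit-access to it. Since iterated integer addition can be performed by uniform families of bounded-depth, polynomial-size circuits, and since each input bit to the summation is itself computable in $\PSPACE$ via Simon's algorithm, we can compute any bit of the sum in $\PSPACE$ by driving this circuit with bits generated on demand and propagating carries with the classic logspace addition and multiplication routines already invoked in Theorem~\ref{thm:pure-dp-alg} (cf.\ \cite{Wegener87}).

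With implicit bit-access to the sum in hand, comparing it against the dyadic rational $\delta$ (after scaling both sides by a common denominator of bit-length at most $2^{\poly(N)}$) reduces to a bit-by-bit comparison of two integers, each of whose bits is computable in $\PSPACE$; this requires only $\poly(N)$ space for the bit-position pointers. Iterating this check across all neighboring pairs and outputs, rejecting at the first violation, yields the required $\PSPACE$ algorithm for {\sc Approximate-DP}.
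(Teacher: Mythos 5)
Your proposal is correct and follows essentially the same route as the paper's proof: both apply the pointwise characterization of Lemma~\ref{pointwiseDP}, iterate over neighboring pairs and outputs, compute hitting probabilities implicitly via Simon's algorithm as in Theorem~\ref{thm:pure-dp-alg}, clear denominators to avoid division, and perform the exact arithmetic on exponentially long numbers via implicit bit-access in $\PSPACE$. The one place you go beyond the paper's terse argument is in spelling out how the iterated addition of exponentially many exponentially-long terms is handled (via uniform polylog-depth circuits evaluated in $\PSPACE$ with on-demand input bits), which the paper merely asserts in a single sentence; this is a welcome clarification, not a deviation.
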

\begin{proof}
Let $e^\eps = a/b$ for natural numbers $a, b$. As in the algorithm in Section \ref{pspace-pure-dp-algo} we iterate through all pairs of neighboring inputs $(x,x')$, and for each of them compute $$b\delta_{x,x'}(o)=\max(b\Pr[A(x)=o]-a\Pr[A(x')=o],0),$$ using the algorithm from Theorem \ref{hitting_prob}. Then we add this value to the sum $$\sum\limits_{o \in \{0, 1\}^l \cup \{\bot\}}{b\delta_{x,x'}(o)},$$ until we have iterated over all possible inputs, or until the partial sum is greater than $b\delta$. In the former case we terminate with the output ``not DP'', otherwise we do not terminate until checking the last output, and output ``DP''.

Again, the necessary arithmetic computations (maximum, addition, subtraction, and multiplication) on exponentially long rational numbers can be done in polynomial space. 
\end{proof}

\subsection{PSPACE-Hardness}\label{sec:dp-ver-pspace-hard}
To show $\PSPACE$-hardness of checking whether a BPWhile program is differentially private, we reduce from the problem of checking almost sure termination. All of our hardness results have a similar structure: for a program $C$ we construct another program $C'$ that is differentially private (with some parameters) if and only if program $C$ terminates almost surely. We show such reductions for the problems of {\sc Pure-DP}, {\sc Approximate-DP}, and for {\sc Distinguish ($\eps$,$\delta$)}-DP that hold even when the parameters $e^\eps$ and $\delta$ are fixed.

\begin{lemma}\label{lossy_imply_pure_DP}
For a fixed rational $e^\eps > 1$, the problem of checking almost sure termination for BPWhile Boolean programs is poly-time Karp-reducible to the problem of checking  $(\eps,0)$-differential privacy for those programs.
\end{lemma}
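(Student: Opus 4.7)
The plan is to construct, given a BPWhile program $C$, a new program $C'$ such that $C'$ is $(\eps, 0)$-differentially private if and only if $C$ almost surely terminates. The key observation is that pure differential privacy is infinitely sensitive to the non-termination symbol $\bot$: for any finite $\eps$, if some output $o$ has $\Pr[C'(x) = o] > 0$ while $\Pr[C'(x') = o] = 0$ on a neighboring pair $x, x'$, then $(\eps, 0)$-differential privacy fails. I would therefore use $\bot$ as the output that distinguishes between terminating and non-terminating executions.

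Concretely, I would introduce one fresh boolean input variable $b$ and define $C'$ on input $(b, x_1, \ldots, x_n)$ by first setting a dedicated output variable $\mathit{out} := {\tt true}$, then executing ${\tt if}\ b\ {\tt then}\ c_C\ {\tt else}\ {\tt skip}$ (where $c_C$ is the body of $C$, with $C$'s own return statement discarded), and finally executing ${\tt return}(\mathit{out})$. This transformation is computable in time linear in the size of $C$, and $C'$ produces a single output bit, so its output space is $\{{\tt true}, {\tt false}, \bot\}$ with $\Pr[C'(b,x) = {\tt false}] = 0$ everywhere.

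The verification of the equivalence proceeds by case analysis on neighboring input pairs of $C'$. Writing $p_x = \Pr[C(x)\text{ terminates}]$, the critical case is a pair $(0, x), (1, x)$ differing in the fresh bit: we have $\Pr[C'(0, x) = \bot] = 0$ while $\Pr[C'(1, x) = \bot] = 1 - p_x$, so the $(\eps, 0)$-privacy inequality on the singleton $\{\bot\}$ forces $p_x = 1$. For pairs $(0, x), (0, x')$ differing only in coordinates of $x$, both executions output ${\tt true}$ deterministically and privacy is trivial; for pairs $(1, x), (1, x')$, under the assumption that $C$ terminates almost surely on every input we again have deterministic outputs on both sides. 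Thus $C'$ is $(\eps, 0)$-differentially private exactly when $p_x = 1$ for every $x$, which combined with Theorem~\ref{losslessness_pspace_hard} yields the claimed reduction.

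The main thing to be careful about, rather than a real obstacle, is ensuring that the extra coordinate $b$ does not create neighboring pairs that manufacture spurious violations of differential privacy, which is exactly why $C'$ is designed to be deterministic and input-independent on the $b = 0$ branch. Note also that the construction is oblivious to the specific value of $e^\eps$, so a single reduction handles every fixed $e^\eps > 1$.
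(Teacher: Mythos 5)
Your proposal is correct and follows essentially the same approach as the paper: both constructions add a fresh gating bit $b$, run $C$ only when $b$ is set, and then return a fixed constant, so that the singleton output $\{\bot\}$ witnesses a privacy violation between the neighbors $(1,x)$ and $(0,x)$ exactly when $C(x)$ has positive probability of non-termination. The only cosmetic differences are that you route the constant output through a dedicated variable $\mathit{out}$ rather than returning a literal, and you spell out the trivial neighboring cases explicitly.
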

\begin{proof}
Let $C(x)$ be a BPWhile program for which we want to check almost sure termination. We construct a new program $C'$ that will receive an input $x$ and one additional bit of input $b$, and runs $C$ as a subroutine. The BPWhile language doesn't support procedure calls, but we can encode the same behavior using the following code representing a template for the code of the program $C'$.

In this reduction we consider two inputs to a program $C'$ as neighboring if they disagree only in one bit.

Here is the template code for $C'$:
$$
    {\tt input}(x,b); {\tt if}\: b == 1\ {\tt then}\ C(x)\ {\tt else}\ {\tt skip};{\tt return}(1)   
$$
Notice that the return statement is executed only if $C(x)$ halts or $b==0$.

As we add constant number of extra lines to the original program $C$, it takes linear time to construct $C'$. Hence the reduction takes linear time.

To show correctness of the reduction we need to check that it maps yes-instances of the almost sure termination problem to yes-instances of {\sc Decide ($\eps$,$0$)}-DP problem, and no-instances to no-instances. If a program $C$ almost surely terminates on all inputs, then for all possible values of bit $b$ we get that $C'(x,b)$ outputs 1 with probability 1. Hence this program is $(\eps,0)$-differentially private for every $\eps \ge 0$.

If the program $C$ is not almost surely terminating, then there exists an input $x$ such that program $C(x)$ fails to terminate with some probability $\rho > 0$. Hence, we get that: $$\Pr[C'(x,1)\text{ doesn't halt}] = \rho > 0.$$
On the other hand, on the neighboring input $(x,0)$ we get
 $$\Pr[C'(x,0)\text{ doesn't halt}] = 0.$$
 Therefore $C'$ is not $\eps$-DP for any $\eps$. 
\end{proof}

\begin{lemma}\label{lossy_imply_approx_DP}
For any fixed rational $e^\eps$ and dyadic $\delta \in (0,1)$, the problem of checking almost sure termination for BPWhile Boolean programs is poly-time Karp-reducible to the problem of checking  $(\eps,\delta)$-differential privacy for those programs.
\end{lemma}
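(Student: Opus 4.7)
The plan is to extend the construction of Lemma~\ref{lossy_imply_pure_DP} so that the distinguishing event on neighboring inputs has probability strictly greater than $\delta$, rather than merely positive. The obstacle is that when $C$ fails to terminate almost surely, its non-termination probability $\rho$ can be as small as $2^{-2^{\poly(|C|)}}$, since the configuration graph of $C(x)$ has up to $2^{\poly(|C|)}$ states and each outgoing edge has probability $1/2$; meanwhile $\delta$ is a fixed positive constant. Naively amplifying $\rho$ by embedding $C$ inside a polynomial-sized loop cannot bridge this doubly-exponential gap, since a loop counter of $\poly(|C|)$ bits only yields at most singly exponentially many repetitions.

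To sidestep amplification, my plan is to exploit the fact that a ``certificate of failure of almost sure termination'' for $C$ is short and polynomial-space checkable. By Theorem~\ref{losslessness-criterion}, $C$ fails to almost surely terminate if and only if there exist an input $x$ and a state $s$ such that $s$ is reachable from the start state of $C(x)$ but cannot reach any final state. Both of these reachability questions in the implicitly-given, exponentially-sized configuration graph of $C$ lie in $\PSPACE$ via Savitch's theorem. The reduction will output a BPWhile program $C'$ whose inputs have the form $(b, w)$, where $b$ is a single control bit and $w = (x, s)$ is a proposed certificate of polynomial length. On input $(b, w)$, $C'$ outputs $1$ if $b = 0$; if $b = 1$, it runs a hardcoded $\PSPACE$ certificate-verifier on $w$, entering an infinite loop if $w$ is a valid certificate and outputting $1$ otherwise.

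Correctness then follows immediately: if $C$ almost surely terminates then no valid certificate exists, so $C'$ outputs $1$ on every input and is trivially $(\eps,\delta)$-differentially private; if $C$ does not, a valid certificate $w^*$ exists, and the neighboring inputs $(0, w^*)$ and $(1, w^*)$ satisfy $\Pr[C'(0, w^*) = \bot] = 0$ while $\Pr[C'(1, w^*) = \bot] = 1$, so $O = \{\bot\}$ witnesses a violation of $(\eps,\delta)$-DP with slack $1 - \delta > 0$. The main task requiring care is to implement the $\PSPACE$ certificate-verifier as a BPWhile subroutine of polynomial size; this is a routine simulation of a polynomial-space deterministic Turing machine---whose transition function is a hardcoded case analysis on the description of $C$---by a BPWhile program that uses boolean variables to encode the machine's tape cells, head position, and control state inside a single outer while-loop. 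The translation incurs only polynomial blowup in $|C|$, so the overall reduction is polynomial time.
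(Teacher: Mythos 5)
Your proof is correct, but the premise motivating it — that ``naively amplifying $\rho$ by embedding $C$ inside a polynomial-sized loop cannot bridge this doubly-exponential gap'' — attacks an obstacle that the paper's simpler construction never encounters. The paper's reduction does not try to push the non-termination probability of $C$ above $\delta$; instead it builds $C'$ that on input $(x,1)$ runs $C(x)$ and then, independently, enters an infinite loop with probability exactly $\delta$, while on $(x,0)$ it just returns $1$. If $C$ almost surely terminates, the output distributions on $(x,0)$ and $(x,1)$ are at total-variation distance $\delta$, so $C'$ is $(0,\delta)$-DP. If $C(x)$ fails to halt with some probability $\alpha>0$, then $\Pr[C'(x,1)=\bot] = \delta + \alpha(1-\delta) > \delta$ while $\Pr[C'(x,0)=\bot]=0$, so taking $O=\{\bot\}$ violates the $(\eps,\delta)$-DP inequality regardless of how tiny $\alpha$ is. No amplification is needed because the comparison is against $e^\eps\cdot 0 + \delta$, not against a fixed threshold that $\alpha$ must clear on its own.

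Your alternative — externalizing a polynomial-length certificate $w=(x,s)$ of non-termination, and having $C'$ run a hardcoded $\PSPACE$ verifier that enters an infinite loop iff $w$ is valid — does work and has the curious side benefit of producing a fully deterministic $C'$. However it carries a nontrivial implementation burden that your proof treats as ``routine'': you must argue that a polynomial-space deterministic Turing machine whose transition function depends on the description of $C$ can be compiled into a polynomial-size BPWhile program. This is believable (and is essentially the same style of argument the paper uses elsewhere, e.g.\ in the $\TQBF$ reduction in Appendix~\ref{TQBF_to_losslessness}), but it is a heavier hammer than necessary here, and a referee would likely ask you to spell it out. The paper's construction avoids this entirely at the cost of a constant-size randomized subroutine, and I'd recommend understanding why the naive-seeming extension of Lemma~\ref{lossy_imply_pure_DP} actually suffices before reaching for a $\PSPACE$-verifier simulation.
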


\begin{proof}
As in the proof of Lemma \ref{lossy_imply_pure_DP}, let $C(x)$ be a BPWhile program for which we want to check almost sure termination. We construct a new program $C'$ that will receive an input $x$ and one additional bit of input $b$, that runs $C$ as a subroutine. We denote by $\tt delta\_rand$ a subroutine that outputs $1$ with probability $1-\delta$, and outputs 0 with probability $\delta$. For any dyadic rational constant $\delta = a/2^m$, this can be constructed using $m$ calls to the the random operator. Note that the length of the program computing this subroutine is a constant independent of the length of the input program $C$. The following is a template for the code of $C'$:

\begin{tabular}{ll}
1.& ${\tt input}(x,b);$\\
2.& ${\tt if}\ b == 1\ {\tt then}$\\
3.& \quad $C(x);$\\   
4.& \quad $r = {\tt delta\_rand()};$\\ 
5.& \quad ${\tt if}\ r == 0\ {\tt then}$\\
6.& \quad \quad ${\tt while}\ {\tt true}\ {\tt then}$\\ 
7.& \quad \quad \quad ${\tt skip};$\\ 
8.& \quad ${\tt else}\ {\tt skip};$\\ 
9.& ${\tt else}\ {\tt skip};$\\ 
10.& ${\tt return}(1)$\\
\end{tabular}

Notice that the while-loop in line 6-7 is potentially infinite.
As $\tt delta\_rand$ can be computed by a program of constant size, this reduction takes linear time as in the analysis of Lemma \ref{lossy_imply_pure_DP}. Now to analyze the correctness of the reduction, first assume that $C$ almost surely terminates. Then $C'(x,b)$ either outputs 1 with probability 1, or it outputs $1$ with probability $1-\delta$ and doesn't halt with probability $\delta$. For every pair of input $(x, b), (x', b')$, the statistical distance between the possible distributions on outputs is at most $\delta$.
Hence $C'(x,b)$ is $(0,\delta)$-DP, hence $(\eps,\delta)$-DP.

If $C(x)$ doesn't almost surely terminate, then there exists some $\alpha > 0$ such that on some input $x$ program $C(x)$ enters an infinite loop with probability $\alpha$. Hence overall we get that $C'(x,1)$ enters an infinite loop with probability at least $\delta(1-\alpha) + \alpha > \delta$, but $C'(x,0)$ terminates and outputs 1 with probability $1$. Hence we get that 
$$\Pr[C'(x,1)\text{ doesn't halt}] > \delta = e^{\eps}\Pr[C'(x,0)\text{ doesn't halt}] + \delta,$$
and therefore $C'(x,b)$ is not $(\eps,\delta)$-DP.
\end{proof}

Combining the algorithms from Theorem~\ref{thm:pure-dp-alg} and Theorem~\ref{thm:approx-dp-algo} with the fact that the problems {\sc Pure-DP} and {\sc Approximate-DP} are $\PSPACE$-hard even for fixed values of the privacy parameters, we conclude that corresponding verification problems are $\PSPACE$-complete.

\begin{corollary}
For any rational $\eps$ and dyadic $\delta \in (0,1)$ the problems of checking whether a BPWhile  program is $\eps$-DP or whether a BPWhile program is $(\eps,\delta)$-DP are both $\PSPACE$-complete.
\end{corollary}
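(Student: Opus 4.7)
The plan is to assemble this corollary from the four results already established earlier in the paper. The two problems decompose cleanly into membership and hardness, and each piece has essentially been proved, so the work reduces to citing and combining the right statements.

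For membership in $\PSPACE$, I would invoke Theorem~\ref{thm:pure-dp-alg} for the pure case and Theorem~\ref{thm:approx-dp-algo} for the approximate case. Both algorithms run in space polynomial in the size of the input program $C$, and crucially they handle the privacy parameters $e^\eps$ and $\delta$ even when these are provided as (possibly long) rational/dyadic numbers, since the required arithmetic on exponentially long implicitly represented probabilities is carried out in polynomial space using the standard logspace algorithms for addition and multiplication cited in the proof of Theorem~\ref{thm:pure-dp-alg}. Thus fixing $\eps$ rational and $\delta \in (0,1)$ dyadic only makes the input smaller, and both {\sc Pure-DP} and {\sc Approximate-DP} remain in $\PSPACE$.

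For $\PSPACE$-hardness, I would chain the reductions already proved. Theorem~\ref{losslessness_pspace_hard} states that checking almost sure termination of a BPWhile program is $\PSPACE$-hard. Lemma~\ref{lossy_imply_pure_DP} gives a polynomial-time Karp reduction from almost sure termination to {\sc Pure-DP} that works for any fixed rational $e^\eps > 1$, and Lemma~\ref{lossy_imply_approx_DP} gives an analogous reduction to {\sc Approximate-DP} for any fixed rational $e^\eps$ and dyadic $\delta \in (0,1)$. Composing either reduction with the hardness of almost sure termination yields $\PSPACE$-hardness of the corresponding privacy-verification problem, even when $\eps$ and $\delta$ are fixed constants rather than part of the input. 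Combining upper and lower bounds gives the $\PSPACE$-completeness claim.

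There is no real obstacle here; the corollary is a bookkeeping statement. The only subtle point worth flagging in the write-up is that the hardness reductions (Lemmas~\ref{lossy_imply_pure_DP} and \ref{lossy_imply_approx_DP}) are stated for any fixed valid choice of the parameters, so the $\PSPACE$-hardness holds uniformly across the parameter range quantified in the corollary, matching the parameter range for which the algorithms of Theorems~\ref{thm:pure-dp-alg} and \ref{thm:approx-dp-algo} apply. Hence the two bounds meet and the corollary follows immediately.
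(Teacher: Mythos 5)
Your proposal is correct and matches the paper's own argument exactly: the paper obtains this corollary by combining Theorems~\ref{thm:pure-dp-alg} and~\ref{thm:approx-dp-algo} for the $\PSPACE$ upper bounds with Lemmas~\ref{lossy_imply_pure_DP} and~\ref{lossy_imply_approx_DP} (composed with Theorem~\ref{losslessness_pspace_hard}) for the hardness direction. Your write-up is just a more explicit unpacking of the same bookkeeping.
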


\section{Hardness of Approximation of Privacy}\label{sec:approx-dp-param-pspace-hard}
In this section, we show a strong sense in which the privacy parameters of a BPWhile program are hard even to approximate. We do this by showing that for any constant parameters $\eps, \delta$, it is $\PSPACE$-hard even to distinguish between the case where a program is $(0, 0)$-DP or whether it fails to be $(\eps, \delta)$-DP. This, for example, implies that the privacy parameters of a program are hard to approximate up to an additive $(\eps/2, \delta/2)$.

\begin{lemma}\label{lossy_imply_approx_hardness}
For any rational constants $\eps, \delta \in (0,1)$ 
the problem of checking almost sure termination for BPWhile programs is Karp-reducible to the promise problem of determining whether BPWhile program is $(0,0)$-differentially private or it is not $(\eps,\delta)$-differentially private. 
\end{lemma}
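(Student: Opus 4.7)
The plan is to extend the construction of Lemma~\ref{lossy_imply_pure_DP} with an amplification gadget that converts the possibly tiny non-halting probability of $C$ into a constant-probability violation of $(\eps,\delta)$-DP. Let $C$ be a BPWhile program of size $N$ with $l$ lines and $v$ Boolean variables, so its configuration graph on any input has at most $m = l \cdot 2^v \leq N \cdot 2^N$ states. The argument of Theorem~\ref{losslessness-criterion} shows that if $C$ does not almost surely terminate, then on some input $x^*$ the program $C(x^*)$ fails to halt with probability $\rho \geq 2^{-m}$; otherwise $\rho = 0$ on every input. I would build $C'$ to take input $(x,b)$ with $b$ a single bit. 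If $b = 0$, $C'$ returns $1$ immediately. If $b = 1$, $C'$ runs an outer while-loop whose body first executes $C(x)$ (discarding its output) and then samples a biased coin $r$ equal to $1$ with probability $q = 2^{-T}$ by computing the AND of $T$ fresh fair coin flips, where $T = m + c_\delta$ for a constant $c_\delta > \log(\delta/(1-\delta))$ depending only on $\delta$; the outer loop exits as soon as $r = 1$, after which $C'$ returns $1$. The biased coin is implemented via an inner while-loop with an $\lceil \log T \rceil = O(N)$-bit counter, so $|C'| = O(|C|)$ and the reduction is polynomial time.

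For correctness, suppose first that $C$ halts almost surely. Then each iteration of the outer loop terminates with probability $1$ and exits with probability $q > 0$, so $C'$ halts on every input and deterministically returns $1$; hence $C'$ is $(0,0)$-DP. Now suppose $C$ does not halt almost surely, and fix $x^*$ as above. Letting $P = \Pr[C'(x^*,1) = \bot]$ and conditioning on one outer iteration gives the fixed-point equation $P = \rho + (1-\rho)(1-q)P$, whose solution is
\[
P \;=\; \frac{\rho}{\rho + q(1-\rho)} \;\geq\; \frac{1}{1 + q/\rho} \;\geq\; \frac{1}{1 + 2^{m-T}} \;=\; \frac{1}{1 + 2^{-c_\delta}} \;>\; \delta,
\]
by the choice of $c_\delta$. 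Since $\Pr[C'(x^*,0) = \bot] = 0$, the neighboring pair $(x^*,0),(x^*,1)$ together with $O = \{\bot\}$ witnesses $\Pr[C'(x^*,1) \in O] > e^{\eps} \Pr[C'(x^*,0) \in O] + \delta$, and so $C'$ is not $(\eps,\delta)$-DP.

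The main technical obstacle is that $\rho$ can be doubly exponentially small in $N$ (since $m$ itself is exponential), so naive sequential repetition of $C$ cannot amplify $\rho$ to a constant within a polynomial-size program. The key insight is that one need not amplify $\rho$ \emph{directly}; instead, making the exit bias $q$ of the outer loop smaller than the worst-case $\rho$ forces the loop to non-halt with probability arbitrarily close to $1$ whenever $C$ itself fails to halt, by the fixed-point computation above. Although $q = 2^{-T}$ with $T = \Theta(N \cdot 2^N)$ is itself doubly exponentially small, sampling it requires only a while-loop with an $O(N)$-bit counter, so the entire amplification gadget fits in a polynomial-size BPWhile program. The remaining bookkeeping---verifying that all other neighboring pairs preserve $(0,0)$-DP in the halting case (trivial, since $C'$ always returns $1$) and that the reduction is computable in polynomial time---is straightforward.
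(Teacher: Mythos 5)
Your proposal is correct and rests on the same core insight as the paper: when $C$ fails to terminate almost surely, its per-run non-halting probability $\rho$ is bounded below by $2^{-m}$ (via a simple path in the configuration graph), and one can amplify this to a constant by wrapping $C$ in an outer while-loop whose exit probability $q$ is made smaller than $\rho$ using an exponentially long counter of only polynomially many bits. The difference is in the decomposition. The paper's proof factors through an intermediate Claim (constructing $C'$ that terminates with probability at most $1/2$ whenever $C$ is not almost-surely terminating) followed by a second program $C''$ that runs $C'$ a constant number $m$ of times with $e^\eps \cdot 2^{-m} + \delta < 1$; this two-stage structure keeps the amplification gadget independent of $\eps, \delta$. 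You instead tune the exit bias $q = 2^{-T}$ with $T = m + c_\delta$ directly so that the fixed-point equation $P = \rho + (1-\rho)(1-q)P$ already yields $P > \delta$, collapsing the two stages into one. Your fixed-point analysis is also cleaner than the paper's geometric-series bound (which somewhat informally treats the per-round probability of $C$ entering an infinite loop as at least $q$ even though a round of their inner loop may run $C$ a variable, possibly zero, number of times). The trade-off is that your $T$ depends on both the program size and $\delta$, whereas the paper's amplification parameter is a pure function of $|C|$ and the constant $m$ handles $(\eps,\delta)$. Both reductions are polynomial time and both correctly exhibit the DP violation on the neighboring pair $(x^*,0),(x^*,1)$ with the singleton outcome $\{\bot\}$.
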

\begin{proof}
Our reduction consists of two parts: 
\begin{enumerate}
    \item Given a BPWhile program $C$, we construct a new BPWhile program $C'$ such that if $C$ almost surely terminates, then $C'$ almost surely terminates too. Meanwhile, if $C$ doesn't almost surely terminate, then $C'$ terminates with probability at most $\frac{1}{2}$.
    \item As in the reductions in the proofs of Lemma \ref{lossy_imply_pure_DP} and Lemma \ref{lossy_imply_approx_DP} we construct a program $C''$ that calls $C'$ with the property that $C''$ is $(0, 0)$-DP if $C'$ is almost sure terminating, and $C''$ is not $(\eps,\delta)$-DP if $C'$ halts with probability at most $\frac{1}{2}$.
\end{enumerate}

For the first step, we use the following claim that we prove in Appendix~\ref{app:adp-proofs}.

\begin{claim}\label{loss_amplification}
If $C$ is a BPWhile program, then we can construct in polynomial-time a new program $C'$ that almost surely terminates if $C$ almost surely terminates, and terminates with probability at most $1/2$ if $C$ is not almost surely terminating.
\end{claim}
As in the reductions in Lemma \ref{lossy_imply_pure_DP} and Lemma \ref{lossy_imply_approx_DP}, we now construct a new program $C''$ that  receives an input $x$ and one additional bit of input $b$, and runs $C'$ as a subprogram. We repeat the execution of $C'(x)$ a total of $m$ times, where $e^{\eps} \cdot 2^{-m} + \delta < 1$. Note that $m$ depends only on the privacy parameters $\eps, \delta$ and not on the program $C'$. Now we construct the following program $C''$.

\begin{tabular}{ll}
1.& ${\tt input}(x,b);$\\
2.& ${\tt if}\ b == 1\ {\tt then}$\\
3.& \quad $C'(x);\qquad   \qquad \text{ \# \ run \  $C'(x)$ \ $m$ \ times}$ \\
4.& \quad $...$\\ 
5.& \quad $C'(x);$\\ 
6.& ${\tt else}\ {\tt skip};$\\ 
7.& ${\tt return}(1)$\\
\end{tabular} \\ 

The time complexity of constructing $C''(x,b)$ is linear in the size of $C'$ as $m$ is a constant. To show  correctness,  assume that $C$, and hence $C'$ terminates almost surely. Then $C''(x,b)$ outputs 1 with probability 1 on all inputs. Hence $C''(x,b)$ is $(0, 0)$-DP.

If $C(x)$ does not terminate almost surely then there exist an input $x$ and some $\alpha > 1/2$ such that $C'(x)$ fails to halt with probability $\alpha$. As we chose the number of repetitions $m$ in such way that $$e^{\eps}\cdot \Pr[m \text{ sequential runs of } C'(x) \text{ halt}] + \delta < 1,$$ we get that $C''(x,b)$ is not differentially private on neighboring inputs $(x,0)$ and $(x,1)$, since $$e^{\eps}\cdot \Pr[C''(x,1) \text{ halts}] + \delta < 1 = \Pr[C''(x,0) \text{ halts}].$$

Therefore, if the original program $C(x)$ is not almost surely terminating, we transformed it via the intermediate program $C'(x)$ to a program $C''(x,b)$ that is not $(\eps, \delta)$-differentially private.
\end{proof}

\begin{corollary} 
For any rational constants $e^\eps,\delta$ the problem {\sc Distinguish $(\eps,\delta)$-DP} is $\PSPACE$-hard.
\end{corollary}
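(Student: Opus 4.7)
The plan is to obtain the corollary as an immediate consequence of two results already established in the excerpt: Theorem~\ref{losslessness_pspace_hard}, which states that deciding almost sure termination for BPWhile programs is $\PSPACE$-hard, and Lemma~\ref{lossy_imply_approx_hardness}, which gives a polynomial-time Karp reduction from almost sure termination to the promise problem {\sc Distinguish $(\eps,\delta)$-DP} for any rational constants $\eps,\delta \in (0,1)$. Since $\PSPACE$-hardness is preserved under polynomial-time Karp reductions, composing these two facts yields the claim.

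More concretely, I would proceed as follows. First, let $L$ be any language in $\PSPACE$. By Theorem~\ref{losslessness_pspace_hard}, there exists a polynomial-time reduction $f$ from $L$ to the almost sure termination problem for BPWhile programs, so that on input $w$, the program $C = f(w)$ almost surely terminates iff $w \in L$. Second, apply the reduction of Lemma~\ref{lossy_imply_approx_hardness}, which produces in polynomial time from $C$ a new BPWhile program $C'' = g(C)$ such that $C''$ is $(0,0)$-differentially private when $C$ almost surely terminates, and $C''$ fails to be $(\eps,\delta)$-differentially private otherwise. The composition $g \circ f$ is computable in polynomial time, and $w \in L$ iff the instance $g(f(w))$ lies in the \emph{yes}-side of the {\sc Distinguish $(\eps,\delta)$-DP} promise problem, while $w \notin L$ iff the instance lies in the \emph{no}-side. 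This is exactly the condition for a polynomial-time reduction to a promise problem, and so {\sc Distinguish $(\eps,\delta)$-DP} is $\PSPACE$-hard.

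Since both underlying results are already proved in the excerpt, there is essentially no additional obstacle: the only small care needed is to verify that the composition of two polynomial-time reductions is still polynomial time and that each instance produced lands inside the promise (which is guaranteed by Claim~\ref{loss_amplification} together with the choice of $m$ in Lemma~\ref{lossy_imply_approx_hardness}, ensuring that the constructed program always falls into one of the two promise cases and never into an ambiguous middle regime). Consequently, the statement follows in a single short paragraph once the two ingredients are cited.
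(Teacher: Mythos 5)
Your proposal is correct and matches exactly the intended argument: the paper states this as a proofless corollary immediately after Lemma~\ref{lossy_imply_approx_hardness}, precisely because it follows by composing that lemma's polynomial-time Karp reduction with the $\PSPACE$-hardness of almost sure termination from Theorem~\ref{losslessness_pspace_hard}. Your extra remark about Claim~\ref{loss_amplification} ensuring the reduced instance always lands inside the promise is a nice bit of care that the paper leaves implicit.
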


\section{Other Definitions of Differential Privacy} \label{sec:other-defs}
Pure and approximate differential privacy degrade smoothly under composition: the overall privacy guarantee of a sequence of DP algorithms remains DP. However, in the worst case it is $\#\P$-hard to compute the best possible parameters achievable by a composition of approximate differentially private algorithms \cite{MurtaghV16}. Other variants of differential privacy, such as R\'enyi \cite{Mironov17}, concentrated \cite{BunSteinke16,DworkRothblum16}, and truncated concentrated differential privacy \cite{BDRS18}, were introduced, in part, to address this problem. All of these notions lead to efficiently computable optimal composition bounds.

We show $\PSPACE$-completeness for each of the problems of verifying (up to a precision parameter given as input) whether a BPWhile program is R\'enyi differentially private, concentrated differentially private, or truncated concentrated differentially private.

\subsection{R\'enyi Differential Privacy}\label{sec:pspace-rdp-algo}
\begin{definition}\label{def-Renyi-div}
Let $P = (p_1, \dots, p_n)$
 and $Q = (q_1, \dots, q_n)$ be probability distributions over $1, \dots, n$. For $\alpha > 1$, the \textit{R\'enyi divergence} of $P$ from $Q$ is
$$D_{\alpha }(P\|Q)={\frac {1}{\alpha -1}}\log {\Bigg (}\sum _{i=1}^{n}{\frac {p_{i}^{\alpha }}{q_{i}^{\alpha -1}}}{\Bigg )}.$$
\end{definition}

\begin{definition}\cite{Mironov17}\label{def-CDP}
A program $C$ is $(\alpha, \rho\alpha)$-R\'enyi-DP if for all neighboring inputs $x,x'$,
$$D_{\alpha }(C(x)\|C(x'))\leq \rho\alpha.$$
\end{definition}

We can check whether a BPWhile program $C$ is $(\alpha, \rho\alpha)$-R\'enyi-DP using an algorithm similar to the one for checking $(\eps,\delta)$-DP from Section \ref{pspace-approx-dp-algo}. A technical issue that arises here is that when computing R\'enyi divergences, we need to exponentiate possibly exponentially long numbers to exponentially large degrees $\alpha$ and $\alpha - 1$. We do not have the space to perform such computations exactly, so instead we consider a ``gappped promise'' version of the problem which takes an additional precision parameter $\eta$ as input, and distinguishes between the case where the program is $(\rho, \rho\alpha)$-RDP and the case where it fails to be $(\rho, \rho\alpha + 2^{-\eta})$-RDP. The inclusion of this precision parameter allows us to approximately compute R\'enyi divergences via additions of logarithms of exponentially long numbers to at most exponential precision.
 
\begin{definition}
In the {\sc Gap-R\'enyi-DP} problem, an instance $(C, \alpha, \rho, \eta)$ consists of a BPWhile program $C$, two dyadic rational numbers $\alpha$ and $\rho$, and a binary integer parameter $\eta$. The problem is to distinguish between the following two cases:
\begin{enumerate}
  \item Yes instances: for all neighboring inputs $x,x'$ we have 
$D_{\alpha }(C(x)\|C(x'))\leq \rho\alpha,$ 
  \item No instances: there exists a pair of neighboring inputs $x,x'$ such that 
$D_{\alpha }(C(x)\|C(x')) \geq \rho\alpha + \frac{1}{2^{\eta}}.$
\end{enumerate}

\end{definition}

\begin{theorem}\label{pspace-Renyi-dp-algo}
The gap problem {\sc Gap-R\'enyi-DP} is solvable by a deterministic algorithm using space polynomial in the size of the instance.
\end{theorem}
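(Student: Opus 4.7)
The plan is to adapt the PSPACE algorithm from Theorem~\ref{thm:approx-dp-algo}, now accumulating the R\'enyi divergence sum from Definition~\ref{def-Renyi-div} in place of the pointwise max-divergence sum. First I would iterate over all pairs of neighboring inputs $x, x'$; there are only polynomially many input bits to enumerate, so this outer loop uses polynomial space. For each pair I approximately compute $D_{\alpha}(C(x)\|C(x'))$ and compare it against the midpoint $\rho\alpha + 2^{-\eta-1}$: by the gap in the promise, whichever side of the midpoint the true value lies on determines the correct answer, so it suffices to compute $D_\alpha$ to additive precision strictly better than $2^{-\eta-1}$, say $2^{-\eta-2}$.

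For a fixed pair $(x, x')$, I would iterate over all outputs $o \in \zo^l \cup \{\bot\}$ (exponentially many, but enumerable in polynomial space) and for each $o$ compute an approximation of the summand $r_o = p_o^\alpha / q_o^{\alpha-1}$, where $p_o = \Pr[C(x)=o]$ and $q_o = \Pr[C(x')=o]$; if the algorithm ever encounters $p_o > 0$ with $q_o = 0$, it immediately outputs ``not $(\alpha, \rho\alpha)$-RDP''. The probabilities $p_o, q_o$ are dyadic rationals with up to $2^{\poly(N)}$ bits, but Simon's hitting-probability algorithm from Lemma~\ref{hitting_prob} (together with the self-loop and recurrent-state preprocessing used in Theorem~\ref{thm:pure-dp-alg}) provides implicit bit-access to any desired bit of them in polynomial space. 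To actually evaluate $r_o$, I work via the identity $r_o = \exp\bigl(\alpha \log p_o - (\alpha-1) \log q_o\bigr)$ and invoke uniform $NC$ circuit families for approximate logarithm, multiplication, subtraction, and exponentiation. I then add $r_o$ into a running partial sum $S$, and at the end compute $\frac{1}{\alpha-1}\log S$ with another division-and-logarithm step and compare against $\rho\alpha + 2^{-\eta-1}$.

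The main obstacle will be the precision analysis. The value $\alpha$ may be exponentially large, the probabilities may be doubly-exponentially small, and the summands vary over a huge dynamic range, so I must bound how much intermediate precision is required to guarantee that the final estimate of $D_\alpha$ is within $2^{-\eta-2}$ of the truth. A standard round-off analysis should show that working with $B = \poly(N + \log\alpha + \eta)$ bits of precision at each intermediate step suffices: the accumulated error from the up-to-exponentially-many additions remains below $2^{-\eta-2}$ provided $B$ exceeds the log of the number of summands plus $\eta$ by a constant factor. The crucial technical fact is that logarithm, exponentiation, division, and iterated addition on $B$-bit inputs can be computed by uniform $NC$ circuits, and any such circuit on $B$-bit inputs can be evaluated in space polynomial in $B$ (cf.\ the footnote in Theorem~\ref{thm:pure-dp-alg}). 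Combining the $\poly(B)$ space for the arithmetic subroutines with the polynomial space needed for the outer enumeration over $(x, x', o)$ and the implicit bit-access to the $p_o, q_o$, the entire algorithm runs in space polynomial in the size of the instance, establishing membership in $\PSPACE$.
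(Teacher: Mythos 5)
Your proposal follows the same path as the paper's proof: iterate over neighboring inputs, handle the $p_o > 0,\, q_o = 0$ case separately, compute each summand via $2^{\alpha\log p_o - (\alpha-1)\log q_o}$ with implicit bit-access to the hitting probabilities, and rely on uniform $NC$ circuits for the arithmetic. One cosmetic difference: the paper avoids the final division by $\alpha-1$ and the outer logarithm by exponentiating the threshold and comparing $\sum_o r_o$ against $2^{\rho\alpha(\alpha-1)}$ versus $2^{\rho\alpha(\alpha-1) + (\alpha-1)/2^\eta}$; your version computes $\frac{1}{\alpha-1}\log S$ and compares against the midpoint, which is equivalent.

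There is, however, a genuine slip in your precision and space accounting, and the two errors happen to cancel, which masks the issue. You claim $B = \poly(N + \log\alpha + \eta)$ bits of precision suffice at each intermediate step. This is incorrect: the quantities $\log p_o$ and $\log q_o$ can have magnitude up to $2^{\poly(N)}$ (since the hitting probabilities can be as small as $2^{-2^{\poly(N)}}$), and after multiplying by $\alpha \le 2^n$, the exponent $e_o = \alpha\log p_o - (\alpha-1)\log q_o$ can have magnitude $\Theta(2^{\poly(N)})$. Representing $e_o$ (or $r_o = 2^{e_o}$) to the required additive precision therefore takes an \emph{exponential} number of bits, not polynomial; this matches the paper's explicit bound of $2^{3n} + \frac{\alpha-1}{2^\eta} + n$ bits. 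Simultaneously, you state that uniform $NC$ circuits on $B$-bit inputs can be evaluated in ``space polynomial in $B$,'' which is too weak to be useful here. The correct and crucial fact is that a uniform circuit of depth $\polylog(B)$ can be evaluated in space $\polylog(B)$, which is polynomial in $N$ even when $B$ is exponential in $N$. If $B$ were really polynomial, $\poly(B)$ space would be fine but the precision would be insufficient; if $B$ is exponential (as it must be), then $\poly(B)$ space would be exponential. You need exponential $B$ together with $\polylog(B)$ circuit-evaluation space to close the argument as the paper does.
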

\begin{proof}
Consider an instance $(C, \alpha, \rho, \eta)$ of the problem {\sc Gap-R\'enyi-DP}. Following the definition of the problem, we iterate through all pairs of neighboring inputs $(x,x')$, and check that R\'enyi divergence is smaller than $\rho \alpha$. If the length of an instance is at most $n$, then, as the length of the program $C$ is bounded by the length of the instance, we have at most $2^n$ possible output values. Denote the set of output values (including the non-termination outcome $\bot$) as $O$. We need to check the following condition:

$$\log{\sum_{o \in O}{\frac{\Pr[C(x)=o]^\alpha}{\Pr[C(x')=o]^{\alpha-1}}}} \leq \rho \alpha (\alpha-1),$$
which we can rewrite as
$$\sum_{o \in O}{\frac{\Pr[C(x)=o]^\alpha}{\Pr[C(x')=o]^{\alpha-1}}} \leq 2^{\rho \alpha (\alpha-1)}.$$

First of all, we observe that for any pair of neighboring inputs $x,x'$, if for some outcome $o$ we have $\Pr[C(x)=o] > 0$ but $\Pr[C(x')=o]=0$, then we automatically have  a no-instance of the problem. This is because for every possible $\rho$ and $\alpha$ we would get
$$\frac{\Pr[C(x)=o]}{\Pr[C(x')=o]} > 2^{\rho \alpha (\alpha - 1)},$$
and so such $C$ is not R\'enyi differentially private for these parameters. So for each pair of neighboring inputs and for each potential outcome, we first check whether at least one of the probabilities is equal to zero, and output ``no-instance'' if the second probability is non-zero. As all probabilities are finite and represented by numerators and denominators of at most exponential length, this can be performed in polynomial space.

Since $\alpha$ and $\rho$ are given as part of the input, the lengths of $\alpha$ and $\rho$ are at most $n$. Hence $\alpha,\rho \leq 2^n$. Therefore, $|\rho \alpha(\alpha-1)| \leq 2^{3n}$ and we can compute $2^{\rho \alpha(\alpha-1)}$ in polynomial space.\footnote{We can print `1' followed by $2^{\rho \alpha(\alpha-1)}$ zeros by using a counter up to $\rho \alpha(1-\alpha)$ to output the correct number of zeros.}

Taking base-2 logarithms of $\Pr[C(x)=o]$ and $\Pr[C(x)=o]$, our goal is to (approximately) determine whether
$$\sum_{o \in O}{2^{\log{\Pr[C(x)=o]}\cdot \alpha - \log{\Pr[C(x')=o]}\cdot(\alpha-1)}} \leq 2^{\rho \alpha (\alpha-1)}.$$

As this is a comparison between a sum of at most $2^n$ numbers and a number of length $2^{3n}$, it suffices to compute $2^{3n}+\frac{\alpha-1}{2^\eta}+n$ bits
of the quantity  $$2^{\log{\Pr[C(x)=o]}\cdot \alpha - \log{\Pr[C(x')=o]}\cdot(\alpha-1)}$$
for each $o \in O$ to determine which of the two cases we are in:
$$1) \sum_{o \in O}{2^{\log{\Pr[C(x)=o]}\cdot \alpha - \log{\Pr[C(x')=o]}\cdot(\alpha-1)}} \leq 2^{\rho \alpha (\alpha-1)}, \text{ or}$$
$$2) \sum_{o \in O}{2^{\log{\Pr[C(x)=o]}\cdot \alpha - \log{\Pr[C(x')=o]}\cdot(\alpha-1)}} \geq 2^{\rho \alpha (\alpha-1) + \frac{\alpha-1}{2^{\eta}}}.$$

As in our previous $\PSPACE$-algorithms, we cannot explicitly store the value of each partial sum in the memory, as each has exponential length. So below, when we say that we ``compute'' an exponentially long number, we mean that we provide a polynomial space procedure that computes every bit of the number if its index is at most $2^{q(n)}$, where $q(n)$ is a fixed polynomial.

Again as in the algorithm for {\sc Decide ($\eps$,$\delta$)} in Section~\ref{pspace-approx-dp-algo}, our goal is to compute a sum of $2^n$ numbers. But now each of this numbers have more complicated form $2^{\log{\frac{\Pr_a}{\Pr_b}} \cdot \alpha + \log{\Pr_b}}$, where $\Pr_a$ and $\Pr_b$ are exponentially long numbers, and $\alpha$ is a dyadic rational number of length at most $n$. For each element of the sum we need only to compute the $2^{3n}+\frac{\alpha-1}{2^\eta}+n$  most significant bits to guarantee that we underestimate each element of the sum by at most $2^{-(\alpha-1)/{2^{\eta}} - n}$. This yields an overall underestimate of the sum of all $2^n$ elements is at most $2^{-(\alpha-1)/{2^{\eta}}}$. Therefore, we underestimate the logarithm of this sum by at most $-(\alpha-1)/{2^{\eta}}$. Hence we always distinguish case 1 from case 2, by performing a comparison\footnote{Exponentially long numbers can be compared in polynomial space by finding the most significant bit on which they differ.} to determine whether $2^{\log{\frac{\Pr_a}{\Pr_b}} \cdot \alpha + \log{\Pr_b}}$ is greater than $2^{\rho \alpha (\alpha-1) + \frac{\alpha-1}{2^{\eta}}}$ or smaller than $ 2^{\rho \alpha (\alpha-1)}$.

All that remains is to show that  we can compute (i.e., give implicit access to each bit of) each term of the form $2^{\log{\frac{\Pr_a}{\Pr_b}} \cdot \alpha + \log{\Pr_b}}$ in polynomial space. Every bit of the integer logarithm can be computed using uniform circuits of polylogarithmic depth~\cite{Reif86} in the length of the input integer and the index of the requested bit, so we can compute numbers of the form $\log(\Pr_a/\Pr_b)$ and $\log{\Pr_b}$. Further, using space-efficient algorithms for addition and multiplication of exponentially long numbers, as in Theorem \ref{pspace-pure-dp-algo}, we can compute $\log{\frac{\Pr_a}{\Pr_b}} \cdot \alpha + \log{\Pr_b}$. Therefore, we can implicitly compute $\log{\frac{\Pr_a}{\Pr_b}} \cdot \alpha + \log{\Pr_b}$ with a polynomial space algorithm. Finally, the exponential function has a representation as power series, and such power series can be computed by uniform families of logarithmic-depth circuits \cite[Corollary 2.2]{Reif86}. So we can exponentiate $2$ to a dyadic rational degree using an algorithm that runs in space logarithmic in the length of the exponent. Combining polynomial space computations we obtain a polynomial space algorithm for computing $2^{\log{\frac{\Pr_a}{\Pr_b}} \cdot \alpha + \log{\Pr_b}}$. 
\end{proof}

To show $\PSPACE$-hardness, we use Theorem~\ref{lossy_imply_approx_hardness}, and the following fact to reduce from {\sc Distinguish $(\eps,\delta)$-DP} to {\sc Gap-Renyi-DP}:

\begin{theorem}[\cite{Mironov17}]\label{th:rdp-implies-approx-dp}
If $C$ is an $(\alpha,\rho \alpha)$-RDP program, it also satisfies $(\rho \alpha + \frac{\log{1/\delta}}{\alpha - 1}, \delta)$-differential privacy for any $\delta \in (0,1)$.
\end{theorem}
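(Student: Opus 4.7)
The plan is to prove this via the standard \emph{privacy loss random variable} argument: express the RDP condition as a moment bound, apply Markov's inequality to obtain a tail bound on the privacy loss, and then split an arbitrary output event into a ``typical'' part (where the likelihood ratio is bounded) and an ``atypical'' part (which has mass at most $\delta$ under $P$).

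Concretely, fix a pair of neighboring inputs $x, x'$ and write $P(o) = \Pr[C(x)=o]$ and $Q(o)=\Pr[C(x')=o]$. Define the privacy loss $L(o) = \log (P(o)/Q(o))$ for $o$ in the support of $P$. The RDP hypothesis is equivalent to
\[
\mathbb{E}_{o \sim P}\!\left[2^{(\alpha-1)L(o)}\right] \;=\; \sum_o \frac{P(o)^\alpha}{Q(o)^{\alpha-1}} \;\le\; 2^{\rho\alpha(\alpha-1)}.
\]
Setting $\eps' = \rho\alpha + \frac{\log(1/\delta)}{\alpha-1}$, Markov's inequality applied to the nonnegative random variable $2^{(\alpha-1)L}$ yields
\[
\Pr_{o \sim P}[L(o) > \eps'] \;\le\; 2^{-(\alpha-1)\eps'} \cdot 2^{\rho\alpha(\alpha-1)} \;=\; 2^{-\log(1/\delta)} \;=\; \delta,
\]
so the ``bad'' set $B = \{o : P(o) > 2^{\eps'} Q(o)\}$ satisfies $P(B) \le \delta$.

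Then for any output set $O$ I split $P(O) = P(O \cap B) + P(O \setminus B)$. The first term is bounded by $P(B) \le \delta$. For the second term, by the definition of $B$ every $o \in O \setminus B$ satisfies $P(o) \le 2^{\eps'} Q(o)$, so $P(O \setminus B) \le 2^{\eps'} Q(O \setminus B) \le 2^{\eps'} Q(O)$. Combining gives $P(O) \le 2^{\eps'} Q(O) + \delta$, which is exactly $(\eps', \delta)$-DP for this pair of neighbors; since the pair was arbitrary, $C$ is $(\eps',\delta)$-DP.

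The only real subtlety is bookkeeping of the logarithm base and handling outputs $o$ with $Q(o)=0$: in that case the RDP bound forces $P(o)=0$ as well (otherwise the RDP sum is infinite), so such outputs can be safely ignored in the definition of $L$ and do not affect the split above. I do not expect any genuine obstacle beyond these conventions; the heart of the argument is the one-line moment-plus-Markov calculation that converts the $(\alpha-1)$-th moment bound on $2^L$ into a tail bound at threshold $\eps'$, and the choice of $\eps'$ is dictated exactly by requiring the Markov bound to equal $\delta$.
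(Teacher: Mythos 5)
The paper cites this result to Mironov's 2017 paper and does not include its own proof, so there is no internal proof to compare against. Your argument is precisely the standard one from Mironov's paper: rewrite the RDP bound as an $(\alpha-1)$-th exponential moment bound on the privacy loss under $P$, apply Markov at threshold $\eps' = \rho\alpha + \frac{\log(1/\delta)}{\alpha-1}$ to get $\Pr_{o\sim P}[L(o) > \eps'] \le \delta$, and then split any event $O$ into the bad set $B = \{o : L(o) > \eps'\}$ (mass at most $\delta$ under $P$) and its complement (where the pointwise $e^{\eps'}$-ratio holds). The only caveat, which you already flag, is the logarithm base: the paper's Definitions~\ref{def:differential-privacy} and~\ref{def-Renyi-div} are most naturally read with natural log (so the DP bound uses $e^\eps$), and your argument written with base-$2$ logs and powers of $2$ should be read with $e$ in place of $2$ for exact consistency with the statement; the structure of the calculation is unchanged. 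Your handling of $Q(o)=0$ is also correct: finiteness of the R\'enyi divergence forces $P(o)=0$ there, so those outcomes contribute nothing. The proof is sound.
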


Combining this with Theorem~\ref{lossy_imply_approx_hardness}, which states that it is $\PSPACE$-hard to determine whether a BPWhile program is $(0, 0)$-differentially private or not $(\eps, \delta)$-differentially private, we obtain:

\begin{theorem}\label{th:RDP-pspace-hard}
{\sc Gap-R\'enyi-DP} is $\PSPACE$-hard.
\end{theorem}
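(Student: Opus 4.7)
The plan is to reduce from {\sc Distinguish $(\eps,\delta)$-DP}, which by the corollary following Lemma~\ref{lossy_imply_approx_hardness} is $\PSPACE$-hard for any fixed rational constants $\eps, \delta \in (0,1)$. Given a program $C$ from the source problem, the reduction will simply output the {\sc Gap-R\'enyi-DP} instance $(C, \alpha, \rho, \eta)$, where $\alpha, \rho, \eta$ are constants (depending only on $\eps, \delta$) chosen so that Theorem~\ref{th:rdp-implies-approx-dp} bridges the two notions. Since $C$ is passed through unchanged and the parameters are constants, the reduction runs in linear time.

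For the parameter choice, I would fix a dyadic $\alpha > 1$ large enough that $\log(1/\delta)/(\alpha - 1) \le \eps/2$, then a dyadic $\rho' > 0$ with $\rho' \alpha = \eps/2$, so that $\rho' \alpha + \log(1/\delta)/(\alpha - 1) \le \eps$. I would then pick an integer $\eta$ with $2^{-\eta} \le \rho'\alpha/2$ and set $\rho = \rho' - 2^{-\eta}/\alpha$, so that $\rho\alpha + 2^{-\eta} = \rho'\alpha$ and $\rho > 0$. (If a chosen value is not already dyadic, one can round to the nearest dyadic rational of constant bit-length; these are all constants independent of $|C|$.)

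For the yes-case, if $C$ is $(0,0)$-DP then for every neighboring pair $x, x'$ the distributions of $C(x)$ and $C(x')$ coincide (the definition, applied with $\eps = \delta = 0$ symmetrically in $x, x'$, forces $\Pr[C(x) \in O] = \Pr[C(x') \in O]$ for every $O$). Hence $D_\alpha(C(x)\|C(x')) = 0 \le \rho\alpha$ and $(C, \alpha, \rho, \eta)$ is a yes-instance. For the no-case, if $C$ fails to be $(\eps, \delta)$-DP then by the contrapositive of Theorem~\ref{th:rdp-implies-approx-dp}, together with our choice $\rho'\alpha + \log(1/\delta)/(\alpha - 1) \le \eps$, the program $C$ cannot be $(\alpha, \rho'\alpha)$-RDP; so some neighboring pair satisfies $D_\alpha(C(x)\|C(x')) > \rho'\alpha = \rho\alpha + 2^{-\eta}$, yielding a no-instance.

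The main (very mild) obstacle is the parameter bookkeeping: ensuring simultaneously that $\alpha, \rho$ can be taken dyadic, that the additive slack $2^{-\eta}$ fits strictly inside the bound $\rho'\alpha - \rho\alpha$, and that $\rho$ remains positive. Once these constants are fixed, correctness follows immediately from Theorem~\ref{th:rdp-implies-approx-dp} and the contrapositive direction, and $\PSPACE$-hardness of {\sc Gap-R\'enyi-DP} is inherited from that of {\sc Distinguish $(\eps,\delta)$-DP}.
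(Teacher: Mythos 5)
Your proposal is correct and follows essentially the same route as the paper: a linear-time reduction from {\sc Distinguish $(\eps,\delta)$-DP} passing $C$ through unchanged with constant parameters $(\alpha,\rho,\eta)$ chosen so that $\rho\alpha + 2^{-\eta} + \log(1/\delta)/(\alpha-1)$ is bounded by $\eps$, using Theorem~\ref{th:rdp-implies-approx-dp} in contrapositive form for the no-case and the fact that identical output distributions give $D_\alpha = 0$ for the yes-case. The only difference is cosmetic: the paper posits the existence of suitable dyadic $\alpha,\rho,\eta$ satisfying the strict inequality $\rho\alpha + \log(1/\delta)/(\alpha-1) + 2^{-\eta} < \eps$, whereas you give a concrete (and slightly more careful) recipe for constructing them; either presentation suffices.
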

\begin{proof}
Fix two dyadic rational numbers $\eps, \delta \in (0,1)$. Let $\eta$, $\rho$ and $\alpha$ be positive numbers with finite binary representations such that 
$$0 < \rho \alpha + \frac{\log{(1/\delta)}}{\alpha-1} + \frac{1}{2^{\eta}} < \eps.$$ 
To reduce from {\sc Distinguish $(\eps,\delta)$-DP} to the {\sc Gap-R\'enyi-DP} problem, we map an instance $C$ of {\sc Distinguish $(\eps,\delta)$-DP} to the instance $(C,\alpha,\rho, \eta)$ in deterministic linear time.

To show correctness of this reduction, first consider the case where $C$ is a yes-instance of {\sc Distinguish $(\eps,\delta)$-DP}. That means that $C$ is $(0,0)$-DP. Then the distributions on the outputs of $C$ are identical for every pair of neighboring inputs. Hence $C$ is also $(\alpha,\rho \alpha)$-R\'enyi-DP, so $(C,\alpha,\rho, \eta)$ is a yes-instance of {\sc Gap-R\'enyi-DP}. 

Now, consider the case where $C$ is a no-instance of {\sc Distinguish $(\eps,\delta)$-DP}. We need to show that $(C,\alpha,\rho, \eta)$ is a no-instance of {\sc Gap-R\'enyi-DP}.
We do this by contraposition: If $(C, \alpha, \rho, \eta)$ is not a no-instance, then  for all neighboring inputs $x,x'$ it holds that $D_{\alpha }(C(x)\|C(x')) \leq \rho\alpha + \frac{1}{2^\eta}$. 
Then for $\rho'=\rho+\frac{1}{\alpha 2^\eta}$, we have that $C$ is $(\alpha,\rho'\alpha)$-R\'enyi-DP. 
Then by Theorem~\ref{th:rdp-implies-approx-dp}, $C$ is $(\rho' \alpha + \frac{\log{(1/\delta)}}{\alpha-1},\delta)$-DP. But by our choice of the parameters, $\rho' \alpha + \frac{\log{(1/\delta)}}{\alpha-1}= \rho \alpha + \frac{1}{2^\eta} + \frac{\log{(1/\delta)}}{\alpha-1} < \eps$, hence $C$ is $(\eps,\delta)$-DP. This implies that $C$ is not a no-instance of {\sc Distinguish $(\eps,\delta)$-DP}.

Since we showed in Theorem~\ref{lossy_imply_approx_hardness} we showed that {\sc Distinguish $(\eps,\delta)$-DP} is $\PSPACE$-hard, it follows that {\sc Gap-R\'enyi-DP} is $\PSPACE$-hard.
\end{proof}

Combining the results of Theorem~\ref{pspace-Renyi-dp-algo} and Theorem~\ref{th:RDP-pspace-hard} we obtain:
\begin{corollary}
{\sc Gap-R\'enyi-DP} is $\PSPACE$-complete.
\end{corollary}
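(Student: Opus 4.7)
The plan is very short: the corollary is an immediate consequence of the two results that precede it, so I would just invoke them explicitly. Recall that a problem is $\PSPACE$-complete exactly when it lies in $\PSPACE$ and every language in $\PSPACE$ reduces to it in polynomial time.

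First I would cite Theorem~\ref{pspace-Renyi-dp-algo}, which gives a deterministic algorithm that decides {\sc Gap-R\'enyi-DP} using space polynomial in the size of the instance; this establishes the upper bound
\[
\textsc{Gap-R\'enyi-DP}\in\PSPACE.
\]
Second, I would cite Theorem~\ref{th:RDP-pspace-hard}, which provides the matching lower bound by exhibiting a polynomial-time Karp reduction from {\sc Distinguish $(\eps,\delta)$-DP} (already known to be $\PSPACE$-hard by Lemma~\ref{lossy_imply_approx_hardness}) to {\sc Gap-R\'enyi-DP}. Combining both statements yields that {\sc Gap-R\'enyi-DP} is simultaneously in $\PSPACE$ and $\PSPACE$-hard, hence $\PSPACE$-complete, completing the proof.

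There is no real obstacle here since all the work has been done in the two cited theorems; the only care needed is to note that the parameters $\alpha$, $\rho$ and the precision $\eta$ used in the reduction of Theorem~\ref{th:RDP-pspace-hard} are chosen so that the resulting instance is of polynomial size, which is precisely the regime in which the algorithm of Theorem~\ref{pspace-Renyi-dp-algo} runs in polynomial space. Thus the membership and hardness results apply to the same problem under the same input-size convention, and the conclusion follows.
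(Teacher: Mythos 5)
Your proposal matches the paper's argument exactly: the paper states the corollary as an immediate consequence of combining Theorem~\ref{pspace-Renyi-dp-algo} (PSPACE membership) and Theorem~\ref{th:RDP-pspace-hard} (PSPACE-hardness). Your additional remark about the parameters being of polynomial size is a reasonable sanity check, though the paper treats it as already handled within the cited theorems.
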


\subsection{Concentrated Differential Privacy}\label{sec:pspace-cdp-algo}
\begin{definition}\cite{BunSteinke16,DworkRothblum16}
A program $C$ is $\rho$-Concentrated-DP if for every neighboring inputs $x,x'$ and every $\alpha \in (1,+\infty)$
$$D_{\alpha }(C(x)\|C(x'))\leq \rho \alpha.$$
\end{definition}

As in Section \ref{sec:pspace-rdp-algo} we consider a gapped version of the problem for an integer precision parameter $\eta$ provided as input.

\begin{definition}
An instance of the {\sc Gap-Concentrated-DP} problem  $(C, \rho, \eta)$ consists of a BPWhile program $C$, a dyadic rational number $\rho$, and a binary integer precision parameter $\eta$. The goal is to distinguish between the following two cases:
\begin{enumerate}
  \item Yes-instances: for all neighboring inputs $x,x'$ and for all $\alpha \in (1,+\infty)$, we have
$D_{\alpha }(C(x)\|C(x'))\leq \rho\alpha,$ 
  \item No-instances: there exists a pair of neighboring inputs $x,x'$ and $\alpha \in (1,+\infty)$ such that 
$D_{\alpha }(C(x)\|C(x')) \geq \rho\alpha + \frac{1}{2^{\eta}}.$
\end{enumerate}
\end{definition}

In order to verify whether an instance $(C,\rho,\eta)$ is a yes-instance of {\sc Gap-Concentrated-DP} we should verify whether the inequality $D_\alpha(C(x) \| C(x')) \leq \rho \alpha$ holds not only for all pairs $(x,x')$, but also for all $\alpha > 1$. This is equivalent to verifying that $C$ is $(\alpha,\rho\alpha)$-R\'enyi-DP for every $\alpha > 1$. But as there is an unbounded continuum of possible $\alpha$ to consider, we do not immediately obtain an algorithm by attempting to exhaustively check them. The following lemma shows that to solve the gapped version of the problem, we need only to consider finitely many $\alpha$ within a finite range.

\begin{lemma}\label{lemma:conc-DP-bounded-alpha}
Suppose $(C, \rho, \eta)$ is a no-instance of the {\sc Gap-Concentrated-DP} problem.  Then there exists a polynomial $p(n)$, neighboring inputs $x, x'$, and $\alpha \in (1, 1 + 2^{p(n)} / \rho)$ an integer multiple of $2^{-\eta-1} / \rho$ such that $D_\alpha(C(x) \| C(x')) \geq \rho \alpha + \frac{1}{2^{\eta+1}}$.
\end{lemma}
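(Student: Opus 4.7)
The plan is to start from any witness $\alpha^\ast > 1$ of the no-instance---i.e., neighboring $x,x'$ with $D_{\alpha^\ast}(C(x)\|C(x')) \geq \rho\alpha^\ast + 2^{-\eta}$---and to snap $\alpha^\ast$ both onto the grid $\{k \cdot 2^{-\eta-1}/\rho : k \in \mathbb{Z}_{>0}\}$ and into the range $(1, 1 + 2^{p(n)}/\rho)$, paying at most half of the slack $2^{-\eta}$ to do so.

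\emph{Step 1: bounding $\alpha^\ast$.} I would use two standard facts about R\'enyi divergence: (i) $\alpha \mapsto D_\alpha(P\|Q)$ is monotonically non-decreasing on $(1,\infty)$, and (ii) $D_\alpha(P\|Q) \leq D_\infty(P\|Q) := \log \max_o P(o)/Q(o)$. In our setting, each non-zero output probability of $C$ lies in $[1/2^{2^{q(n)}}, 1]$ for some polynomial $q$, because the Markov chain constructed in Theorem~\ref{thm:pure-dp-alg} has at most exponentially many states with rational transition weights, so Cramer's rule applied to the hitting-probability linear system of Section~\ref{expspace-pure-dp-algo} produces ratios of integer determinants of magnitude at most $2^{2^{O(n)}}$. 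After the preliminary zero-probability check embedded in the algorithm of Theorem~\ref{pspace-Renyi-dp-algo} (which directly returns ``no-instance'' whenever some outcome has positive mass under one input and zero under its neighbor), every surviving ratio $\Pr[C(x)=o]/\Pr[C(x')=o]$ is at most $2^{2^{q(n)}}$, giving $D_\infty(C(x)\|C(x')) \leq 2^{q(n)}$. Combined with $\rho\alpha^\ast \leq D_{\alpha^\ast} \leq D_\infty$, this forces $\alpha^\ast \leq 2^{q(n)}/\rho$.

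\emph{Step 2: rounding up to the grid.} Set $\Delta := 2^{-\eta-1}/\rho$ and let $\alpha$ be the smallest positive integer multiple of $\Delta$ with $\alpha \geq \alpha^\ast$; in particular $\alpha - \alpha^\ast \in [0,\Delta)$. Monotonicity gives $D_\alpha(C(x)\|C(x')) \geq D_{\alpha^\ast}(C(x)\|C(x')) \geq \rho\alpha^\ast + 2^{-\eta}$, and therefore
\[
D_\alpha(C(x)\|C(x')) - \rho\alpha \;\geq\; 2^{-\eta} - \rho(\alpha - \alpha^\ast) \;\geq\; 2^{-\eta} - \rho\Delta \;=\; 2^{-\eta-1}.
\]
Since $\alpha^\ast > 1$ the rounded $\alpha$ is also $> 1$, and $\alpha \leq \alpha^\ast + \Delta \leq 2^{q(n)}/\rho + 2^{-\eta-1}/\rho < 1 + 2^{q(n)+1}/\rho$, so taking $p(n) := q(n)+1$ finishes the argument. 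In the degenerate case flagged by the zero-probability check, any $\alpha > 1$ already makes $D_\alpha(C(x)\|C(x')) = +\infty$, so it suffices to take $\alpha$ to be the smallest positive integer multiple of $\Delta$ exceeding $1$, which lies in $(1,1+\Delta] \subset (1, 1 + 2^{p(n)}/\rho)$.

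The only delicate point is the doubly-exponential description-length bound on the output probabilities, which is what pins $D_\infty$ to a single-exponential quantity and yields the polynomial $p(n)$ in the bound on $\alpha$. Once that is in hand, the rest is a clean combination of monotonicity of $D_\alpha$ in $\alpha$ with a grid spacing $\Delta$ chosen precisely so that the loss $\rho\Delta$ equals half of the original slack $2^{-\eta}$.
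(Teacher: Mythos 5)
Your proof is correct and takes essentially the same route as the paper's: both bound the witness $\alpha$ by a single-exponential quantity using the doubly-exponential precision of the output probabilities (you via $D_\alpha \le D_\infty \le 2^{q(n)}$, the paper via the slightly looser $D_\alpha \le \alpha\log m/(\alpha-1)$), and both snap $\alpha$ onto the grid of spacing $2^{-\eta-1}/\rho$ by exploiting the monotonicity of $\alpha\mapsto D_\alpha$, losing at most $\rho\Delta = 2^{-\eta-1}$ of the slack. The only presentational difference is that the paper phrases the grid-snapping step as a contrapositive claim, whereas you round the witness $\alpha^\ast$ up directly.
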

\begin{proof}
The problem we are interested in is as follows. Given implicit descriptions of two finite probability distributions $d_1(i)$ and $d_2(i)$, where each probability is discretized to $1/2^{2^{p(n)}}$, and a rational parameter $\rho$, determine whether
$$\frac{1}{\alpha} D_{\alpha}(d_1 \| d_2) = \frac{1}{\alpha(\alpha-1)} \log \sum_{i} (d_1(i))^{\alpha} (d_2(i))^{1-\alpha} \le \rho,$$
for all dyadic rational $\alpha \in (1, \infty)$ with precision $1/2^{p(|x|)}$, or whether for at least one dyadic rational $\alpha \in (1, \infty)$ with  precision $1/2^{p(|x|)}$ it holds that $$\frac{1}{(\alpha-1)} \log \sum_{i} (d_1(i))^{\alpha} (d_2(i))^{1-\alpha} \geq \rho\alpha + \frac{1}{2^{\eta}}.$$
Let $m=2^{2^{p(n)}}$, so each probability in $d_1(i)$ and $d_2(i)$ are discretized to $1/m$. We claim that it suffices to check this condition for all $\alpha < 1 + \log{m} /\rho$. Either 
\begin{enumerate}
     \item There exists $i$ in the probability space such that $d_1(i) > 0$ and $d_2(i) = 0$, in which case $D_{\alpha}(d_1 \| d_2)$ is infinite for every $\alpha > 1$; or
     \item For every outcome $i$, the value $m$ is an upper bound on the ratio $d_1(i) / d_2(i)$. In this case, the quantity $\frac{1}{\alpha} D_{\alpha}(d_1 \| d_2)$ we are interested in is at most 
     \begin{align*}
     \frac{1}{\alpha(\alpha-1)} \log{\sum_{i} m^{\alpha} \cdot d_2(i)} \le&
     \frac{1}{\alpha(\alpha-1)} \log (m^{\alpha}) \le \frac{\log{m}}{\alpha - 1}&,
     \end{align*}
 which is at most $\rho$ for $\alpha \ge 1 + \frac{\log{m}}{\rho}$.
\end{enumerate}

In both cases we need to check values of $\alpha$ within the interval $(1,1 + \log{m/\rho})$.

That still leaves us with the infinite number of values of $\alpha$ we need to check. To finish the proof of the lemma we show that it is enough to consider values of $\alpha$ discretized to $\frac{2^{-\eta}}{\rho}$:

\begin{claim}
Fix distributions $P$ and $Q$ and $\rho > 0$. If $D_{\alpha}(P \| Q) \le \rho \alpha + 2^{-\eta}$ for every $\alpha > 1$ that is discretized to an integer multiple of $2^{-\eta} / \rho$, then $D_{\alpha}(P \| Q) < \rho \alpha + 2^{-\eta + 1}$ for every $\alpha > 1$.
\end{claim}
\begin{proof}
It suffices to show that if $0 < \alpha_1 < \alpha_2 = \alpha_1 + 2^{-\eta}/\rho$ are such that $D_{\alpha_1}(P \| Q) \leq \rho \alpha_1 + 2^{-\eta}$ and $D_{\alpha_2}(P \| Q) \leq \rho \alpha_2 + 2^{-\eta}$, then for every $\alpha'$ with $\alpha_1 < \alpha' < \alpha_2$ we have $D_{\alpha'}(P \| Q) < \rho \alpha' +2^{-\eta+1}$. 
As the R\'enyi divergence between two distributions increases monotonically as a function of $\alpha$, we have
\begin{align*}
D_{\alpha'}(P \| Q) \leq D_{\alpha_2}(P \| Q) &\leq \rho \alpha_2 + 2^{-\eta} \\
= \rho (\alpha' + (\alpha_2 - \alpha')) + 2^{-\eta}
&= \rho \alpha' + \rho(\alpha_2 - \alpha') + 2^{-\eta}.
\end{align*}
From the facts that $\alpha_2-\alpha_1=2^{-\eta}/\rho$ and $\alpha_1 < \alpha' < \alpha_2$ we get $(\alpha_2-\alpha') < 2^{-\eta}/\rho$. Hence $D_{\alpha'}(P \| Q) < \rho \alpha' + 2^{-\eta+1}$.
\end{proof}

This completes the proof of the lemma, as we showed that we can consider values of $\alpha$ discretized to $2^{-\eta}/\rho$ in a bounded range.
\end{proof}

Combining this lemma with the algorithm from Theorem \ref{pspace-Renyi-dp-algo} we get the following theorem:

\begin{theorem}\label{pspace-concentrated-dp-algo}
The gap problem {\sc Gap-Concentrated-DP} is solvable by a deterministic algorithm using space polynomial in the size of the instance.
\end{theorem}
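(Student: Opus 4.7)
The plan is to combine Lemma~\ref{lemma:conc-DP-bounded-alpha} with the {\sc Gap-R\'enyi-DP} algorithm of Theorem~\ref{pspace-Renyi-dp-algo}. The lemma collapses the universal quantifier over $\alpha \in (1,\infty)$ in the definition of concentrated DP to a finite (albeit exponentially large) set of discretized values lying in a bounded interval; the R\'enyi algorithm handles the remaining task of approximately checking a divergence bound at each individual $\alpha$.

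Concretely, I would enumerate all $\alpha$ in the interval $(1,\, 1+2^{p(n)}/\rho)$ that are integer multiples of $2^{-\eta-1}/\rho$, where $p(n)$ is the polynomial guaranteed by Lemma~\ref{lemma:conc-DP-bounded-alpha}. For each such $\alpha$ and each pair of neighboring inputs $(x,x')$, the algorithm invokes the procedure of Theorem~\ref{pspace-Renyi-dp-algo} on the instance $(C,\alpha,\rho,\eta+1)$, which (under its promise) distinguishes $D_\alpha(C(x)\|C(x')) \le \rho\alpha$ from $D_\alpha(C(x)\|C(x')) \ge \rho\alpha + 2^{-(\eta+1)}$. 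If any call returns the latter, the overall algorithm outputs ``no-instance''; if all calls return the former, it outputs ``yes-instance''.

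Correctness follows from Lemma~\ref{lemma:conc-DP-bounded-alpha} by contraposition. If $(C,\rho,\eta)$ is a yes-instance then $D_\alpha(C(x)\|C(x')) \le \rho\alpha$ holds for every $\alpha > 1$, hence in particular for every enumerated value, so every sub-call accepts. Conversely, if $(C,\rho,\eta)$ is a no-instance, the lemma supplies a discretized $\alpha$ in the enumerated range and a neighboring pair witnessing $D_\alpha(C(x)\|C(x')) \ge \rho\alpha + 2^{-(\eta+1)}$, which the R\'enyi sub-routine called with precision parameter $\eta+1$ will detect.

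For the space bound, each enumerated $\alpha$ has bit-length polynomial in $n$ and $\eta$: its integer part is bounded by $2^{p(n)}/\rho$ and its denominator $\rho\cdot 2^{\eta+1}$ has polynomial bit-length, so a single polynomial-space counter suffices to iterate over all such $\alpha$, and a second polynomial-space counter enumerates neighboring input pairs. Each invocation of the R\'enyi algorithm runs in $\poly(n,\eta)$ space by Theorem~\ref{pspace-Renyi-dp-algo} and reuses its workspace across iterations. The main subtlety I anticipate is aligning the two promise gaps: invoking the R\'enyi sub-routine with precision parameter $\eta+1$ rather than $\eta$ is what makes its gap match the $2^{-(\eta+1)}$ slack produced by Lemma~\ref{lemma:conc-DP-bounded-alpha}, so that a definitive verdict on the CDP instance can be read off from the sub-verdicts.
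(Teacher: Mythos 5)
Your proof is correct and follows essentially the same approach as the paper: enumerate the discretized $\alpha$ values in the bounded range supplied by Lemma~\ref{lemma:conc-DP-bounded-alpha}, invoke the {\sc Gap-R\'enyi-DP} procedure of Theorem~\ref{pspace-Renyi-dp-algo} for each, and reuse workspace across iterations to stay in polynomial space. You are in fact slightly more careful than the paper's appendix proof in passing precision parameter $\eta+1$ to the R\'enyi subroutine so that its gap exactly matches the $2^{-(\eta+1)}$ slack guaranteed by Lemma~\ref{lemma:conc-DP-bounded-alpha}.
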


Similarly to the proof of Theorem~\ref{th:RDP-pspace-hard} we can use the fact that concentrated DP implies approximate DP to give a reduction from {\sc Distinguish $(\eps,\delta)$-DP} to {\sc Gap-Concentrated-DP}.
\begin{theorem}[\cite{BunSteinke16}]\label{th:cdp-implies-approx-dp}
If $C$ is an $\rho$-CDP program, it also satisfies $(\rho + 2 \sqrt{\rho\log{(1/\delta)}}, \delta)$-differential privacy for any $\delta \in (0,1)$.
\end{theorem}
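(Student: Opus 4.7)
The plan is to prove this by analyzing the privacy loss random variable (PLRV) and exploiting the fact that $\rho$-CDP gives a uniform bound on its moment generating function. For neighboring inputs $x, x'$, define the PLRV $Z = \log(\Pr[C(x)=o]/\Pr[C(x')=o])$ where $o$ is drawn from $C(x)$. The classical reformulation of approximate DP says that $(\eps, \delta)$-DP is equivalent (up to loss depending on measurability issues that are trivial in our discrete setting) to $\Pr[Z > \eps] \leq \delta$ for all neighboring $x, x'$. So it suffices to show that when $\eps = \rho + 2\sqrt{\rho \log(1/\delta)}$, we have $\Pr[Z > \eps] \leq \delta$.

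Next, I would observe that by Definition~\ref{def-Renyi-div},
\[
D_\alpha(C(x) \| C(x')) = \frac{1}{\alpha - 1} \log \E[e^{(\alpha-1) Z}],
\]
so the $\rho$-CDP hypothesis $D_\alpha(C(x) \| C(x')) \leq \rho \alpha$ for all $\alpha > 1$ becomes a bound on the MGF:
\[
\E[e^{(\alpha-1) Z}] \leq e^{(\alpha-1) \rho \alpha}.
\]
Applying Markov's inequality to the positive random variable $e^{(\alpha-1) Z}$ and threshold $e^{(\alpha-1)\eps}$ gives
\[
\Pr[Z > \eps] \leq e^{(\alpha-1)(\rho \alpha - \eps)}.
\]

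Now I would optimize over the free parameter $\alpha > 1$. The exponent $(\alpha-1)(\rho \alpha - \eps) = \rho \alpha^2 - (\rho + \eps)\alpha + \eps$ is a quadratic in $\alpha$ minimized at $\alpha^* = (\rho + \eps)/(2\rho)$, where the exponent evaluates to $\eps - (\rho + \eps)^2/(4\rho)$. Assuming $\eps \geq \rho$ so that $\alpha^* \geq 1$, we therefore get $\Pr[Z > \eps] \leq \exp\!\bigl(\eps - (\rho+\eps)^2/(4\rho)\bigr)$.

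Finally, I would solve for $\eps$ so that this upper bound is at most $\delta$. The inequality $\eps - (\rho+\eps)^2/(4\rho) \leq \log \delta$ simplifies, after multiplying through by $4\rho$ and rearranging, to $(\eps - \rho)^2 \geq 4\rho \log(1/\delta)$, i.e.\ $\eps \geq \rho + 2\sqrt{\rho \log(1/\delta)}$. So taking $\eps$ equal to this value yields $\Pr[Z > \eps] \leq \delta$, which is exactly $(\eps, \delta)$-DP for the stated parameters. The main subtlety is verifying that the optimizing $\alpha^*$ indeed lies in $(1, \infty)$ so that the CDP hypothesis applies---this follows from $\eps \geq \rho$, which is automatic since $2\sqrt{\rho \log(1/\delta)} \geq 0$ for $\delta \in (0,1)$; a minor boundary case is $\delta$ very close to $1$, where one should check that the inequality $\eps \geq \rho$ is strict enough to keep $\alpha^* > 1$, but this is handled by taking a limit $\alpha \to 1^+$ in the Markov bound.
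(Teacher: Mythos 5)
The paper does not prove this statement; it imports it directly from Bun and Steinke \cite{BunSteinke16}, so there is no in-paper proof to compare against. Your argument is the standard one used in that reference: rewrite the R\'enyi divergence bound as a moment generating function bound on the privacy-loss random variable $Z$, apply Markov's inequality to $e^{(\alpha-1)Z}$, and optimize the quadratic exponent over $\alpha$; the algebra ($\alpha^* = (\rho+\eps)/(2\rho)$, exponent value $\eps - (\rho+\eps)^2/(4\rho)$, and the rearrangement to $(\eps-\rho)^2 \ge 4\rho\log(1/\delta)$) all check out, and you correctly use only the sufficient direction ``$\Pr[Z>\eps]\le\delta$ implies $(\eps,\delta)$-DP.'' The boundary worry about $\alpha^* \to 1^+$ as $\delta\to 1^-$ is unnecessary since $\delta<1$ strictly forces $\eps>\rho$ strictly, but it does no harm.
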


\begin{theorem}\label{th:CDP-pspace-hard}
{\sc Gap-Concentrated-DP} is $\PSPACE$-hard.
\end{theorem}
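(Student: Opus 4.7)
The plan is to mirror the proof of Theorem~\ref{th:RDP-pspace-hard}, again reducing from {\sc Distinguish $(\eps,\delta)$-DP} (which is $\PSPACE$-hard by Theorem~\ref{lossy_imply_approx_hardness}) and using the concentrated-to-approximate-DP conversion from Theorem~\ref{th:cdp-implies-approx-dp} to close the loop.

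Fix the dyadic constants $\eps,\delta \in (0,1)$ from {\sc Distinguish $(\eps,\delta)$-DP}. I would pick a dyadic rational $\rho > 0$ and a positive integer $\eta$ small enough that, setting $\rho^{*} \eqdef \rho + 2^{-\eta}$, we have
\[
\rho^{*} + 2\sqrt{\rho^{*}\log(1/\delta)} \;<\; \eps.
\]
This is possible because the left side tends to $0$ as $\rho,2^{-\eta}\to 0$, and both $\rho$ and $\eta$ can be chosen independently of the input program. The reduction sends an instance $C$ of {\sc Distinguish $(\eps,\delta)$-DP} to the instance $(C,\rho,\eta)$ of {\sc Gap-Concentrated-DP} in linear time.

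For correctness on yes-instances, if $C$ is $(0,0)$-DP then for every pair of neighboring inputs $x,x'$ the output distributions are identical, so $D_{\alpha}(C(x)\|C(x'))=0 \le \rho\alpha$ for every $\alpha > 1$; hence $(C,\rho,\eta)$ is a yes-instance of {\sc Gap-Concentrated-DP}. For no-instances I would argue by contrapositive: if $(C,\rho,\eta)$ is \emph{not} a no-instance of {\sc Gap-Concentrated-DP}, then for all neighboring $x,x'$ and all $\alpha > 1$,
\[
D_{\alpha}(C(x)\|C(x')) \;\le\; \rho\alpha + 2^{-\eta} \;\le\; (\rho + 2^{-\eta})\alpha \;=\; \rho^{*}\alpha,
\]
where the second inequality uses $\alpha > 1$. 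Thus $C$ is $\rho^{*}$-CDP, and Theorem~\ref{th:cdp-implies-approx-dp} yields that $C$ is $\bigl(\rho^{*} + 2\sqrt{\rho^{*}\log(1/\delta)},\,\delta\bigr)$-DP, which by the choice of $\rho,\eta$ implies $C$ is $(\eps,\delta)$-DP. Hence $C$ is not a no-instance of {\sc Distinguish $(\eps,\delta)$-DP}, completing the reduction.

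The only subtle step is choosing the parameters to absorb both sources of slack, namely the additive $2^{-\eta}$ from the gap promise of {\sc Gap-Concentrated-DP} and the $2\sqrt{\rho^{*}\log(1/\delta)}$ loss from the CDP-to-approximate-DP conversion, while still falling strictly below the fixed $\eps$. Everything else is a direct analog of the RDP hardness argument, and no new technical machinery is needed beyond what is already invoked there.
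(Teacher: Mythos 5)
Your proposal is correct and follows essentially the same route as the paper's own proof: reduce from {\sc Distinguish $(\eps,\delta)$-DP}, choose constants $\rho,\eta$ (depending only on $\eps,\delta$) so that the CDP-to-approximate-DP conversion from Theorem~\ref{th:cdp-implies-approx-dp} applied to $\rho^{*}=\rho+2^{-\eta}$ lands strictly below $\eps$, and argue the no-instance case by contrapositive via the observation that $\rho\alpha+2^{-\eta}\le\rho^{*}\alpha$ for $\alpha>1$. Your parameter condition $\rho^{*}+2\sqrt{\rho^{*}\log(1/\delta)}<\eps$ is literally the same inequality the paper writes as $\rho+2\sqrt{(\rho+2^{-\eta})\log(1/\delta)}+2^{-\eta}<\eps$, so there is no substantive difference.
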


Combining the results of Theorem~\ref{pspace-concentrated-dp-algo} and Theorem~\ref{th:CDP-pspace-hard} we get the corollary:
\begin{corollary}
{\sc Gap-Concentrated-DP} is $\PSPACE$-complete.
\end{corollary}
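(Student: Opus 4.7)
The plan is to mimic the proof of Theorem~\ref{th:RDP-pspace-hard} almost verbatim, substituting Theorem~\ref{th:cdp-implies-approx-dp} (the CDP-to-approximate-DP conversion) for the RDP-to-approximate-DP conversion. I will reduce from {\sc Distinguish $(\eps,\delta)$-DP}, which is $\PSPACE$-hard by Theorem~\ref{lossy_imply_approx_hardness}.

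First I fix the rational constants $\eps, \delta \in (0,1)$ from the source problem, and choose a dyadic rational $\rho > 0$ and an integer $\eta$ such that, setting $\rho' = \rho + 2^{-\eta}$, the quantity $\rho' + 2\sqrt{\rho' \log(1/\delta)}$ is strictly less than $\eps$. Such $\rho, \eta$ exist because $\rho' + 2\sqrt{\rho' \log(1/\delta)} \to 0$ as $\rho \to 0$ and $\eta \to \infty$. The reduction then maps a {\sc Distinguish $(\eps,\delta)$-DP} instance $C$ to the {\sc Gap-Concentrated-DP} instance $(C, \rho, \eta)$ in linear time.

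For correctness I handle the two cases of the promise. If $C$ is a yes-instance of {\sc Distinguish $(\eps,\delta)$-DP}, then $C$ is $(0,0)$-DP, so for every pair of neighboring inputs $x,x'$ the output distributions are identical and $D_\alpha(C(x) \| C(x')) = 0 \le \rho\alpha$ for all $\alpha > 1$; hence $(C,\rho,\eta)$ is a yes-instance of {\sc Gap-Concentrated-DP}. For no-instances I argue by contraposition: if $(C,\rho,\eta)$ is \emph{not} a no-instance, then for all neighboring $x,x'$ and all $\alpha > 1$, $D_\alpha(C(x) \| C(x')) < \rho\alpha + 2^{-\eta} \le (\rho + 2^{-\eta})\alpha = \rho'\alpha$, so $C$ is $\rho'$-concentrated-DP. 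Applying Theorem~\ref{th:cdp-implies-approx-dp} gives that $C$ is $(\rho' + 2\sqrt{\rho' \log(1/\delta)}, \delta)$-DP, which by our choice of $\rho, \eta$ implies $C$ is $(\eps,\delta)$-DP, contradicting that $C$ is a no-instance of {\sc Distinguish $(\eps,\delta)$-DP}.

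The main bookkeeping step, and the only nontrivial part of the argument, is absorbing the additive gap $2^{-\eta}$ into the multiplicative CDP parameter using the inequality $2^{-\eta} \le \alpha \cdot 2^{-\eta}$ valid for $\alpha > 1$, and then verifying that the resulting conversion bound $\rho' + 2\sqrt{\rho' \log(1/\delta)}$ can be driven strictly below $\eps$ by a suitable choice of $\rho$ and $\eta$ that depends only on the fixed constants $\eps, \delta$ and is independent of $C$. Once this is in place, $\PSPACE$-hardness of {\sc Gap-Concentrated-DP} follows immediately from the $\PSPACE$-hardness of {\sc Distinguish $(\eps,\delta)$-DP} established in Theorem~\ref{lossy_imply_approx_hardness}.
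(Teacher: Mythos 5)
Your reduction for the hardness direction is correct and matches the paper's proof of Theorem~\ref{th:CDP-pspace-hard} essentially verbatim: fixing $\eps,\delta$, choosing $\rho,\eta$ with $\rho' = \rho + 2^{-\eta}$ small enough that $\rho' + 2\sqrt{\rho'\log(1/\delta)} < \eps$, absorbing the additive $2^{-\eta}$ into $\rho'\alpha$ using $\alpha > 1$, and invoking Theorem~\ref{th:cdp-implies-approx-dp} to derive a contradiction for no-instances. This is exactly what the paper does.

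However, the statement you were asked to prove is $\PSPACE$-\emph{completeness}, not merely $\PSPACE$-hardness. Your proposal establishes only the lower bound. To close the argument you must also show that {\sc Gap-Concentrated-DP} is \emph{in} $\PSPACE$, which is the content of Theorem~\ref{pspace-concentrated-dp-algo}. That upper bound is itself nontrivial here because the defining condition quantifies over \emph{all} $\alpha > 1$: the paper first proves Lemma~\ref{lemma:conc-DP-bounded-alpha}, which shows that in the gapped setting it suffices to check a finite, discretized set of $\alpha$'s in a bounded interval $(1, 1 + 2^{p(n)}/\rho)$, and then iterates the {\sc Gap-R\'enyi-DP} algorithm of Theorem~\ref{pspace-Renyi-dp-algo} over those $\alpha$'s, reusing space. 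Without at least citing that result, your proof of the corollary is incomplete.
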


\subsection{Truncated Concentrated Differential Privacy}\label{sec:pspace-tcdp-algo}

\begin{definition}\cite{BDRS18}
A program $C$ is $\omega$-Truncated $\rho$-Concentrated-DP if for every neighboring inputs $x,x'$ and every $\alpha \in (1,\omega)$
$$D_{\alpha }(C(x)\|C(x'))\leq \rho \alpha.$$
\end{definition}

Again, we introduce a promise version of the problem. This allows us to consider quantities of fixed precision parameterized by $\eta$: 
\begin{definition}
In the {\sc Gap-Truncated-Concentrated-DP} problem, an instance $(C, \rho, \omega, \eta)$ consists of a BPWhile program $C$, two dyadic rational numbers $\rho$ and $\omega$, and a binary integer precision parameter $\eta$. The goal is to distinguish between the following two cases:
\begin{enumerate}
  \item Yes-instances: for all neighboring inputs $x,x'$ and for all $\alpha \in (1,\omega)$,
$$D_{\alpha }(C(x)\|C(x'))\leq \rho\alpha,$$
  \item No-instances: there exists a pair of neighboring inputs $x,x'$ and $\alpha \in (1,\omega)$ such that 
$$D_{\alpha }(C(x)\|C(x')) \geq \rho\alpha + \frac{1}{2^{\eta}}.$$
\end{enumerate}

\end{definition}

As we need to verify a bounded range of values of the parameter $\alpha$ the verification procedure is analogous to the algorithm for verifying concentrated differential privacy. Therefore, we get the following theorem:

\begin{theorem} \label{pspace-trunc-concentrated-algo}
The promise problem {\sc Gap-Truncated-Concentrated-DP} is solvable by a deterministic algorithm using space polynomial in the size of the instance.
\end{theorem}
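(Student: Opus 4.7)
The plan is to adapt the algorithm from Theorem~\ref{pspace-concentrated-dp-algo} for plain concentrated differential privacy, with the key observation that most of the work has already been done. In the untruncated case, the principal obstacle was that $\alpha$ ranged over the unbounded interval $(1, \infty)$, and Lemma~\ref{lemma:conc-DP-bounded-alpha} was required in order to restrict attention to a polynomial-precision grid inside a bounded subinterval. For truncated CDP, the upper bound $\omega$ on $\alpha$ is supplied as part of the input, so the range restriction is free; only the discretization step of Lemma~\ref{lemma:conc-DP-bounded-alpha} remains relevant.

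I would therefore proceed as follows. First, I would invoke the sub-claim from the proof of Lemma~\ref{lemma:conc-DP-bounded-alpha}, reapplied with the tighter threshold $2^{-\eta-1}$: if $D_\alpha(C(x)\|C(x')) \le \rho\alpha + 2^{-\eta-1}$ holds for every dyadic $\alpha$ that is an integer multiple of $2^{-\eta-1}/\rho$ lying in $(1, \omega)$, then $D_\alpha(C(x)\|C(x')) < \rho\alpha + 2^{-\eta}$ holds for every real $\alpha \in (1, \omega)$. This reduces the continuum of $\alpha$'s to a finite grid without loss of discriminating power at the promised gap.

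The algorithm then iterates through each pair of neighboring inputs $(x,x')$ and, for each, walks a binary counter through the grid of $\alpha$'s of step size $2^{-\eta-1}/\rho$ within $(1,\omega)$. For every grid point, it invokes the Rényi-divergence subroutine from Theorem~\ref{pspace-Renyi-dp-algo} on the instance $(C, \alpha, \rho, \eta+1)$ to decide whether $D_\alpha(C(x)\|C(x')) \le \rho\alpha + 2^{-\eta-1}$. If any check fails, output ``no-instance''; if all pass, output ``yes-instance''. Correctness follows from the discretization sub-claim above: passing all grid checks implies the yes-condition for all real $\alpha \in (1,\omega)$, while a no-instance necessarily has some $\alpha$ at which the gap is at least $2^{-\eta}$, and monotonicity plus the step size guarantee that a neighbouring grid point also witnesses a violation at the relaxed threshold $2^{-\eta-1}$.

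For the complexity analysis, the number of grid values is bounded by $\omega\rho\cdot 2^{\eta+1}$, which may be exponentially large in the instance size since $\omega,\rho,\eta$ are given in binary, but each grid point is described by polynomially many bits and can be enumerated by a polynomial-space counter without being stored in full. Each inner call to the procedure of Theorem~\ref{pspace-Renyi-dp-algo} runs in space polynomial in the size of its instance, which is polynomial in the size of $(C,\rho,\omega,\eta)$. The main subtlety to check, and the one place where I would be most careful, is that the precision of the inner Rényi-divergence check is actually tight enough at every grid point to distinguish the two cases of the promise after accounting for the additive loss introduced by discretization; the choice of step size $2^{-\eta-1}/\rho$ together with calling the Rényi subroutine at precision parameter $\eta+1$ is precisely what makes these two gaps compatible.
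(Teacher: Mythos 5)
Your proposal matches the paper's own approach: the paper proves this theorem by observing that the algorithm is the same as for {\sc Gap-Concentrated-DP} except that $\alpha$ ranges over $(1,\omega)$ instead of the bounded interval supplied by Lemma~\ref{lemma:conc-DP-bounded-alpha}, since the truncation bound $\omega$ is given as input. You flesh out exactly that idea---reusing only the discretization sub-claim of Lemma~\ref{lemma:conc-DP-bounded-alpha} with step size $2^{-\eta-1}/\rho$, enumerating the grid with a polynomial-space counter, and invoking the R\'enyi subroutine at precision parameter $\eta+1$---so this is correct and essentially identical to the paper's argument.
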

Similarly to the proof of Theorem~\ref{th:RDP-pspace-hard} and Theorem~\ref{th:CDP-pspace-hard} we use an existing result connecting the parameters of approximate and truncated concentrated differential privacy to show reduction from {\sc Distinguish $(\eps,\delta)$-DP} to {\sc Gap-Truncated-Concentrated-DP}.

\begin{theorem}[\cite{BDRS18}]\label{th:tcdp-implies-approx-dp}
If $C$ is an $(\rho,\omega)$-TCDP mechanism, then it also satisfies $(\rho + 2 \sqrt{\rho\log{(1/\delta)}}, \delta)$-differential privacy for any $\delta \in (0,1)$ that satisfies $\log{(1/\delta)} \leq (\omega-1)^2\rho$.
\end{theorem}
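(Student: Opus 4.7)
The plan is to mimic the standard R\'enyi-to-approximate DP conversion (as used to prove Theorem~\ref{th:rdp-implies-approx-dp} and Theorem~\ref{th:cdp-implies-approx-dp}), but with careful attention to the fact that the TCDP hypothesis only gives us access to R\'enyi divergence bounds for $\alpha \in (1, \omega)$. So the core inequality is obtained by Markov's inequality applied to the privacy loss random variable, and the constraint on $\delta$ will arise exactly from insisting that the optimal choice of $\alpha$ lie below the truncation threshold $\omega$.

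First, fix neighboring inputs $x, x'$ and let $P, Q$ denote the output distributions of $C(x)$ and $C(x')$. Define the privacy loss random variable $Z = \log(P(o)/Q(o))$ for $o \sim P$. A standard reduction (e.g.\ used in the proof of Theorem~\ref{th:rdp-implies-approx-dp}) shows that if $\Pr_{o \sim P}[Z > \eps] \le \delta$ for every pair of neighbors, then $C$ is $(\eps, \delta)$-differentially private in the sense of Definition~\ref{def:differential-privacy}. So it suffices to bound this tail probability.

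Next, for any $\alpha \in (1, \omega)$, apply Markov's inequality to the nonnegative random variable $e^{(\alpha-1) Z}$:
\[
\Pr[Z > \eps] \;=\; \Pr\!\left[e^{(\alpha-1) Z} > e^{(\alpha-1)\eps}\right] \;\le\; \frac{\E[e^{(\alpha-1)Z}]}{e^{(\alpha-1)\eps}}.
\]
By Definition~\ref{def-Renyi-div}, $\E_{o \sim P}[(P(o)/Q(o))^{\alpha-1}] = e^{(\alpha-1) D_\alpha(P\|Q)}$, and the TCDP hypothesis gives $D_\alpha(P\|Q) \le \rho\alpha$. Thus
\[
\Pr[Z > \eps] \;\le\; e^{(\alpha-1)(\rho\alpha - \eps)}.
\]
Setting $\eps = \rho\alpha + \tfrac{\log(1/\delta)}{\alpha-1}$ makes the right-hand side equal to $\delta$.

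Finally, optimize the resulting bound $\eps(\alpha) = \rho\alpha + \log(1/\delta)/(\alpha-1)$ over $\alpha > 1$. Calculus (or AM--GM on $\rho(\alpha-1)$ and $\log(1/\delta)/(\alpha-1)$) gives $\alpha^\star = 1 + \sqrt{\log(1/\delta)/\rho}$ and $\eps(\alpha^\star) = \rho + 2\sqrt{\rho \log(1/\delta)}$. The main (and only real) obstacle compared with the untruncated CDP setting is that we must verify $\alpha^\star \in (1, \omega)$, since the RDP bound supplied by TCDP is only valid there. This requirement is precisely
\[
1 + \sqrt{\log(1/\delta)/\rho} \;<\; \omega \quad\Longleftrightarrow\quad \log(1/\delta) \;\le\; (\omega - 1)^2 \rho,
\]
which is the hypothesis in the theorem statement. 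Under this constraint the optimal $\alpha^\star$ is admissible, yielding $(\rho + 2\sqrt{\rho\log(1/\delta)}, \delta)$-differential privacy as claimed.
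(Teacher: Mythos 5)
The paper does not actually prove this statement; it is imported verbatim from \cite{BDRS18} as a black-box conversion lemma, so there is no internal proof to compare against. Your argument is correct and is essentially the standard proof of that result: bound $\Pr_{o\sim P}[Z>\eps]$ by Markov's inequality applied to $e^{(\alpha-1)Z}$, identify $\E[e^{(\alpha-1)Z}]=e^{(\alpha-1)D_\alpha(P\|Q)}$, invoke the TCDP bound $D_\alpha\le\rho\alpha$, set $\eps(\alpha)=\rho\alpha+\log(1/\delta)/(\alpha-1)$, and minimize over admissible $\alpha$ via AM--GM, with the truncation $\alpha<\omega$ translating into the constraint on $\delta$; the preliminary step (a uniform tail bound on the privacy loss implies $(\eps,\delta)$-DP, including the $\bot$ outcome) is the standard reduction and is fine over this finite outcome space. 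One small imprecision: your displayed equivalence mixes a strict and a non-strict inequality. When $\log(1/\delta)=(\omega-1)^2\rho$ the optimizer is $\alpha^\star=\omega$, which lies outside the open interval $(1,\omega)$ on which the TCDP hypothesis gives a divergence bound, so the boundary case of the theorem needs one extra line --- e.g.\ take $\alpha\uparrow\omega$, note $\eps(\alpha)\downarrow\eps(\omega)$ so the events $\{Z>\eps(\alpha)\}$ increase to $\{Z>\eps(\omega)\}$, and conclude $\Pr[Z>\eps(\omega)]\le\delta$ by continuity of measure (or simply observe $D_\alpha$ is continuous in $\alpha$). With that remark added, your proof fully establishes the cited statement.
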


The proof of hardness is then similar to the hardness result for {\sc Gap-R\'enyi-DP}.

\begin{theorem}\label{th:TCDP-pspace-hard}
{\sc Gap-Truncated-Concentrated-DP} is $\PSPACE$-hard.
\end{theorem}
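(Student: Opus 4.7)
The plan is to mirror the reduction structure used for \textsc{Gap-R\'enyi-DP} and \textsc{Gap-Concentrated-DP}, reducing from \textsc{Distinguish $(\eps,\delta)$-DP} (known $\PSPACE$-hard by the corollary following Lemma~\ref{lossy_imply_approx_hardness}) and using Theorem~\ref{th:tcdp-implies-approx-dp} as the bridge between TCDP and approximate DP. The extra subtlety compared to the CDP proof is that TCDP's guarantee is only over a bounded range of $\alpha$'s, so I must pick the truncation parameter $\omega$ large enough to satisfy the side condition $\log(1/\delta) \le (\omega-1)^2 \rho$ of Theorem~\ref{th:tcdp-implies-approx-dp}.

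First I would fix dyadic rationals $\eps,\delta \in (0,1)$ and choose fixed dyadic parameters $\rho,\omega$ and a binary integer $\eta$, independent of the input program, so that simultaneously: (i) $\rho + 2^{-\eta} + 2\sqrt{(\rho + 2^{-\eta})\log(1/\delta)} < \eps$, and (ii) $\log(1/\delta) \le (\omega-1)^2 \rho$. Concretely, first pick $\rho$ small enough relative to $\eps$ so that $\rho + 2\sqrt{\rho\log(1/\delta)} < \eps$ with strict slack; then pick $\omega$ large enough to satisfy (ii) with slack; finally pick $\eta$ large enough that replacing $\rho$ by $\rho + 2^{-\eta}$ preserves both (i) and (ii). The reduction maps an instance $C$ of \textsc{Distinguish $(\eps,\delta)$-DP} to the instance $(C,\rho,\omega,\eta)$ of \textsc{Gap-Truncated-Concentrated-DP} in linear time.

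For correctness, suppose first that $C$ is $(0,0)$-DP. Then for every pair of neighboring inputs $x,x'$ the output distributions are identical, so $D_\alpha(C(x)\|C(x'))=0 \le \rho\alpha$ for every $\alpha \in (1,\omega)$, and $(C,\rho,\omega,\eta)$ is a yes-instance. Conversely, suppose $(C,\rho,\omega,\eta)$ is not a no-instance. Then for all neighboring $x,x'$ and all $\alpha \in (1,\omega)$,
\[
D_\alpha(C(x)\|C(x')) \;\le\; \rho\alpha + 2^{-\eta} \;\le\; (\rho + 2^{-\eta})\,\alpha,
\]
so $C$ is $\omega$-truncated $(\rho + 2^{-\eta})$-CDP. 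By the choice of parameters, condition~(ii) still holds with $\rho$ replaced by $\rho + 2^{-\eta}$, so Theorem~\ref{th:tcdp-implies-approx-dp} applies and yields that $C$ is $(\rho + 2^{-\eta} + 2\sqrt{(\rho + 2^{-\eta})\log(1/\delta)},\,\delta)$-DP, which by~(i) is $(\eps',\delta)$-DP for some $\eps' < \eps$. Hence $C$ is $(\eps,\delta)$-DP and is not a no-instance of \textsc{Distinguish $(\eps,\delta)$-DP}. Since \textsc{Distinguish $(\eps,\delta)$-DP} is $\PSPACE$-hard, so is \textsc{Gap-Truncated-Concentrated-DP}.

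The main obstacle is purely the parameter bookkeeping: absorbing the additive $2^{-\eta}$ slack from the gap into the multiplicative $\rho$ of TCDP, and simultaneously maintaining the TCDP-to-approximate-DP side condition $\log(1/\delta)\le(\omega-1)^2\rho$. No new analytic ideas beyond those already used in Theorems~\ref{th:RDP-pspace-hard} and~\ref{th:CDP-pspace-hard} are needed.
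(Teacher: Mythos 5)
Your proof follows the same approach as the paper's: reduce from \textsc{Distinguish} $(\eps,\delta)$-DP, absorb the additive $2^{-\eta}$ gap into the $\rho$ parameter (using $\alpha>1$), and invoke Theorem~\ref{th:tcdp-implies-approx-dp} after checking that the truncation side condition $\log(1/\delta)\le(\omega-1)^2\rho'$ survives the bump from $\rho$ to $\rho'=\rho+2^{-\eta}$. The parameter choices and contrapositive argument match the paper's proof essentially line for line, and your explicit remark that the side condition is monotone in $\rho$ is a small clarification of a point the paper leaves implicit.
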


Combining the results of Theorem~\ref{pspace-trunc-concentrated-algo} and Theorem~\ref{th:TCDP-pspace-hard} we get the corollary:
\begin{corollary}
{\sc Gap-Truncated-Concentrated-DP} is $\PSPACE$-complete.
\end{corollary}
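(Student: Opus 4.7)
The plan is to mirror the reductions used for Theorem~\ref{th:RDP-pspace-hard} and Theorem~\ref{th:CDP-pspace-hard}, now invoking Theorem~\ref{th:tcdp-implies-approx-dp} as the bridge from TCDP to approximate DP. We reduce from {\sc Distinguish $(\eps,\delta)$-DP}, whose $\PSPACE$-hardness follows from Lemma~\ref{lossy_imply_approx_hardness}.

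Fix rational constants $\eps, \delta \in (0,1)$ defining the source problem. The reduction selects, once and for all, dyadic constants $\rho, \omega$ and an integer $\eta$ (all with short binary representations) satisfying the two conditions
\begin{align*}
\mathrm{(i)} \;\; & \log(1/\delta) \le (\omega - 1)^2 \bigl(\rho + 2^{-\eta}\bigr), \\
\mathrm{(ii)} \;\; & \bigl(\rho + 2^{-\eta}\bigr) + 2\sqrt{\bigl(\rho + 2^{-\eta}\bigr)\log(1/\delta)} < \eps.
\end{align*}
Such a choice exists: first pick $\rho$ and $\eta$ so that $\rho + 2^{-\eta}$ is small enough to make the left-hand side of (ii) strictly less than $\eps$ (the expression tends to $0$ as $\rho + 2^{-\eta} \to 0$), then pick $\omega$ large enough to satisfy (i). The reduction sends $C \mapsto (C, \rho, \omega, \eta)$ in linear time.

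For the yes-direction, if $C$ is $(0,0)$-DP then $C(x)$ and $C(x')$ have identical output distributions on every pair of neighbors $x, x'$, so $D_\alpha(C(x) \| C(x')) = 0 \le \rho\alpha$ for every $\alpha \in (1, \omega)$, making $(C, \rho, \omega, \eta)$ a yes-instance of {\sc Gap-Truncated-Concentrated-DP}. For the no-direction I argue by contraposition. Suppose $(C, \rho, \omega, \eta)$ is \emph{not} a no-instance; then for all neighbors $x, x'$ and all $\alpha \in (1, \omega)$,
\[
D_\alpha(C(x) \| C(x')) < \rho\alpha + 2^{-\eta} \le \bigl(\rho + 2^{-\eta}\bigr)\alpha,
\]
where the last inequality uses $\alpha > 1$. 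Hence $C$ is $(\rho + 2^{-\eta}, \omega)$-TCDP. By condition (i), the hypothesis $\log(1/\delta) \le (\omega-1)^2(\rho + 2^{-\eta})$ of Theorem~\ref{th:tcdp-implies-approx-dp} is met for the prescribed $\delta$, so $C$ is $\bigl((\rho + 2^{-\eta}) + 2\sqrt{(\rho + 2^{-\eta})\log(1/\delta)},\, \delta\bigr)$-DP, and condition (ii) then yields that $C$ is $(\eps, \delta)$-DP. Hence $C$ is not a no-instance of {\sc Distinguish $(\eps, \delta)$-DP}.

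The one delicate point is the slack introduced by the $2^{-\eta}$ gap: Theorem~\ref{th:tcdp-implies-approx-dp} must be applied with inflated parameter $\rho + 2^{-\eta}$ rather than $\rho$, so both (i) and (ii) must be enforced with this inflated value. This is not a real obstacle because (ii) is continuous and monotone in $\rho + 2^{-\eta}$, while (i) can always be secured by enlarging $\omega$ independently of the choice of $\rho$ and $\eta$. Since the reduction is polynomial-time and {\sc Distinguish $(\eps, \delta)$-DP} is $\PSPACE$-hard, {\sc Gap-Truncated-Concentrated-DP} is $\PSPACE$-hard as well.
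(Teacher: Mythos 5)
Your reduction correctly establishes PSPACE-hardness of {\sc Gap-Truncated-Concentrated-DP}, and it mirrors the paper's proof of Theorem~\ref{th:TCDP-pspace-hard}: both reduce from {\sc Distinguish $(\eps,\delta)$-DP}, both argue the no-direction by contraposition using Theorem~\ref{th:tcdp-implies-approx-dp}, and both handle the $2^{-\eta}$ gap by applying the TCDP-to-DP implication with the inflated parameter $\rho' = \rho + 2^{-\eta}$. Your explicit remark about the order of choosing $\rho, \eta, \omega$, and your condition~(i) phrased in terms of $\rho + 2^{-\eta}$ --- the parameter Theorem~\ref{th:tcdp-implies-approx-dp} is actually applied with --- is slightly cleaner than the paper's formulation, which imposes the stronger-than-necessary constraint $\log(1/\delta) < (\omega-1)^2\rho$; both versions are correct.

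However, the statement you were asked to prove is that {\sc Gap-Truncated-Concentrated-DP} is $\PSPACE$-\emph{complete}, and your argument ends with ``is $\PSPACE$-hard.'' Completeness requires membership as well. The paper's corollary combines the hardness theorem with Theorem~\ref{pspace-trunc-concentrated-algo}, which shows the problem is in $\PSPACE$ by running the {\sc Gap-R\'enyi-DP} subroutine over the finitely many discretized values of $\alpha$ in $(1,\omega)$. You should either cite that theorem or sketch that algorithmic direction; as written, the proof of the corollary is incomplete.
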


\section{Termination-sensitive vs termination-insensitive differential privacy}
In the definition of differential privacy we have considered in this paper, Definition~\ref{def:differential-privacy},  we considered sets of outcomes over $\{0,1\}^l \cup \{\bot\}$. This definition corresponds to  a ``termination-senstive'' differential privacy model where the adversary can observe the program's termination behavior. As in information flow control, one can also study a ``termination-insensitive'' model where we require Condition~\ref{eqn:dp-ineq}, in Definition~\ref{def:differential-privacy}, to hold only for sets of outcomes  $O \subseteq \{0, 1\}^\ell$ and where the probabilities are conditioned on the program $C$ terminating. It is easy to see that,  up to a factor of $2$, termination-sensitive (pure) $\eps$-differential privacy implies termination-insensitive (pure) $\eps$-differential privacy. Indeed, for  all neighbors $x, x'$ and $O \subseteq \{0, 1\}^\ell$ we have:
\begin{align*}
    \Pr[C(x) \in O \mid C(x) \ne \bot] &= \frac{\Pr[C(x) \in O]}{\Pr[C(x) \ne \bot]} \\
    \le \frac{e^\eps \Pr[C(x') \in O]}{e^{-\eps}\Pr[C(x') \ne \bot]}
    &= e^{2\eps} \Pr[C(x') \in O | C(x') \ne \bot].
\end{align*}
This argument does not work for $(\eps, \delta)$-differential privacy. Indeed, a program that given a boolean input $b$ returns $ b$ with probability $\delta$ and $\bot$ with probability $1-\delta$ is termination-sensitive $(0, \delta)$-differentially private, but is not termination-insensitive $(\eps, \delta)$-differentially private for any $\eps < \infty$ or $\delta < 1$.

Meanwhile, even termination-insensitive pure differential privacy does not imply \\ 
termination-sensitive differential privacy. For example, a program that on input $0$ returns $0$ with probability $1$, and on input $1$ returns  $0$ with probability $0.01$, and $\bot$ with probability $0.99$ is termination-insensitive 0-differentially private, but is not termination-sensitive $(\eps, \delta)$-differentially private for any $\delta < 0.99$.

In this work, we focused on the termination-sensitive model  because it is the most natural from a probabilistic perspective and because it is less susceptible to timing side-channel attack such as the one illustrated in the latter example above. Nevertheless, all of our results can be adapted to work for the termination-insensitive model as well. Our algorithmic results can be adapted by explicitly normalizing the computed probabilities by the probability of termination, which can be computed in $\PSPACE$. Meanwhile, our lower bounds hold by replacing all steps where we explicitly enter an infinite loop with steps where we output a special failure symbol. This modification should be made to both the specification and reduction we consider for almost-sure termination, as well as for our reductions from almost-sure termination to privacy verification problems.

\section{Conclusions and Future work}
In this paper we have shown that the problem of deciding a probabilistic boolean program to be differentially private, for several notions of differential privacy, is $\PSPACE$-complete. In addition we have shown that also an approximate version of this problem is $\PSPACE$-hard. These results can help identify the limitations of automated verification methods. One direction that our results point to is the use of QBF solvers~\cite{QBFSolvers} for reasoning about differential privacy and almost sure termination for BPWhile programs. But first, to apply a QBF solver to verify whether an input BPWhile program is differentially private, the program should be efficiently converted to a quntatified boolean formula. In current work we only showed reduction in the opposite direction, namely we converted a QBF formula to a BPWhile program in Lemma~\ref{TQBF_to_losslessness}. We leave the discussion of the applicability of QBF solvers for future work. 

Our proofs of the $\PSPACE$-hardness results for Gap-R\'enyi-DP, Gap-Concentrated-DP, and Gap-Truncated-Concentrated-DP uses simple reduction from $\PSPACE$-hardness of Distinguish $(\eps,\delta)$-DP that doesn't rely on any specific properties of BPWhile language. The similar reduction can be used to show, for example, $\NP$- and $\coNP$-hardness of Gap-R\'enyi-DP, Gap-Concentrated-DP, and Gap-Truncated-Concentrated-DP for loops-free boolean language from~\cite{GNP20}.

 Results that we discuss show that a problem of checking various properties of probabilistic boolean programs with while loops require exactly polynomial space. A problem that generalizes all these results is the problem of checking whether a probability distributions on the outputs of the program satisfy a property expressed by a polynomial-space computation. The $\PSPACE$-hardness of verification of such generalized property is implied by the almost sure termination. That is so as we have an easily verifiable property of the output distribution, we just need to check whether a sum of the probabilities of the outputs is 1 or not. But can we show that every property of the output distribution can be computed by a polynomial space algorithm, if we are given an implicit access to the distribution through the algorithm that outputs every requested bit of the probability of requested output? 

\section*{Acknowledgments}
We thank Alley Stoughton and the anonymous reviewers for their helpful comments and suggestions.

Mark Bun was supported by NSF grants CCF-1947889 and CNS-2046425. Marco Gaboardi was partially supported by NSF grants CNS-2040215 and CNS-2040249. Ludmila Glinskih was supported by NSF grants CCF-1947889 and CCF-1909612.

\bibliography{main.bib}

\appendix 
\section{Computing Hitting Probabilities in Polylogarithmic Space}\label{hitting_algo}
Here we give a brief overview of Simon's algorithm~\cite{Sim81} for computing the hitting probabilities of a Markov chain with $n$ states using $O(\log^6 n)$ space.

The original application in Simon's paper was to  show that unbounded-error probabilistic Turing machines can be simulated in deterministic polynomial space. That is, he showed that one can determine in $\PSPACE$ whether the accept configuration in the configuration graph of a probabilistic TM is reached with probability strictly greater than $1/2$.
 This in turn is accomplished by interpreting the configuration graph as a Markov chain and exactly computing the hitting probability of the accept configuration.

We now describe the algorithm for computing hitting probabilities captured in Lemma~\ref{hitting_prob}. Recall that we are given a Markov chain $M = (V, E, p, p_0)$ with $2^{L}$ states. The Markov chain is represented by its transition matrix (an object of size $2^{O(L)}$), so each entry can be addressed using $O(L)$ space. We assume that $p_0$ is supported on a single start state, that all non-final states are non-recurrent (i.e., upon leaving a non-final state, the probability the Markov chain returns to it is less than $1$) and that for all non-final states, every outgoing transition has probability either $0$ or $1/2$.

Simon first described an algorithm using $O(L^3)$ time in the \emph{random access machine with multiplication} (MRAM) model -- a model of parallel computation with unit-cost multiplication. This implies an $O(L^6)$-space algorithm on a deterministic TM using a generic simulation of time $T(n)$ MRAM algorithms by space $O(T^2(n))$-space deterministic TMs~\cite{HartmanisS76}.

The MRAM algorithm works as follows. Let $P$ denote the transition matrix of the Markov chain $M$. Let $Q$ be the submatrix of $P$ corresponding to the non-final states. For a given final state $f$, let $v_f$ be the column of $P$ corresponding to state $f$, but restricted to the entries corresponding to non-final states. Let $v_s^T$ be the row vector with a $1$ in the entry corresponding to the start state and $0$'s elsewhere. Then letting $Q_\infty = Q + Q^2 + Q^3 + \dots$, we have that the probability of reaching the final state $f$ from the start state $s$ is $a = v_s^T Q_\infty V_f$. The goal now becomes to compute this matrix-vector product.

The key idea is that since $Q$ consists only of non-recurrent states, then $Q_\infty$ is well-defined and $Q_\infty = (I - Q)^{-1} - I$. Matrix inversion (more precisely, computing the numerators and denominators of the resulting entries separately) can be performed on an MRAM in time $O(L^3)$ using a variant of Csanksy's algorithm. This dominates the runtime of the algorithm, which just has to perform the matrix-vector product.

\commentt{
\section{Exponential Space Algorithm for Checking Almost Sure Termination}\label{exp_algo_losslessness}
Now the algorithm for checking whether the program $C$ of length $l$ is lossless works as follows:
\begin{enumerate}
    \item Create all possible states;
    \item For each state determine one or two edges to states that can appear after execution of the next line of code;
    \item Label all possible start-states;
    \item For each start-state find (using BFS or DFS) all reachable states;
    \item Delete all states that were not reached on a previous stage;
    \item Label all final-states: they correspond to left states with a line number $l$.
    \item For each vertex in a state-graph check that there is at least one path to one of the final-states.
    \item If on the previous step at least one vertex there are no paths to any of the final states output "Does not almost surely terminate". Otherwise output "Almost surely terminates".
\end{enumerate}

\textbf{Time and space analysis:} For each vertex in a state-graph after step (e) we check that there is at least one path to one of the final-states. By Theorem \ref{losslessness-criterion} it is enough to check this property for all vertices reachable from the start states to check whether the program is lossless. 
This algorithm works with a state graph, that in the worst case has exponential size in the size of the program, because of the exponential number of states. As we construct and store it explicitly in the memory on the step $(1)$ of the algorithm, this algorithm requires exponential space. Step 1 of the algorithm takes exponential time to construct the graph itself, and then on Step 2 we use polynomial-time algorithms to analyze the graph of exponential size. Overall the algorithm uses exponential amount of space and time to check, whether the program is lossless.
}

\section{Reduction from TQBF to Almost Sure Termination}\label{TQBF_to_losslessness}
Suppose we have a fully quantified Boolean formula $$\psi = \forall x_1 \in \{0,1\} \exists x_2 \in \{0,1\} \dots \forall x_t \in \{0,1\} \phi(x_1,\dots,x_t)$$ in prenex normal form. We wish to check whether $\psi \in \TQBF$. We create a BPWhile program, the template for which we give below, such that the program terminates almost surely iff $\psi$ is true. As in previous reductions, for the sake of readability we use a few extra constructions that BPWhile doesn't formally support, such as variables that take on constant-size integer values.

\begin{lstlisting}
A:
input(b); # dummy input bit that
# the program ignores
c1 = 0;
x1 = 0;
while x1 <= 1  then
    c2 = 0;
    x2 = 0;
    while x2 <= 1 then
        c3 = 0;
        x3 = 0;
        while x3 <= 1 then
        ...
            while xt <= 1 then
                if phi(x1,...,xt)==1 then
                    ct++;
                xt++;
            if ct == 2 then
                c(t-1)++;
        ...
        if c3 == 2 then
            c2++;
        x2++;
    if c2 >= 1 then
        c1++;
    x1++;
if (c1 < 2) then  #psi is false
    while true then #enter infinite loop
        skip;
return(1)
\end{lstlisting}

The first part of the program uses $t$ nested while loops to evaluate the QBF formula $\psi$. Each loop corresponding to a universal quantifier checks that both assignments to its variable return 1. Meanwhile, each loop corresponding to an existential quantifier checks that at least one of the assignments returns 1.

After evaluating the entire formula, the program enters an infinite loop if it evaluates to false, and otherwise terminates with probability $1$.

Hence this construction produces a BPWhile program that terminates with probability $1$ iff the formula $\psi$ is true. The construction of the program takes time polynomial in the size of $\psi$, so checking almost sure termination is $\PSPACE$-hard.

\section{Additional Proofs from Section~\ref{sec:approx-dp-param-pspace-hard}} \label{app:adp-proofs}

\begin{proof}[Proof of Claim~\ref{loss_amplification}]
Suppose $P$ is not almost surely terminating and that $x$ is an input on which the program is not terminating with some positive probability. Consider the Markov chain corresponding to the execution of $P(x)$. This Markov chain has a reachable, recurrent non-final state. Since a program of size $N$ has at most $2^{\poly(N)}$ states, this recurrent state is reachable within $2^{\poly(N)}$ transitions. Moreover, since each transition has probability either $0, 1/2$, or $1$, the probability of reaching this recurrent state is at least $2^{-2^{\poly(N)}}$.

The program $P'$ will amplify the probability of reaching this recurrent state (i.e., entering this infinite loop) by repeating $P$ many times. Below we describe how to encode this number of repetitions succinctly. We provide a code template where we operate with two vectors of $m+1$ boolean variables $Counter$ and $B$, that we use in this program to represent integers in the range $[0,2^{m+1})$. We compare and increment these variables, and both of these operations can be  encoded as simple procedure of polynomial size in the length of $m$ with boolean variables only. 

\begin{tabular}{ll}
1.& ${\tt input}(x);$\\
2.& ${\tt while\ true\ then}$\\
3.&  \quad $B = 0;$\\
4.& \quad $Counter = 0;$\\
5.& \quad ${\tt while}\ (Counter < 2^m)\ {\tt then}$\\
6.& \quad \quad ${\tt increment(Counter)}$;\\
7.& \quad  \quad $a = {\tt random};$\\ 
8.& \quad \quad  ${\tt if}\ a = 1\ {\tt then}$\\
9.& \quad \quad \quad ${\tt increment}(B);$\\
10.& \quad \quad \quad ${\tt if}\  B < 2^m\ {\tt then}$\\
11.& \quad \quad \quad \quad $ P(x);$\\
12.& \quad \quad \quad {\tt else}\\
13.& \quad \quad \quad \quad ${\tt return}(1)$\\  
\end{tabular}

In short, this program terminates if and only if out of $2^m$ coin tosses in the inner while loop, we get $2^m$ 1's. As the probability of this event is $1/2^{2^m}$, we get that we need approximately $2^{2^m}$ iterations of the external loop to finally get exactly $2^m$ ones in $2^m$ coin tosses. As in each iteration of the external loop we run the program $P$ that with probability at least $1/2^{2^{m}}$ enters an infinite loop, overall we enter this loop with constant probability when it exists. 

We now analyze the guarantee of $P'$ more formally. Let $X$ be the number of rounds in which the outer loop runs, and $q = 2^{-2^m}$ be the probability of getting $B=2^m$ after the inner loop run. Then we get that $X$ is distributed as a geometric random variable
$$\Pr[X = 1] = q, 
\Pr[X = 2] = q(1-q),$$
$$\Pr[X = 3] = q(1-q)^2,\dots$$

Then we can estimate the probability that $P'$ halts as:
\begin{align*}
\Pr[P' &\text{ halts}] \leq \sum_{k=1}^\infty q(1-q)^{k-1} \cdot (1 - q)^k
= q \sum_{k = 1}^\infty (1-q)^{2k-1} \\
&= \frac{q}{1-q} \cdot \sum_{k = 1}^\infty (1-q)^{2k}
= \frac{q}{1-q}\cdot \frac{(1-q)^2}{1 - (1-q)^2} \\
&= \frac{q \cdot (1-q)}{1 - (1-q)^2} = \frac{q(1-q)}{q(2-q)} = \frac{1-q}{2-q}=1/2 - \frac{0.5q}{2-q}.
\end{align*}

Hence, if $0 < q < 1$ we get that $\Pr[P' \text{ halts}] < 1/2$.
\end{proof}

\section{Additional Proofs from Section~\ref{sec:other-defs}}\label{app:other-defs}
\begin{proof}[Proof of Theorem~\ref{pspace-concentrated-dp-algo}]
Let $C$ be a BPWhile program of length $n$, $\rho$ be a dyadic rational number, and $\eta$ be a binary integer precision parameter. Let $p(n)$ be a polynomial such that all probabilities of reaching the final state in the Markov chain of $P$ are discretized to $2^{-2^{p(n)}}$. 

By Lemma \ref{lemma:conc-DP-bounded-alpha}, in order to solve an instance $(C,\rho,\eta)$ {\sc Gap-Concentrated-DP}, it suffices to solve instances $(C, \alpha, \rho,\eta)$ of  {\sc Gap-R\'enyi-DP} for every $\alpha$ in range $(1,1+ \frac{2^{p(n)}}{\rho})$ discretized to precision $2^{-\eta-1}/\rho$. That will also imply that we never have an event where $x,x'$ are neighboring inputs, $\Pr[C(x)=o]=0$ and $\Pr[C(x')=o] > 0$. Note that the largest value of $\alpha$ is $1+2^{p(n)}/\rho$, and all these operations for computing each bit of $\alpha$ can be done by  uniform families of polylogarithmic-depth circuits. Therefore, for any $\alpha$ in the considered interval we can recompute $2^n \cdot (1+2^{p(n)}/\rho)^2 + \frac{\alpha-1}{2^{\eta}}+n$ bits of the quantity
$$\sum_{o \in O}{2^{\log{\Pr[C(x)=o]}\cdot \alpha - \log{\Pr[C(x')=o]}\cdot(\alpha-1)}}$$
and check whether it is less than $2^{\rho \alpha (\alpha-1)}$ or greater than $2^{\rho \alpha (\alpha-1) + \frac{\alpha-1}{2^{\eta}}}$. As was shown in Theorem \ref{pspace-Renyi-dp-algo} all these operations are computable in polynomial space. Hence, using the algorithm from Theorem \ref{pspace-Renyi-dp-algo} for every $\alpha$ we can verify whether $(C,\rho,\eta)$ is a yes-instance of {\sc Gap-Concentrated-DP} or whether it is a no-instance.

As by Lemma~\ref{lemma:conc-DP-bounded-alpha} we can consider values of $\alpha$ discretized to $\frac{2^{-\eta-1}}{\rho}$ we get that there are at most $\frac{\rho}{2^{-\eta-1}} \cdot (1 + \frac{2^{p(n)}}{\rho})$ values of $\alpha$ to consider, and each such value has representation of  polynomial length. Therefore, we run the algorithm for checking $(\alpha,\alpha\rho)$-R\'enyi-DP at most $(p(n) + \rho)\cdot 2^{\eta+1}$ times. As iterating over exponentially many elements keeps us in $\PSPACE$ if we re-use the space on each iteration, we get a $\PSPACE$-algorithm for checking whether $C$ is $\rho$-concentrated differentially private.
\end{proof}

\begin{proof}[Proof of Theorem~\ref{th:CDP-pspace-hard}]
Fix two dyadic rational numbers $\eps, \delta \in (0,1)$. Let $\eta$ and $\rho$ be positive numbers with finite binary representations such that 
$$0 < \rho + 2 \sqrt{\left(\rho + \frac{1}{2^\eta}\right)\log{(1/\delta)}} + \frac{1}{2^{\eta}} < \eps.$$ 
We show a reduction from {\sc Distinguish $(\eps,\delta)$-DP} to {\sc Gap-Concentrated-DP} problem.
We map each instance $C$ of {\sc Distinguish $(\eps,\delta)$-DP} into an instance $(C,\rho, \eta)$.

As in the proof of Theorem~\ref{th:RDP-pspace-hard} when $C$ is a yes-instance of {\sc Distinguish $(\eps,\delta)$-DP}, $C$ is also $\rho$-CDP. Therefore, $(C,\rho, \eta)$ is a yes-instance of {\sc Gap-Concentrated-DP}. 

In the case where $C$ is a no-instance of {\sc Distinguish $(\eps,\delta)$-DP} we need to show that then $(C,\rho, \eta)$ is a no-instance of {\sc Gap-Concentrated-DP}.
Assume the opposite: for all neighboring inputs $x,x'$, and for all $\alpha > 1$ it holds $D_{\alpha }(C(x)\|C(x')) \leq \rho\alpha + \frac{1}{2^\eta}$. 
Then, for $\rho'=\rho+\frac{1}{2^\eta}$, $C$ is $(\alpha,\rho'\alpha)$-RDP for all $\alpha > 1$, as 
$$D_{\alpha }(C(x)\|C(x')) \leq \rho\alpha + \frac{1}{2^\eta} \leq \left(\rho+\frac{1}{\alpha 2^\eta}\right)\alpha \leq \left(\rho+\frac{1}{2^\eta}\right)\alpha.$$
Therefore, $C$ is $(\rho')$-CDP.
Hence by Theorem~\ref{th:cdp-implies-approx-dp}, $C$ is $(\rho' + 2 \sqrt{\rho' \log{(1/\delta)}},\delta)$-DP. But by our choice of the parameters, $\rho' + 2 \sqrt{\rho' \log{(1/\delta)}} = \rho \alpha + 2 \sqrt{(\rho + \frac{1}{2^\eta})\log{(1/\delta)}} + \frac{1}{2^{\eta}} < \eps$, hence $C$ is $(\eps,\delta)$-DP. But that contradicts our assumption that $C$ is a no-instance of {\sc Distinguish $(\eps,\delta)$-DP}. Therefore, $(C,\rho, \eta)$ is a no instance of {\sc Gap-Concentrated-DP}.

We showed that the reduction is correct. It runs in linear time because the parameters $\rho, \eta$ are computed from $\eps, \delta$ independent of the instance. 

Since Theorem~\ref{lossy_imply_approx_hardness} showed that {\sc Distinguish $(\eps,\delta)$-DP} is $\PSPACE$-hard, we get that {\sc Gap-Concentrated-DP} is $\PSPACE$-hard.
\end{proof}

\begin{proof}[Proof of Theorem~\ref{pspace-trunc-concentrated-algo}] The algorithm is the same as in the proof of Theorem \ref{pspace-concentrated-dp-algo}, with only one change: instead of considering  $\alpha$ in the range $(1,1 + \log{m/\rho})$, where $m$ depends on a discretization of probabilities in the Markov chain of the program $C$, we consider the range $(1,\omega)$. Although the parameter $\omega > 1 + \log{m/\rho}$, we again need to iterate over at most exponentially many in the length of the input values of $\alpha$. Therefore, getting a $\PSPACE$-algorithm for {\sc Gap-Truncated-Concentrated-DP}.
\end{proof}

\begin{proof}[Proof of Theorem~\ref{th:TCDP-pspace-hard}]
Fix two dyadic rational numbers $\eps, \delta \in (0,1)$. Let $\eta$, $\omega$, and $\rho$ be positive numbers with finite binary representation, such that 
$$0 < \rho + 2 \sqrt{(\rho + \frac{1}{2^\eta})\log{(1/\delta)}} + \frac{1}{2^{\eta}} < \eps, \text{ and}$$
$$\log{1/\delta} < (\omega-1)^2\rho.$$

Similarly to the proof of Theorem~\ref{th:CDP-pspace-hard} we show a reduction from {\sc Distinguish $(\eps,\delta)$-DP} to {\sc Gap-Concentrated-DP} problem.
We map each instance $C$ of {\sc Distinguish $(\eps,\delta)$-DP} into an instance $(C,\rho, \omega, \eta)$.

As in the proof of Theorem~\ref{th:RDP-pspace-hard} and Theorem~\ref{th:CDP-pspace-hard} when $C$ is a yes-instance of {\sc Distinguish $(\eps,\delta)$-DP}, $C$ is also $(\rho)$-CDP. Therefore, $(C,\rho, \eta)$ is a yes-instance of {\sc Gap-Concentrated-Truncated-DP}. 

In case of $C$ is a no-instance of {\sc Distinguish $(\eps,\delta)$-DP} we need to show that then $(C,\rho, \omega, \eta)$ is a no-instance of {\sc Gap-Concentrated-DP}.
By way of contraposition, suppose that for all neighboring inputs $x,x'$, and for all $1 < \alpha < \omega$ holds $D_{\alpha }(C(x)\|C(x')) \leq \rho\alpha + \frac{1}{2^\eta}$. 
Then, for $\rho'=\rho+\frac{1}{2^\eta}$, $C$ is $(\alpha,\rho'\alpha)$-RDP for all $\alpha > 1$, as 
$$D_{\alpha }(C(x)\|C(x')) \leq \rho\alpha + \frac{1}{2^\eta} \leq (\rho+\frac{1}{\alpha 2^\eta})\alpha \leq (\rho+\frac{1}{2^\eta})\alpha.$$
Therefore, $C$ is $(\rho',\omega)$-TCDP.
Hence by Theorem~\ref{th:tcdp-implies-approx-dp}, $C$ is $(\rho' + 2 \sqrt{(\rho' \log{(1/\delta)}},\delta)$-DP. But by our choice of the parameters, $\rho' + 2 \sqrt{(\rho' \log{(1/\delta)}} = \rho \alpha + 2 \sqrt{(\rho + \frac{1}{2^\eta})\log{(1/\delta)}} + \frac{1}{2^{\eta}} < \eps$, hence $C$ is $(\eps,\delta)$-DP. This means that $P$ is not a no-instance of {\sc Distinguish $(\eps,\delta)$-DP}.

We showed that the reduction is correct. As the parameters $\rho,$ $\omega,$ and $\eta$ depend only on the constants $eps, \delta$, overall we get a linear-time deterministic algorithm for such reduction. Hence, {\sc Gap-Truncated-Concentrated-DP} is $\PSPACE$-hard.
\end{proof}

\end{document}